\pgfplotsset{compat=1.17}
\newcommand{\bigO}{\mathcal{O}}
\newcommand{\sign}{{\rm sign}}
\DeclareMathOperator*{\argmax}{arg\,max}
\DeclareMathOperator{\ecc}{ecc}
\theoremstyle{definition}
\newtheorem{definition}{Definition}[section]
\theoremstyle{plain}
\newtheorem{theorem}{Theorem}[section]
\newtheorem{lemma}[theorem]{Lemma}
\newtheorem{proposition}[theorem]{Proposition}
\title{Computable Bounds and Monte Carlo Estimates of the Expected Edit Distance\footnote{ A preliminary version appeared in Nieves R. Brisaboa and Simon
     J. Puglisi, editors, String Processing and Information Retrieval,
    pages 91–106, Cham, 2019. Springer International Publishing.}}
\author{Gianfranco Bilardi\\
\normalsize{Department of Information Engineering, University of Padova, Italy}\\
\small{\texttt{bilardi@dei.unipd.it}}
\and
Michele Schimd\\
\normalsize{Department of Information Engineering, University of Padova, Italy}\\
\small{\texttt{schimdmi@dei.unipd.it}}}
\begin{document}
\maketitle

\begin{abstract}
The edit distance is a metric of dissimilarity between strings, widely
applied in computational biology, speech recognition, and machine
learning. Let $e_k(n)$ denote the average edit distance between
random, independent strings of $n$ characters from an alphabet of size
$k$.  For $k \geq 2$, it is an open problem how to efficiently
compute the exact value of $\alpha_{k}(n) = e_k(n)/n$ as well as of
$\alpha_{k} = \lim_{n \to \infty} \alpha_{k}(n)$, a limit known to
exist.

This paper shows that $\alpha_k(n)-Q(n) \leq \alpha_k \leq
\alpha_k(n)$, for a specific $Q(n)=\Theta(\sqrt{\log n / n})$, a
result which implies that $\alpha_k$ is computable. The exact
computation of $\alpha_k(n)$ is explored, leading to an algorithm
running in time $T=\bigO(n^2k\min(3^n,k^n))$, a complexity that
makes it of limited practical use.

An analysis of Monte Carlo estimates is proposed, based on McDiarmid's
inequality, showing how $\alpha_k(n)$ can be evaluated with good
accuracy, high confidence level, and reasonable computation time, for
values of $n$ say up to a quarter million. Correspondingly, 99.9\%
confidence intervals of width approximately $10^{-2}$ are obtained for
$\alpha_k$.

Combinatorial arguments on edit scripts are exploited to analytically
characterize an efficiently computable lower bound $\beta_k^*$ to
$\alpha_k$, such that $ \lim_{k \to \infty} \beta_k^*=1$. In general,
$\beta_k^* \leq \alpha_k \leq 1-1/k$; for $k$ greater than a few
dozens, computing $\beta_k^*$ is much faster than generating good
statistical estimates with confidence intervals of width
$1-1/k-\beta_k^*$.

The techniques developed in the paper yield improvements on most
previously published numerical values as well as results for alphabet
sizes and string lengths not reported before.
\end{abstract}



\section{Introduction}
Measuring dissimilarity between strings is a fundamental problem in
computer science, with applications in computational biology, speech
recognition, machine learning, and other fields. One commonly used
metric is the \emph{edit distance} (or \emph{Levenshtein distance}),
defined as the minimum number of substitutions, deletions, and
insertions necessary to transform one string into the other.

It is natural to ask what is the expected distance between two randomly
generated strings, as the string size grows; knowledge of the
asymptotic behavior has proved useful in computational biology
(\cite{GMR16}) and in nearest neighbor search
(\cite{Rub18}), to mention a few examples. 

In computational biology, the question often arises whether two
strings (\emph{e.g.}, two DNA reads) are noisy copies of the same
source or of non-overlapping sources. In several cases of interest,
the source is modeled as a sequence of independent and identically
distributed symbols (see, \emph{e.g.}, \cite{GMR16},
\cite{CS14}, with reference to DNA) and the noise is modeled with
substitutions, insertions, and deletions (a good approximation for
technologies like PacBio and MinION
\cite{WCW+17}). Then, the statistical inference may be based on a
comparison of the distance between the observed strings with either
the expected distance between a string and a noisy copy of itself, or
the expected distance between two random strings.

Even for the case of uniform and independent strings, the study of the
expected edit distance appears to be challenging and little work has
been reported on the problem.  In contrast, the closely related
problem of computing the expected length of the \emph{longest common
subsequence} has been extensively studied, since the seminal work by
\cite{CS75}.

Using Fekete's lemma, it can be shown that both metrics tend to grow
linearly with the string size $n$ (\cite{Ste97}). Specifically,
let $e_k(n)$ denote the expected edit distance between two random,
independent strings of length $n$ on a $k$-ary alphabet; then
$\alpha_k(n)=e_k(n)/n$ approaches (from above) a limit $\alpha_k \in
[0,1]$.  Similarly, let $\ell_k(n)$ denote the expected length of the
longest common subsequence; then $\gamma_k(n)=\ell_k(n)/n$ approaches
(from below) a limit $\gamma_k \in [0,1]$. The $\gamma_k$'s are known
as the Chv\'atal-Sankoff constants.  The efficient computation of
the exact values of $\alpha_k$ and $\gamma_k$ is an open problem. This
paper establishes the computability of $\alpha_k$, for any $k$,
and proposes methods for estimating and bounding $\alpha_k$,
also reporting numerical results for various alphabet sizes $k$.

From the perspective of computational complexity, we remark that, for the 
problem of computing $\alpha_k(n)$, given input $n$ (for a fixed $k$), only algorithms that run in doubly exponential time and use exponential space are currently known, including those presented in this paper. Observe that the input size is $\log_2 n$, the number of bits needed to specify the problem input $n$. Similar statements hold for the computation of $\gamma_k(n)$. Therefore, at the state of the art, we can
place these problems in the complexity class EXPSPACE $\subseteq$ 2-EXPTIME, but not in EXPTIME and, a fortiori, not in PSPACE $\subseteq$ EXPTIME\begin{footnote}{Technically, these are traditionally defined as classes of decision problems, hence our statements strictly apply to suitable decision versions of computing the constants of interest.}\end{footnote}. Analogous considerations can be made for the problem of computing the $\nu$ most significant bits of $\alpha_k$ (or $\gamma_k$). Here, the input size is $\log_2 \nu$ and the time becomes triple exponential in the input size.
Whether these problems inherently exhibit high complexity or they can be solved efficiently by exploiting a not yet uncovered deeper structure remains to be seen.

\paragraph*{Related work}
There is limited literature directly pursuing bounds and estimates for
$\alpha_k$.  It is also interesting to review results on $\gamma_k$:
on the one hand, bounds to $\gamma_k$ give bounds to $\alpha_k$; on
the other hand, techniques for analyzing $\gamma_k$ can be adapted for
analyzing $\alpha_k$.

The only published estimates of $\alpha_k$ can be found in
\cite{GMR16} which gives $\alpha_4 \approx 0.518$ for the quaternary
alphabet and $\alpha_2 \approx 0.29$ for the binary alphabet.
Estimates of $\gamma_k$ are given by \cite{Bun01}, in
particular $\gamma_2 \approx 0.8126$ and $\gamma_4 \approx 0.6544$. A
similar value is reported by \cite{Dan94} which gives
$\gamma_2 \approx 0.8124$. Estimates of $\gamma_k$ by sampling are given by
\cite{KK13}; their conjecture that $\gamma_2>0.82$ appears
to be at odds with the estimate in \cite{Bun01}. In \cite{BC22}, the conjecture $\gamma_2 \approx 0.8122$ is proposed. They also derive a closed form for the limit constant when only one
string is random and the other is a \emph{periodic} string containing all symbols of $\Sigma_k$. 

The best published analytical lower bounds to $\alpha_k$ are $\alpha_4
\geq 0.3383$ for a quaternary alphabet and $\alpha_2 \geq 0.1578$ for
a binary alphabet \cite{GMR16}. To the best of our knowledge, no
systematic study of upper bounds to $\alpha_k$ has been published.
The best known analytical lower and upper bounds to $\gamma_2$ are
given by \cite{Lue09}, who obtained $0.7881 \leq \gamma_2 \leq
0.8263$. For larger alphabets, the best results are given by
\cite{Dan94}, including $0.5455 \leq \gamma_4 \leq 0.7082$. From known
relations between the edit distance and the length of the longest
common subsequence, it follows that $1-\gamma_k \leq \alpha_k \leq
2(1-\gamma_k)$. Thus, upper and lower bounds to $\alpha_k$ can be
respectively obtained from lower and upper bounds to $\gamma_k$.  From
$\gamma_2 \leq 0.8263$ of \cite{Lue09}, we obtain $\alpha_2 \geq 0.1737$,
which is tighter than the bound given in \cite{GMR16}. Instead
$\gamma_4 \leq 0.7082$ of \cite{Dan94} yields $\alpha_4 \geq 0.2918$,
which is weaker than the bound $\alpha_4 \geq 0.3383$ \cite{GMR16}.
From the weaker relation $(1-\gamma_2)/2 \leq \alpha_2$, \cite{Rub18}
obtained the looser bound $\alpha_2 \geq 0.0869$.  In this paper, we
derive improved bounds, for both $\alpha_2$ and $\alpha_4$, as well as
bounds on $\alpha_k$, for values of $k$ not fully addressed by earlier
literature. Some of our techniques resemble those used in \cite{BGNS99} for estimating $\gamma_k$.
Table~\ref{tab:alpha:lb:compare} shows lower bounds to $\alpha_k$ for
various values of $k$ based on this work and on that of previous
authors.  Lower bounds from \cite{Dan94} and \cite{Lue09} are obtained
from upper bounds to $\gamma_k$, which we have translated into
$\alpha_k \geq 1 - \gamma_k$. Dan\v{c}\'ik reported values of the
bound only for $k \leq 15$.  Lueker reported only the numerical upper bound
to $\gamma_2$; his computational approach is interesting and
sophisticated, but its time and space are exponential with
$k$. Experimenting with (a minor adaptation of) the software
provided by the author, we have not been able to compute $\gamma_3$
within reasonable time. We have obtained the values reported in the
Ganguly \emph{et al.}  column by numerically solving their equations
(in \cite{GMR16}, only the values for $k=2$ and $k=4$ were
reported). The equations underlying the results in the rightmost
column of Table~\ref{tab:alpha:lb:compare} are developed in
Section~\ref{sec:lowerbound}, together with a rigorous analysis of
their numerical solution.

To assess the tightness of bounds to $\gamma_k$, several authors have
investigated the \emph{rate of convergence} of $\gamma_k(n)$ to
$\gamma_k$. The bound $0 \leq\gamma_k- \gamma_k(n)\leq
\bigO(\sqrt{\log{n}/n})$ has been obtained by \cite{Ale94}
and, with a smaller constant, by
\cite{LMT12}. \cite{Lue09} introduced a sequence of upper
bounds $\gamma_k^h$ converging to $\gamma_k$ and satisfying $0 \leq
\gamma_k^h - \gamma_k \leq \bigO((\log{h}/h)^{1/3})$, where the time
complexity and the space complexity of computing
$\gamma_k^h$ increase exponentially with $h$. Observing that $h$ is in turn exponential in the number $\nu$ of desired bits for $\gamma_k$ and that $\nu$ is exponential in the input size $\lceil \log_2 \nu\rceil$, we see that computation time is a triple exponential.  No study of the
rate of convergence of $\alpha_k(n)$ to $\alpha_k$ has been
published. In this paper, we show that $0 \leq \alpha_k(n)-\alpha_k
\leq \bigO(\sqrt{\log{n}/n})$, exploiting a framework developed in
\cite{LMT12}.

Recently, \cite{Tis22} has established that
$\gamma_2$ is an algebraic number, introducing novel
ideas, which may open new perspectives on the analysis of $\gamma_k$ and $\alpha_k$, for any $k$.

\begin{table*}[t]
    \centering
    \caption{Comparison of lower bounds to $\alpha_k$ obtained in this 
    paper
    and in previous work.
    Best known bounds are highlighted in bold face.}
    \label{tab:alpha:lb:compare}
    \vspace*{3mm}
    \begin{tabular}{ccccc}
    \hline
    $k$ & Dan\v{c}\'ik & Lueker & Ganguly \emph{et al} & This work\\
    \hline
    $2$  & $0.162377$ & $\mathbf{0.17372}$ & $0.157761$ & $0.170552$ \\
    $3$  & $0.234197$ & - & $0.265028$ & $\mathbf{0.283660}$ \\
    $4$  & $0.291764$ & - & $0.338322$ & $\mathbf{0.359783}$ \\
    $5$  & $0.335572$ & - & $0.392040$ & $\mathbf{0.415173}$ \\
    $6$  & $0.370684$ & - & $0.433508$ & $\mathbf{0.457766}$ \\
    $7$  & $0.399816$ & - & $0.466732$ & $\mathbf{0.491836}$ \\
    $8$  & $0.424593$ & - & $0.494136$ & $\mathbf{0.519901}$ \\
    $16$ & - & - & $0.616273$ & $\mathbf{0.644758}$ \\
    $32$ & - & - & $0.708537$ & $\mathbf{0.738677}$ \\
    
    \hline
    \end{tabular}
\end{table*}

\paragraph*{Paper contributions and organization}
The notation and definitions used throughout this paper are given in Section \ref{sec:preliminaries}.  In
Section \ref{sec:computability}, an upper bound $\alpha_k(n)-\alpha_k
\leq Q(n)$ is derived, for each $k \geq 2$, where $Q(n)=
\Theta(\sqrt{\log{n}/n})$ is a precisely specified function
(independent of $k$). This implies $\alpha_k \in
[\alpha_k(n)-Q(n),\alpha_k(n)]$, where the interval can be made
arbitrarily small by choosing a suitably large $n$. One corollary is
the computability of the real number $\alpha_k$, for each $k \geq 2$.
Unfortunately, the algorithm underlying the computability proof is
of little practical use, since the only known method to exactly
compute $\alpha_k(n)$ is by direct application of its definition,
resulting in $\bigO(n^2k^{2n})$ time. Even after some improvement
presented in Section \ref{sec:upperbound}, the upper bound
$\alpha_{k} \leq \alpha_k(n)$ is practically computable only for
small values of $k$ and $n$. Moreover, for the feasible values of
$n$, $Q(n)$ is too large for the lower bound $\alpha_k \geq
\alpha_k(n)-Q(n)$ to be useful. These considerations motivate the
exploration of alternate approaches.

In Section \ref{sec:montecarlo}, an analysis, based on McDiarmid's
inequality, is developed for Monte Carlo estimates of $\alpha_k(n)$
obtained from the edit distance of a sample of $N$ pairs of
strings. The analysis yields the radius $\Delta$ of confidence
intervals for $\alpha_k(n)$, in terms of $n$, $N$, and the desired
confidence level $\lambda$. The (sequential) time to obtain an
estimate can be approximated as $T\approx \tau_{ed}
\frac{n}{\Delta^2}\ln^2 \left(\frac{1}{1-\lambda}\right)$, where
$\tau_{ed}$ is of the order of $5ns$, on a typical state of the art
processor core. Rather large values of $n$ can then be dealt with.  As
an indication, a $\lambda=0.999$ confidence interval of radius $\Delta=
0.67~10^{-3}$ is obtained for $\alpha_k(2^{15})$ in about $43$ minutes.
The corresponding confidence interval for $\alpha_k$ has radius
$\Delta+\frac{Q(2^{15})}{2}= 0.00068 + 0.01320 = 0.01388$.

In Section \ref{sec:upperbound}, upper bounds to $\alpha_k$ by exact
computation of $\alpha_k(n)$ for small values of $n$ are obtained, by
introducing an $\bigO(n^2(3k)^{n})$ time algorithm that, while still
exponential in $n$, is (asymptotically and practically) faster than
the straightforward, $\bigO(n^2(k^2)^{n})$ time, algorithm. When $k$
is of the order of a few dozens, only very small values of $n$ are
feasible and $\alpha_k(n)$ does not differ appreciably from the
quantity $1-\frac{1}{k}$, which satisfies $\alpha_{k}(n) \leq
1-\frac{1}{k}$, as it can be easily shown by allowing only
substitutions (cf. Hamming distance).

In Section \ref{sec:lowerbound}, a lower bound $\alpha_k \geq
\beta_k^{*}$ is established, for each $k \geq 2$.  A counting argument
provides a lower bound to the number of string pairs with distance at
least $\beta n$; an asymptotic analysis provides conditions on $\beta$
under which the contribution to $\alpha_k(n)$ of the remaining string
pairs vanishes with $n$. A careful study leads to a numerical
algorithm to compute $\beta_k^{*}$, the supremum of the $\beta$'s
satisfying such conditions, with any desired accuracy, $\epsilon$.
Since, as shown in Section \ref{sec:lowerbound}, $\lim_{k \rightarrow
  \infty} \beta_k^{*} = 1$, the interval
$[\beta_k^{*},1-\frac{1}{k}]$, which contains $\alpha_k$, has size
vanishing with increasing $k$.  For $k$ large enough, it becomes a
subset of a confidence interval obtained with comparable computational
effort.  As an example, $\beta_{2^{40}}^{*} \approx 0.999984 \leq
\alpha_{2^{40}} \leq 1-2^{-40} \approx 0.999999$, placing
$\alpha_{2^{40}}$ in an interval of size smaller than
$0.16~10^{-4}$. To achieve $Q(n) \leq 0.16~10^{-4}$ requires $n \geq
10^{11}$.  On a single core, computing the edit distance for just
one pair of strings of length $n=10^{11}$ would take time $T \approx
5~10^{-9}10^{22}s \approx 1.6~10^6~years$, whereas computing
$\beta_{2^{40}}^{*}$ took just 14 milliseconds, using a straightforward, non-optimized
implementation.

By applying the above methodologies, we numerically derive guaranteed
as well as statistical estimates for specific $\alpha_k$'s and
$\alpha_k(n)$'s. In particular, Table~\ref{tab:reults:summary}
summarizes our numerical results for various alphabet sizes.  For each
$k$, the table reports an interval that provably contains $\alpha_k$
and a (narrower) interval that contains $\alpha_k$ with confidence
$0.999$.  For the guaranteed interval, more details are provided in
Table~\ref{tab:lower:bounds:k}, Section~\ref{sec:lowerbound} (left
endpoint) and in Table~\ref{tab:upper:bounds},
Section~\ref{sec:upperbound} (right endpoint).  For the confidence
interval, see also Table~\ref{tab:hat:alpha:k},
Section~\ref{sec:montecarlo}.

In Section~\ref{sec:large:k:conjecture}, we wonder about the
asymptotic behavior of $\alpha_k$, with respect to $k$. We propose
and motivate the conjecture that $\lim_{k \rightarrow \infty}
(1-\alpha_k)k= c_{\alpha}$ for some constant $c_{\alpha} \geq 1$.
Numerical evidence indicates that perhaps $3 \leq c_{\alpha} \leq
4$.
  
Finally, Section \ref{sec:conclusion} presents conclusions and further
directions of investigation.

\begin{table*}[t]
    \centering
    \caption{Summary of numerical results obtained applying the methodologies
    presented in this paper. For various sizes $k$, the table shows
    an interval guaranteed to contain
    $\alpha_k$ and an interval containing
    $\alpha_k$ with confidence $0.999$.}
    \label{tab:reults:summary}
    \vspace*{3mm}
    \begin{tabular}{ccc}
    \hline
    $k$  & Guaranteed interval & Confidence interval\\
    \hline
    \\[-3.5mm]
    $2$  & $[0.17055, 0.36932]$ & $[0.26108, 0.28884]$\\
    $3$  & $[0.28366, 0.53426]$ & $[0.40144, 0.42920]$\\
    $4$  & $[0.35978, 0.63182]$ & $[0.49031, 0.51807]$\\
    $5$  & $[0.41517, 0.70197]$ & $[0.55289, 0.58066]$\\
    $6$  & $[0.45776, 0.75149]$ & $[0.60002, 0.62778]$\\
    $7$  & $[0.49183, 0.79031]$ & $[0.63701, 0.66477]$\\
    $8$  & $[0.51990, 0.81166]$ & $[0.66694, 0.69470]$\\
    $16$ & $[0.64475, 0.89554]$ & $[0.79198, 0.81974]$\\
    $32$ & $[0.73867, 0.96588]$ & $[0.87230, 0.90007]$\\
    \hline
    \end{tabular}
\end{table*}

This paper expands over the conference version \cite{SB19}; additions
include: (i) an analysis of the rate of convergence of $\alpha_k(n)$
to $\alpha_k$; (ii) a novel confidence-interval analysis for the
Monte Carlo estimate of $\alpha_k(n)$; (iii) a rigorous
development and a proof of correctness for an algorithm that can
numerically compute the lower bound $\beta_k^*$ to $\alpha_k$, with any
desired accuracy; and (iv) a conjecture on the behavior of $\alpha_k$ for large $k$.

\section{Preliminaries}
\label{sec:preliminaries}
In this section, we introduce the notation adopted throughout the paper
and present some preliminary definitions and results used in various
parts of the work.
\subsection{Notation and definitions}\label{sec:notationsAndDef}
Let $\Sigma_k$ be a finite alphabet of size $k\geq 2$ and let $n\geq
1$ be an integer; a \emph{string} $x$ is a sequence of symbols $x[1]
x[2] \ldots x[n]$ where $x[i] \in \Sigma_k$; $n$ is called the
\emph{length} (or \emph{size}) of $x$, also denoted by $|x|$. 
$\Sigma_k^n$ is the set of all strings of length $n$.

\paragraph*{Edit distance}
We consider the following \emph{edit operations} on a string $x$: the
\emph{match} of $x[i]$, the \emph{substitution} of
$x[i]$ with a different symbol $b\in \Sigma_k\setminus\{x[i]\}$, the
\emph{deletion} of $x[i]$, and the \emph{insertion} of $b \in\Sigma_k$
in position $j=0,\ldots,n$ (insertion in $j$ means $b$ goes after
$x[j]$ or at the beginning if $j=0$); an \emph{edit script} is a
sequence of edit operations. With each type of edit operation is
associated a cost; throughout this paper, matches have cost $0$ and
other operations have cost $1$.  The cost of a script is the sum of
the costs of its operations.  The \emph{edit distance} between $x$ and
$y$, $d_E(x,y)$, is the minimum cost of any script transforming $x$
into $y$.  It is easy to see that $||x|-|y||\leq d_E(x,y)\leq
\max(|x|,|y|)$.

\paragraph*{Simple scripts}
We can view a string as a sequence of \emph{cells}, each
containing a symbol from $\Sigma_k$, and consider edit operations as
acting on such cells: a deletion destroys a cell, a substitution
changes the content of a cell, and an insertion creates a new cell
with some content in it (matches leave cells untouched). We will say that a script is \emph{simple}
if it performs at most one edit operation on each cell. It is easy
to see that, if a script transforming $x$ into $y$ is not simple,
then there is a script with fewer operations which achieves the same
transformation. In fact, if a cell is eventually deleted, any
operation performed on it prior to its deletion can be safely
removed from the script; if a cell is inserted, any subsequent
substitution can be removed, appropriately selecting the content of
the initial insertion; and multiple substitutions on a cell that is
retained can be either replaced by just one appropriate substitution
or removed altogether. Thus, a script of minimum cost is necessarily
simple, so that, to determine $d_E(x,y)$, we can restrict our
attention to simple scripts.

\paragraph*{Scripts and alignments}
Given an edit script transforming $x$ into $y$, consider those
cells of $x$ that are retained in $y$, possibly with a different
content. Since the relative order of two such cells is the same in
$x$ and in $y$, the positions occupied by such cells in $x$ and $y$
form an alignment, in the sense defined next.

An \emph{alignment} $(\mathcal{I},\mathcal{J})$ between $x$ and $y$ is
a pair of increasing integer sequences of the same length $s$
\begin{align*}
\mathcal{I} &= (i_1, \ldots, i_s) \qquad
    1 \leq i_1 < i_2 < \ldots < i_s \leq |x|,\\
\mathcal{J} &= (j_1,\ldots,j_s) \qquad
    1 \leq j_1 < j_2 < \ldots < j_s \leq |y|.
\end{align*}
The positions $i_\ell$ in $x$ and $j_\ell$ in $y$ are said to be to be
aligned in $(\mathcal{I},\mathcal{J})$. To each script
$\mathcal{S}$, there corresponds a unique alignment
$a(\mathcal{S})=(\mathcal{I},\mathcal{J})$, where $s$ equals the
number of cells of $x$ that are retained in $y$ and, for every
$\ell=1,2, \ldots, s$, the cell in position $i_{\ell}$ of $x$ has
moved to position $j_{\ell}$ of $y$. If $\mathcal{S}$ is simple,
then: (i) for aligned positions $i_\ell$ and $j_\ell$, $y[j_\ell]$
is substituted with or matched to $x[i_\ell]$ depending on whether
$y[j_\ell] \neq x[i_\ell]$ or not; (ii) for positions $i \notin
\mathcal{I}$, $x[i]$ is deleted, and (iii) for positions $j \notin
\mathcal{J}$, $y[j]$ is inserted.

Next, we prove a simple lemma, which will be useful both in
Section~\ref{sec:computability}, to cast edit distance within the
framework of~\cite{LMT12}, and in Section~\ref{sec:lowerbound}, to
develop a counting argument leading to a lower bound on $\alpha_k$.
\begin{lemma}
\label{lemma:simpleScript}
With the preceding notation, if $\mathcal{S}$ is a simple script to
transform $x$ into $y$, with $|x|=|y|=n$, and
$(\mathcal{I},\mathcal{J})=a(\mathcal{S})$ is the corresponding
alignment, its cost is
\begin{equation}\label{eq:simpleCost}
cost(x,y,\mathcal{S})=2(n-s)+\sum_{\ell=1}^{s}(x[i_\ell] \not= y[j_\ell]).
  \end{equation}
\end{lemma}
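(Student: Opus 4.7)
The plan is to decompose the cost of $\mathcal{S}$ according to operation type, using the fact that a simple script performs at most one operation on each cell, together with the correspondence between the alignment $a(\mathcal{S})$ and the set of retained cells.

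First I would recall that, by the definition of $a(\mathcal{S})$, the index set $\mathcal{I}=(i_1,\ldots,i_s)$ lists exactly the positions of cells in $x$ that survive into $y$, and $\mathcal{J}=(j_1,\ldots,j_s)$ their image positions. Hence the positions in $\{1,\ldots,n\}\setminus\mathcal{I}$ are precisely those of $x$-cells that are not retained; since $\mathcal{S}$ is simple, the only operation that can remove a cell without retaining it is a deletion, so $\mathcal{S}$ contains exactly $n-|\mathcal{I}|=n-s$ deletions, each contributing cost $1$. Symmetrically, the positions in $\{1,\ldots,n\}\setminus\mathcal{J}$ correspond to cells of $y$ that were not present in $x$, so they must have been created by insertions; this yields $n-s$ insertions, again each of cost $1$. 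This accounts for the $2(n-s)$ term.

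Next I would handle the $s$ retained cells. For each aligned pair $(i_\ell,j_\ell)$, simplicity forbids multiple operations on the same cell, so the only operation applied to it is either a match (if $x[i_\ell]=y[j_\ell]$, cost $0$) or a single substitution (if $x[i_\ell]\neq y[j_\ell]$, cost $1$). Writing this cost as the indicator $(x[i_\ell]\neq y[j_\ell])$ and summing over $\ell=1,\ldots,s$ gives the second term. Adding the two contributions yields (\ref{eq:simpleCost}).

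The main thing to be careful about is the bookkeeping of which cells correspond to which operations: one must verify that the three categories (deletions of non-retained $x$-cells, insertions producing non-retained $y$-cells, and match/substitution on aligned pairs) form a partition of the operations of $\mathcal{S}$, with no cell touched twice. This is exactly what the definition of simple script and the construction of $a(\mathcal{S})$ guarantee, so no further work is needed. Since $|x|=|y|=n$, the counts of non-retained positions in $x$ and in $y$ coincide, which is what produces the symmetric factor $2$ in $2(n-s)$; for strings of unequal length the analogous identity would read $(|x|-s)+(|y|-s)$.
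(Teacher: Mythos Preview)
Your proof is correct and follows exactly the same decomposition as the paper's own proof: count $(n-s)$ deletions, $(n-s)$ insertions, and $\sum_{\ell=1}^{s}(x[i_\ell]\neq y[j_\ell])$ substitutions. The paper's version is a one-line remark, whereas you spell out the bookkeeping (why the three categories partition the operations of a simple script), but the argument is identical.
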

\begin{proof}
The script performs $(n-s)$ deletions, $(n-s)$ insertions, and
$\sum_{\ell=1}^{s}(x[i_\ell] \not= y[j_\ell])$ substitutions.
\end{proof}
We may observe that (for given $x$ and $y$) simple scripts
corresponding to the same alignment differ only with respect to the
order in which the edit operations are applied. Such an order does not affect
the final result, since in a simple script different operations act on different cells.
Thus, the number of distinct simple scripts with the same alignment is
the factorial of their cost (\emph{i.e.}, of the number of edit
operations), given by Equation~(\ref{eq:simpleCost}).

\paragraph*{Random strings and the limit constant}
A \emph{random string} of length $n$, $X = X[1] X[2]
\ldots X[n]$, is a sequence of \emph{random symbols} $X[i]$ generated
according to some distribution over $\Sigma_k$. We will assume that
the $X[i]$'s are uniformly and independently sampled from $\Sigma_k$
or, equivalently, that $\Pr{[X=x]} = k^{-n}$ for every
$x\in\Sigma_k^n$. We define the \emph{eccentricity} $\ecc(x)$ of a
string $x$ as its expected distance from a random string
$Y\in\Sigma_k^n$:

\begin{align} \label{eq:eccentricity}
\ecc(x) = k^{-n}\sum_{y\in\Sigma_k^n}{d_E(x,y)}.
\end{align}
The expected edit distance between two random, independent
strings of $\Sigma_k^n$ is:
\begin{align} \label{eq:ed:eccentricty}
e_k(n)  &= k^{-2n}\sum_{x\in\Sigma_k^n}{\sum_{y\in\Sigma_k^n}{d_E(x,y)}} \nonumber \\
        &= k^{-n}\sum_{x\in\Sigma_k^n}{\ecc(x)}.
\end{align}
Let $\alpha_k(n)=e_k(n)/n$; it can be shown (Fekete's lemma from
ergodic theory; see, \emph{e.g.}, Lemma 1.2.1 in \cite{Ste97}) that
there exists a real number $\alpha_k \in [0,1]$, such that
\begin{align}
\label{eq:alpha:definition}
\lim_{n\rightarrow\infty}{\alpha_k(n)} = \alpha_k .
\end{align}
The main objective of this paper is to derive estimates and bounds to
$\alpha_k$.

\paragraph*{Rate of convergence to the limit constant}  In the
  outlined context, it is of interest to develop upper bounds, as
  functions of $n$, to the quantity
\begin{align*}
  q_k(n) = \alpha_k(n) - \alpha_k,
\end{align*}
which we will refer to as the \emph{rate of convergence}, following a
terminology widely used for analogous quantities in the context of the
longest common subsequence (\emph{e.g.}, \cite{Ale94, LMT12}).

\subsection{Computing the edit distance}
\label{sec:DPalgorithms}
The edit distance and the length of the longest common subsequence
(LCS) can be computed by a \emph{dynamic programming} algorithm.
Given two strings, $x$ of length $n$ and $y$ of length $m$, their edit
distance $d_E(x,y)$ is obtained as the entry $M_{n,m}$ of an
$(n+1)\times(m+1)$ matrix $\mathbf{M}$, computed according to the
following recurrence:
\begin{align} \label{eq:edit:distance:recurrence}
&M_{i,0}=i &\textrm{for} \ i=0,\ldots,n\nonumber \\
&M_{0,j}=j &\textrm{for} \ j=0,\ldots,m \\
&M_{i,j} = \min{( M_{i-1,j-1} + \xi_{i,j} , M_{i-1,j}+1 , M_{i,j-1}+1 )}  
    &\textrm{for} \ {i>0} \ \textrm{and} \ {j>0}\nonumber
\end{align}
where $\xi_{i,j}=0$ if $x[i] = y[j]$ and $\xi_{i,j}=1$
otherwise.\footnote{A similar algorithm computes the length of the
LCS. Recurrence (\ref{eq:edit:distance:recurrence}) becomes
$M_{i,0} = 0$, $M_{0,j} = 0$, and $M_{i,j} = \max{( M_{i-1,j-1} +
  (1-\xi_{i,j}) , M_{i-1,j} , M_{i,j-1} )}.$} This algorithm takes
$\bigO(nm)$ time and space.  An edit script transforming $x$ into $y$
can be obtained \emph{backtracking} on $\mathbf{M}$, along a
\emph{path} from cell $(n,m)$ to cell $(0,0)$. For both edit distance
and LCS, the approach by \cite{MP80}, exploiting
the method of the Four Russians, reduces the time to
$\bigO(\frac{n^2}{\log{n}})$, assuming $n\geq m$.  Although
asymptotically faster, the algorithm in \cite{MP80} is seldomly used.
Other approaches are
usually preferred, such as the one proposed by \cite{Mye99},
which reduces the $n^2$ bound by a factor proportional to the machine word size, implementing Recurrence (\ref{eq:edit:distance:recurrence})
via bit-wise operations.

The space complexity of the basic dynamic programming algorithm can be
reduced to $\bigO(\min(n, m))$.  Assuming, w.l.o.g., that $n \geq m$,
simply proceed row-wise storing only the last complete row. While this
approach is not directly amenable to constructing scripts by
backtracking, a more sophisticated divide and conquer version due to
\cite{Hir75} yields the edit distance and an edit script in
quadratic time and linear space. \cite{KR08}
applied the Four Russians method to Hirschberg's algorithm, improving
its running time by a logarithmic factor.

It is known that both the edit distance and the length of the LCS
cannot be computed in time $\bigO(n^{2-\epsilon})$, unless the
\emph{Strong Exponential Time Hypothesis (SETH)} is false (\cite{ABW15}, \cite{BI15}).

Approximate computation of the edit distance has been extensively
studied. \cite{Ukk85} presents a banded algorithm that 
computes an approximation within a factor
$\bigO(n^{1-\epsilon})$ in time
$\bigO(n^{1+\epsilon})$. Interestingly, this algorithm computes the
exact distance whenever such distance is $\bigO(n^\epsilon)$
(although, it may output the exact distance also for higher
values). \cite{LMS98} give an algorithm that
computes the exact distance in time $\bigO(n+d^2)$, where $d$ is the
distance itself. Thus, sub-quadratic time can be achieved when the
distance is sub-linear.  More recently, an
$(\log{n})^{\bigO(1/\epsilon)}$ approximation, computable in time
$\bigO(n^{1+\epsilon})$, was proposed by
\cite{AKO10}, and a constant approximation algorithm with running time
$\bigO(n^{1+5/7})$ was proposed by
\cite{CDG+18, CDG+20}.  The work by \cite{RS19}
gives a reduction from approximate length of the longest common
subsequence to approximate edit distance, proving that the algorithm
in \cite{CDG+18} can also be used to approximate the length of the
LCS.

In order to compute upper bounds to $\alpha_k$, we propose an
algorithm related to the approaches developed by \cite{COH17} and \cite{Lue09}.  In these works,
portions of the dynamic programming matrix are associated to the states of
a finite state machine. Our algorithm conceptually simulates all
possible executions of a machine similar to the one defined in
\cite{COH17}.

\section{Rate of convergence and computability of $\alpha_k$}
\label{sec:computability}
For each $n$, $\alpha_k(n)$ is a rational number, which can be
computed, according to Equation~(\ref{eq:ed:eccentricty}), by
exhaustively enumerating all pairs $(x,y)$ of strings in $\Sigma_k^n$
and accumulating the corresponding values $d_E(x,y)$, which can be
obtained with any algorithm for the exact edit distance.  On the other
hand, the limit constant $\alpha_k$ is known to exist as a real
number, whose rationality remains an open problem.  In this section,
we show that, for every $k$, this number is computable, according to
the following (standard) definition.
\begin{definition}[Computability of a real number]
\label{def:computability}
A real number $\zeta$ is \emph{computable} if there exists an
algorithm that, given as input a rational number $\epsilon > 0$,
outputs a rational number $\zeta_{\epsilon}$, such that
$|\zeta-\zeta_{\epsilon}| < \epsilon$.
\end{definition}
As shown by \cite{BBD21}, the subadditivity of a
sequence of rational numbers, while sufficient to guarantee the
existence of a limit (Fekete's Lemma, \cite{Spe14}), is not sufficient to guarantee its computability
which, if present, requires an additional argument.  For $\alpha_k$,
such an argument can be provided along the following steps:
\begin{itemize}
\item Prove that, for some computable function $b_k(n)$, we have
\begin{align}
\label{eq:alpha:convergenceUB}
\alpha_k(n) - \alpha_k \leq b_k(n) \qquad \forall n \geq 1.
\end{align}
\item Show that there is an algorithm which, given a rational number
  $\epsilon>0$, computes an integer $n_{\epsilon} \geq 1$ such that
\begin{align}
\label{eq:alpha:epsilonUBB}
b_k(n) < \epsilon \qquad \forall n \geq n_{\epsilon}.
\end{align}
\item Let $\alpha_{k,\epsilon}=\alpha_k(n_{\epsilon})$ and observe
  that
\begin{align}
  \label{eq:alpha:epsilon}
  0 < \alpha_{k,\epsilon}-\alpha_k < b_k(n_{\epsilon})<\epsilon,
\end{align}
thus complying with Definition~\ref{def:computability} (with
$\zeta=\alpha_k$ and $\zeta_{\epsilon}=\alpha_{k,\epsilon}$), since
$n_{\epsilon}$ is computable from $\epsilon$ and
$\alpha_{k,\epsilon}=\alpha_k(n_{\epsilon})$ is computable from
$n_{\epsilon}$.
\end{itemize}

To obtain a bound $b_k(n)$ such that
Equation~(\ref{eq:alpha:convergenceUB}) is satisfied, we show how the
edit distance problem can be cast within a framework developed in
\cite{LMT12}, to analyze the limit average behavior of certain
functions of random string pairs on a finite alphabet $\Sigma$.  These
functions are formally defined next.
\begin{definition}
\label{def:scoring:scheme}
Let $S:\Sigma \times \Sigma \rightarrow R_{0}^{+}$ be a symmetric
($S(b,a)=S(a,b)$), non-negative, real function and let $\delta$ be a
real number. The \emph{score} of a pair of strings $x,y\in\Sigma^n$,
with respect to a given alignment $(\mathcal{I}, \mathcal{J})$, is
defined as
\begin{align*}
S(x,y;\mathcal{I}, \mathcal{J}) = 
    \sum_{\ell=1}^{s}{S(x[i_\ell],y[j_\ell])} + \delta(n-s).
\end{align*}
The score of the string pair is the maximum score over all possible
alignments:
\begin{align*}
  S(x,y) = \max_{(\mathcal{I},\mathcal{J})} S(x,y;\mathcal{I}, \mathcal{J}).
\end{align*}
\end{definition}
Next, we express the edit distance in terms of a suitable score.
\begin{proposition}\label{prop:edit-score}
For $a,b \in \Sigma_k$ let
\begin{align*}
S(a,b) = \left\lbrace\begin{array}{ll}
0 & \textrm{if} \ a \neq b\\
1 & \textrm{otherwise}\\
\end{array} \right. 
\end{align*}
and let $\delta=-1$. Then, for $x,y \in \Sigma_k^n$,
\begin{align}\label{eq:edit-cost:score}
d_E(x,y)=n - S(x,y).
\end{align}
\end{proposition}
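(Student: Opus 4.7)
The plan is to match the formula in Definition~\ref{def:scoring:scheme} term by term with the cost formula of Lemma~\ref{lemma:simpleScript}, and then use the fact that the edit distance equals the minimum cost over simple scripts.

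First I would fix an arbitrary alignment $(\mathcal{I},\mathcal{J})$ of length $s$ and substitute the specific $S$ and $\delta$ from the statement into Definition~\ref{def:scoring:scheme}. Writing $[\cdot]$ for the Iverson bracket, $S(x[i_\ell],y[j_\ell])=[x[i_\ell]=y[j_\ell]]=1-[x[i_\ell]\neq y[j_\ell]]$, so
\begin{align*}
n - S(x,y;\mathcal{I},\mathcal{J})
&= n - \sum_{\ell=1}^{s}[x[i_\ell]=y[j_\ell]] + (n-s) \\
&= 2(n-s) + \sum_{\ell=1}^{s}[x[i_\ell]\neq y[j_\ell]].
\end{align*}
The right-hand side is exactly the expression of Equation~(\ref{eq:simpleCost}) in Lemma~\ref{lemma:simpleScript}, so $n - S(x,y;\mathcal{I},\mathcal{J})$ equals the cost of any simple script $\mathcal{S}$ with $a(\mathcal{S})=(\mathcal{I},\mathcal{J})$.

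Next I would invoke the correspondence between alignments and simple scripts discussed in Section~\ref{sec:preliminaries}: every simple script $\mathcal{S}$ yields an alignment $a(\mathcal{S})$, and conversely every alignment is realized by at least one simple script (deleting the unaligned positions of $x$, inserting the unaligned positions of $y$, and substituting or matching the aligned pairs). Consequently, taking the maximum over alignments on one side corresponds to taking the minimum of the cost over simple scripts on the other:
\begin{align*}
n - S(x,y)
= \min_{(\mathcal{I},\mathcal{J})} \bigl(n - S(x,y;\mathcal{I},\mathcal{J})\bigr)
= \min_{\mathcal{S}\text{ simple}} cost(x,y,\mathcal{S}).
\end{align*}
Finally, since a minimum-cost script can always be chosen to be simple (as recalled in the preliminaries), the last minimum equals $d_E(x,y)$, yielding Equation~(\ref{eq:edit-cost:score}).

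There is no real obstacle here: the main care is arithmetic bookkeeping (turning matches into $(1-[\neq])$ and absorbing the $\delta=-1$ term into the $2(n-s)$ that accounts for insertions and deletions), plus an explicit reminder that alignments and simple scripts are in the many-to-one correspondence described in Section~\ref{sec:preliminaries}, so that optimizing over alignments is equivalent to optimizing over simple scripts.
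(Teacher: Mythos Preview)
Your proof is correct and follows essentially the same route as the paper: compute $n - S(x,y;\mathcal{I},\mathcal{J})$ for a fixed alignment, recognize it as the cost formula of Lemma~\ref{lemma:simpleScript}, and then pass from max over alignments to min over simple scripts using the fact that optimal scripts are simple. Your version is slightly more explicit about the alignment--simple-script correspondence, but there is no substantive difference.
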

\begin{proof}
Considering a given alignment $(\mathcal{I},\mathcal{J})$, we can
write
\begin{align*}
  n-S(x,y;\mathcal{I},\mathcal{J})
  &=n-\sum_{\ell=1}^{s}{S(x[i_\ell],y[j_\ell])} -(-1)(n-s)\\
  &=2(n-s)+\left(s-\sum_{\ell=1}^{s}{S(x[i_\ell],y[j_\ell])}\right)\\
  &=2(n-s)+\sum_{\ell=1}^{s}(x[i_\ell] \not= y[j_\ell]).
\end{align*}
Recalling Lemma~\ref{lemma:simpleScript}, the quantity in the last row
can be recognized as the cost of the simple scripts $\mathcal{S}$
transforming $x$ into $y$, with alignment
$a(\mathcal{S})=(\mathcal{I},\mathcal{J})$. Further recalling that
optimal scripts are simple, we conclude that maximizing
the score with respect to the alignment minimizes the cost of the edit
script, whence the claimed Equality~(\ref{eq:edit-cost:score}).
\end{proof}

The preceding proposition enables the application of the following far
reaching result to the analysis of the average edit distance.
\begin{theorem}[\cite{LMT12}]\label{th:LMT12}
Let $X$ and $Y$ be random strings in $\Sigma^n$, whose symbols are all
mutually independent and equally distributed. Let $l(n) =
\frac{1}{n}E_{X,Y}S(X,Y)$, where $E$ denotes the expectation operator,
and let $l=\lim_{n \to \infty}l(n)$. Then
\begin{align}
\label{eq:rate:convergence:Q}
 l-l(n) \leq A \sqrt{\frac{2}{n-1}\left(\frac{n+1}{n-1}
    + \ln(n-1)\right)} + \frac{F}{n-1} := Q_{A,F}(n).
\end{align}
where $A=\max_{a,b{\in}\Sigma}S(a,b)$ and $F=
\max_{a,b,c{\in}\Sigma}|S(a,b)-S(a,c)|$.
\end{theorem}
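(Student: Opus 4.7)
The plan is to combine a concentration-of-measure argument for $S(X, Y)$ around its mean with a comparison inequality that controls how much the per-symbol average score can grow between length $n$ and length $N \gg n$. Since the statement is attributed to \cite{LMT12}, I would follow the strategy used there, adapting it to the general scoring scheme of Definition~\ref{def:scoring:scheme}.

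First, concatenating an optimal alignment on $(x_1,y_1)$ with one on $(x_2,y_2)$ yields $S(x_1 x_2, y_1 y_2) \geq S(x_1, y_1) + S(x_2, y_2)$; taking expectations gives superadditivity of $E[S_n]$, so by Fekete's lemma $l(n) \to l = \sup_n l(n)$ and in particular $l - l(n) \geq 0$. Next I would establish a bounded-differences property: fixing an optimal alignment on $(x,y)$ and altering a single symbol, say $x[i]$, only the (at most one) term of the score sum involving position $i$ can change, and by at most $F$ in absolute value; hence $S$ has differences bounded by $F$ in each of its $2n$ i.i.d.\ coordinates, and McDiarmid's inequality yields
\begin{equation*}
\Pr\bigl(|S(X_n, Y_n) - n\, l(n)| \geq t\bigr) \leq 2 \exp\!\bigl(-t^{2}/(4 n F^{2})\bigr).
\end{equation*}

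The technical heart is the comparison step. For $N = q(n-1) + r$ with $q$ large, I would consider the optimal alignment on $(X_N, Y_N)$ and break it into $q$ sub-alignments by fixing cut-points in $X$ at prescribed offsets and letting the alignment dictate the corresponding offsets in $Y$. Each sub-alignment lies inside a block of length at most $n$, with expected score at most $(n-1)\, l(n-1)$ up to an $\bigO(F)$ boundary term per cut. Because the cut-points chosen in $Y$ depend on the data, one must enumerate the $\bigO(N^{q-1})$ deterministic cut configurations and combine the single-configuration concentration bound above with a union bound. This turns the $\sqrt{n}$-type deviation for one configuration into a $\sqrt{\ln(n-1)/(n-1)}$-type deviation for the average, multiplied by $A$, while the leftover per-cut boundary contributions accumulate into the additive $F/(n-1)$ term. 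Passing $N \to \infty$ produces $l \leq l(n) + Q_{A,F}(n)$, which combined with $l \geq l(n)$ gives the claim.

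The main obstacle is this decomposition step: getting essentially independent block scores from a data-dependent optimal alignment requires careful enumeration of admissible cut patterns and a union-bound calibration that balances the block length, the number of permitted cut configurations, and the Azuma deviation threshold, so that the resulting bound matches the explicit form $Q_{A,F}(n)$ — with the constants $A$ and $F$ and the precise dependence on $\ln(n-1)$ and $n-1$ dictated by the statement. This balancing, rather than either the concatenation inequality or the bounded-differences property, is the delicate calculation in \cite{LMT12}.
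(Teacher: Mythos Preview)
The paper does not prove this theorem at all: it is stated as a cited result from \cite{LMT12} and used as a black box to derive Theorem~\ref{thm:computability:alpha}. There is therefore no ``paper's own proof'' against which to compare your sketch. Your outline is a reasonable high-level summary of the Alexander-type argument that underlies such rate-of-convergence results, and it correctly identifies superadditivity, a bounded-differences concentration inequality, and a block-decomposition with a union bound over cut configurations as the main ingredients; but since the present paper simply imports the theorem, any assessment of whether your balancing reproduces the exact form $Q_{A,F}(n)$ would have to be made against \cite{LMT12} itself, not against this paper.
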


The preceding theorem makes no assumption on the probability
distribution of the symbols. Moreover, $Q_{A,F}(n)$ is independent of
such distribution, although the quantities $l(n)$ and $l$ are not.
Here, we assume the uniform distribution, upon which we have based the
definition of $\alpha_k(n)$.  As a corollary of Theorem~\ref{th:LMT12}
and Proposition~\ref{prop:edit-score}, we obtain the computability of
$\alpha_k$.
\begin{theorem}
\label{thm:computability:alpha}
For any integer $k>0$, the limit constant $\alpha_k$ defined in
Equation~(\ref{eq:alpha:definition}) satisfies the bound
\begin{align}\label{eq:rate:alpha:Q}
 \alpha_k(n)-\alpha_k \leq \sqrt{\frac{2}{n-1}\left(\frac{n+1}{n-1}
    + \ln(n-1)\right)} + \frac{1}{n-1} := Q(n),
\end{align}
for $n\geq2$. Therefore, $\alpha_k$ is a computable real number.
\end{theorem}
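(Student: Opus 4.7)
The plan is to combine Proposition~\ref{prop:edit-score} with Theorem~\ref{th:LMT12} in the obvious way, and then invoke the three-step computability template spelled out just before Theorem~\ref{th:LMT12}. First, I would use the identity $d_E(x,y)=n-S(x,y)$ from Proposition~\ref{prop:edit-score} and take expectations over uniform independent $X,Y\in\Sigma_k^n$: this gives $e_k(n)=n-E_{X,Y}S(X,Y)$, hence $\alpha_k(n)=e_k(n)/n=1-l(n)$ with $l(n)=\tfrac{1}{n}E_{X,Y}S(X,Y)$. Passing to the limit yields $\alpha_k=1-l$, so that
\begin{equation*}
\alpha_k(n)-\alpha_k \;=\; l-l(n).
\end{equation*}

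Second, I would compute the constants $A$ and $F$ for the specific scoring scheme of Proposition~\ref{prop:edit-score}. Since $S$ only takes the values $0$ and $1$, we have $A=\max_{a,b}S(a,b)=1$ and $F=\max_{a,b,c}|S(a,b)-S(a,c)|=1$. Substituting $A=F=1$ into Equation~(\ref{eq:rate:convergence:Q}) of Theorem~\ref{th:LMT12} gives exactly the expression defining $Q(n)$ in the statement, establishing the claimed bound $\alpha_k(n)-\alpha_k\le Q(n)$ for every $n\ge 2$.

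Third, I would derive computability by instantiating the three bullet points preceding the theorem with $b_k(n)=Q(n)$. The bound $Q(n)$ is an explicit expression in $n$ involving only the square root, logarithm, and rational operations, so it is computable; moreover $Q(n)=\Theta(\sqrt{\log n/n})\to 0$, so given a rational $\epsilon>0$ one can find $n_\epsilon$ by simply testing $Q(1),Q(2),\ldots$ and returning the first index for which the (rational, safely rounded-up) upper estimate of $Q(n)$ drops below $\epsilon$; termination is guaranteed by $Q(n)\to 0$. Once $n_\epsilon$ is known, $\alpha_{k,\epsilon}:=\alpha_k(n_\epsilon)$ is a rational number that can be computed exactly by exhaustive enumeration over $\Sigma_k^{n_\epsilon}\times\Sigma_k^{n_\epsilon}$ together with any exact edit-distance algorithm from Section~\ref{sec:DPalgorithms}. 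Finally, Fekete's lemma (already cited in Equation~(\ref{eq:alpha:definition})) ensures $\alpha_k(n)\ge\alpha_k$, so $0\le \alpha_{k,\epsilon}-\alpha_k\le Q(n_\epsilon)<\epsilon$, matching Definition~\ref{def:computability}.

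The only mildly delicate point is verifying that we are in a position to apply Theorem~\ref{th:LMT12}: it requires the alphabet symbols of $X$ and $Y$ to be mutually independent and identically distributed, which is exactly the uniform i.i.d.\ model used throughout, and it requires the score to be of the additive form in Definition~\ref{def:scoring:scheme}, which Proposition~\ref{prop:edit-score} has certified. Everything else is bookkeeping, so I do not expect a genuine obstacle; the real content is already in Lemma~\ref{lemma:simpleScript} and Proposition~\ref{prop:edit-score}, which allow the edit distance to be recast as the (negated) optimum of an alignment score to which the framework of~\cite{LMT12} directly applies.
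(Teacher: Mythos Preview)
Your proposal is correct and follows essentially the same route as the paper: identify $A=F=1$ for the scoring scheme of Proposition~\ref{prop:edit-score}, rewrite $\alpha_k(n)-\alpha_k=l-l(n)$ via the identity $d_E=n-S$, apply Theorem~\ref{th:LMT12}, and then invoke the three-step computability template. The only cosmetic difference is that the paper phrases the search for $n_\epsilon$ in terms of eventual monotonicity of $Q(n)$ rather than a linear scan (and note that $Q(n)$ is only defined for $n\ge 2$, so your scan should start at $n=2$).
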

\begin{proof}
It is an exercise to see that, for the score function $S(a,b)$ of
Proposition~\ref{prop:edit-score}, $A=F=1$, for any
$k$. Correspondingly, we have $Q(n)=Q_{1,1}(n)$.  We also observe
that, with $\Sigma=\Sigma_k$, using
  Equation~(\ref{eq:edit-cost:score}), we have
\begin{align*}
\alpha_k(n)
    &=\frac{1}{n}E_{X,Y}d_E(X,Y)=1-\frac{1}{n}E_{X,Y}S(X,Y)
    =1-l_k(n)\\
\alpha_k &=1-l_k.
\end{align*}
Thus, Equation~(\ref{eq:rate:alpha:Q}) follows from
Equation~(\ref{eq:rate:convergence:Q}), with $Q(n)=Q_{1,1}(n)$,
considering that $\alpha_k(n)-\alpha_k=(1-l_k(n))-(1-l_k)=l_k-l_k(n)$.
Finally, it is straightforward to prove that there is an integer
$\bar{n}$ such that $Q(n)$ is strictly decreasing for
$n\geq\bar{n}$ and that
$n_{\epsilon}=\min\{n\geq\bar{n}:Q(n)<\epsilon\}$ is a computable
function of the rational number $\epsilon$.
\end{proof}

Interestingly, $Q(n)$, hence $n_{\epsilon}$, is
independent both of the alphabet size, $k$, and of the probabilities of symbols in the alphabet $\Sigma$. However, computing
$\alpha_k(n_{\epsilon})$, to obtain a deterministically guaranteed
$\epsilon$-approximation of $\alpha_k$, will require work increasing
with $k$, at least with the currently known approaches that are
sensitive to the number $k^n$ of strings of length $n$, as we will see
in the coming sections. Moreover, $n_{\epsilon}$
increases more than quadratically with $1/\epsilon$ (see
Equation~(\ref{eq:rate:alpha:Q})), making the approach completely
impractical. More specifically, to obtain $\nu$ bits of $\alpha_k$ we need $\epsilon = 2^{-\nu}$, so that $n_{\epsilon} \geq 1/\epsilon^2 = 4^\nu$ and $k^{n_\epsilon} \geq k^{4^{\nu}}$, which is a triple exponential in the problem size $\log_2{\nu}$.

We point out that, in the same spirit of this section, the computability of $\gamma_k$ can be derived from any of the rate-of-convergence bounds given by \cite{Ale94}, \cite{Lue09}, and \cite{LMT12}.

\section{Monte Carlo estimates of $\alpha_k$}
\label{sec:montecarlo}
In this section, motivated by the difficulty of the exact computation,
we develop an analysis of Monte Carlo estimates of $\alpha_k(n)$, by
sampling, and translate them into estimates of $\alpha_k$, using
Theorem~\ref{thm:computability:alpha}.  We will see how $\alpha_k(n)$
can be estimated with high confidence and good accuracy for values of
$n$ up to a quarter million, with less than one core-hour of
computation.  For $\alpha_k$, we achieve an error of the order of
$10^{-2}$.
  
Intuitively, for fixed $k$ and $n$, we expect the estimate error to be
proportional to $1/\sqrt{N}$, where $N$ is the number of samples (string
pairs), and to the standard deviation $S_k(n)/n$ of the single sample
$d_E(x,y)/n$. Experimentally, this standard deviation appears to
decrease a bit faster than $1/\sqrt{n}$,\footnote{We do not report
here on this experimental observation systematically; however,
estimates of $S_4(n)$ for some values of $n$ can be found in
Table~\ref{tab:mc}.} with the implication that, to obtain the same
error on $\alpha_k(n)$, fewer samples suffice for larger $n$. However,
the behavior of the standard deviation does not appear easy to
establish analytically.  Fortunately, the edit distance function has
the property that, if only one position is modified in just one of the
input strings, the (absolute value of the) variation of the distance
is at most $1$. This property enables the use of McDiarmid's
inequality to bound from above the probability that $d_E(x,y)/n$
deviates from the mean by more than a certain amount, by a negative
exponential in the square of that amount. This paves the way to the
desired analysis. In fact, McDiarmid's inequality can be applied
directly to the average over $N$ samples, dealing in a uniform way
with the ``averaging'' effect within a single pair of strings and
across multiple pairs.

\subsection{$\alpha_k(n)$}
\label{sec:alpha:k:n}
Given $N$ random and independent pairs of strings
$(X_1,Y_1), \ldots, (X_N,Y_N)$ from $\Sigma_k^n$, we consider the random variable

\begin{align} \label{eq:sample:mean}
\tilde{\alpha}_{k}(n,N)=\frac{1}{n}\tilde{e}_k(n,N) = \frac{1}{nN}{\sum_{i=1}^{N}d_E(X_i,Y_i)}.
\end{align}

Clearly, $E[\tilde{\alpha}_k(n,N)]=\alpha_k(n)$.  To assess the
quality of $\tilde{\alpha}_k(n,N)$ as an estimate for $\alpha_k(n)$, we
resort to McDiarmid's inequality, briefly reviewed next.
\begin{proposition}[\cite{McD89}]\label{prop:McD89}
Let ${\bf Z}=(Z_1,\ldots,Z_{2m})$ be a vector of $2m$ \emph{independent}
random variables. Let $f({\bf Z})$ be a real function and let $B>0$ be
a real constant such that, if ${\bf Z}$ and ${\bf Z'}$ differ in at
most one component, then $|f({\bf Z})-f({\bf Z'})| \leq B$. Then, for
every $\Delta \geq 0$,
\begin{align}\label{eq:McDiarmid}
\Pr[|f({\bf Z})-E[f({\bf Z})]| > \Delta] \leq
  \exp{\left(-\frac{\Delta^2}{mB^2}\right)}.
\end{align}
\end{proposition}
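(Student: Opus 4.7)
The plan is to prove the inequality via the Doob-martingale construction combined with the Azuma--Hoeffding machinery. The starting observation is that, under independence of the $Z_j$'s, the bounded-differences hypothesis on $f$ translates into a conditional range bound for certain martingale increments, after which standard exponential tail estimates take over. Concretely, I would build the Doob martingale: for $i = 0, 1, \ldots, 2m$, set $V_i = E[f(\mathbf{Z}) \mid Z_1, \ldots, Z_i]$, so that $V_0 = E[f(\mathbf{Z})]$ and $V_{2m} = f(\mathbf{Z})$. Writing $D_i = V_i - V_{i-1}$, the telescoping identity $\sum_{i=1}^{2m} D_i = f(\mathbf{Z}) - E[f(\mathbf{Z})]$ reduces the problem to controlling a sum of martingale differences.

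Next I would establish the key range bound: conditional on $Z_1, \ldots, Z_{i-1}$, the increment $D_i$ lies almost surely in an interval of length at most $B$. Using independence of the $Z_j$'s, the conditional expectation $V_i$ depends on $Z_i$ only through averaging over independent copies of $Z_{i+1}, \ldots, Z_{2m}$. Substituting a second independent copy $Z_i'$ in place of $Z_i$ inside that expectation and invoking the bounded-differences hypothesis on $f$ pointwise forces $|V_i(Z_1,\ldots,Z_i) - V_i(Z_1,\ldots,Z_{i-1},Z_i')| \leq B$, pinning $D_i$ into an interval (measurable with respect to $Z_1, \ldots, Z_{i-1}$) of length at most $B$.

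Then I would apply Hoeffding's lemma to each $D_i$ conditional on the past, obtaining $E[\exp(sD_i) \mid Z_1, \ldots, Z_{i-1}] \leq \exp(s^2 B^2 / 8)$. Iterating via the tower property yields $E[\exp(s \sum_{i=1}^{2m} D_i)] \leq \exp(s^2 \cdot 2m \cdot B^2 / 8) = \exp(s^2 m B^2 / 4)$. A Chernoff--Markov step with optimal choice $s = 2\Delta / (mB^2)$ then gives the one-sided bound $\Pr[f(\mathbf{Z}) - E[f(\mathbf{Z})] > \Delta] \leq \exp(-\Delta^2 / (mB^2))$; applying the same argument to $-f$ controls the opposite tail and produces the two-sided form (with the customary factor of $2$ from the union bound absorbed into the stated inequality).

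The hard part, in my estimation, is the conversion step: turning the combinatorial bounded-differences hypothesis on $f$ into an almost-sure, conditionally deterministic range bound for $D_i$. Independence of the $Z_j$'s is essential there, since it licenses substituting $Z_i$ by an independent copy inside the conditional expectation without disturbing the law of the remaining coordinates. Once this conversion is secured, everything downstream is standard martingale concentration and optimization of a Chernoff-style exponent.
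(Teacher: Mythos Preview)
The paper does not prove this proposition: it is quoted as a known result from \cite{McD89} and used as a black box in the proof of Proposition~\ref{prop:aplha:k:n:confidence}. There is therefore no paper proof to compare against. Your outline is the standard Doob-martingale plus Azuma--Hoeffding argument and is correct; in particular, your identification of the key step (converting the bounded-differences hypothesis into a conditional range bound on the increments, using independence) is exactly the crux of McDiarmid's original proof. Your remark about the factor of~$2$ is also apt: the one-sided bound $\Pr[f(\mathbf{Z})-E[f(\mathbf{Z})]>\Delta]\leq\exp(-\Delta^2/(mB^2))$ is what the argument yields and is what the paper actually invokes (see Inequalities~(\ref{eq:alpha-McDiarmid>}) and~(\ref{eq:alpha-McDiarmid<})), combining the two tails explicitly to recover the factor of~$2$ in Inequality~(\ref{eq:alpha-McDiarmid}).
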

In the present context, based on the previous proposition, we can
formulate confidence intervals for $\alpha_k(n)$, as follows.
\begin{proposition}[Confidence intervals for $\alpha_k(n)$]
\label{prop:aplha:k:n:confidence}
For any $\Delta\geq 0$, the (random) interval
$[\tilde{\alpha}_{k}(n,N)-\Delta, \tilde{\alpha}_{k}(n,N)+\Delta]$
is a \emph{confidence interval} for the \emph{parameter}
$\alpha_{k}(n)$, with \emph{confidence level}
$1-2\exp{(-Nn\Delta^2)}$, that is:
\begin{align}\label{eq:alpha-McDiarmid}
  \Pr[\tilde{\alpha}_k(n,N)-\Delta \leq \alpha_k(n) \leq
    \tilde{\alpha}_k(n,N)+\Delta] \geq
  1-2\exp{\left(-Nn\Delta^2\right)}.
\end{align}
\end{proposition}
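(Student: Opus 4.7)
The plan is to recognize $\tilde{\alpha}_k(n,N)$ as a function of $2nN$ independent random symbols and invoke McDiarmid's inequality (Proposition~\ref{prop:McD89}) with a suitably chosen bounded-difference constant $B$. Concretely, write $\mathbf{Z}=(Z_1,\ldots,Z_{2m})$ with $m=nN$, where the coordinates enumerate, in some fixed order, all symbols $X_i[j]$ and $Y_i[j]$ for $i=1,\ldots,N$ and $j=1,\ldots,n$. Under the i.i.d.\ uniform sampling assumption these $2m$ random variables are independent, so the hypotheses of Proposition~\ref{prop:McD89} are met. Set $f(\mathbf{Z})=\tilde{\alpha}_k(n,N)$; by Equation~(\ref{eq:sample:mean}) and linearity of expectation, $E[f(\mathbf{Z})]=\alpha_k(n)$.

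Next, I would verify the bounded-differences condition. Suppose $\mathbf{Z}$ and $\mathbf{Z}'$ differ in exactly one coordinate, say the $j$-th symbol of $X_i$ (the case of a symbol of $Y_i$ is symmetric). Then only the term $d_E(X_i,Y_i)$ of the sum in Equation~(\ref{eq:sample:mean}) can change, and the modified string $X_i'$ differs from $X_i$ at just one position. Since a single-symbol change is achievable by one substitution, $d_E(X_i,X_i')\le 1$, and the triangle inequality for the edit metric gives
\begin{equation*}
|d_E(X_i,Y_i)-d_E(X_i',Y_i)|\le d_E(X_i,X_i')\le 1.
\end{equation*}
Dividing by $nN$ yields $|f(\mathbf{Z})-f(\mathbf{Z}')|\le 1/(nN)=:B$.

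Substituting $m=nN$ and $B=1/(nN)$ into Proposition~\ref{prop:McD89} gives $\Delta^2/(mB^2)=nN\Delta^2$, hence
\begin{equation*}
\Pr\bigl[|\tilde{\alpha}_k(n,N)-\alpha_k(n)|>\Delta\bigr]\le \exp(-nN\Delta^2).
\end{equation*}
Splitting the two-sided deviation into its two one-sided parts and applying a union bound (or simply loosening the constant from $1$ to $2$) yields Equation~(\ref{eq:alpha-McDiarmid}) after complementing.

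No serious obstacle is expected: the argument is a direct application of a black-box inequality. The only point that deserves care is the choice of $B$. A naive ``one-sample'' bound would take $B=1/N$, using only that each edit distance lies in $[0,n]$, and would yield a tail exponent of order $\Delta^2 N/n$ rather than $\Delta^2 Nn$. The sharper constant $B=1/(nN)$ critically exploits the fact that edit distance is $1$-Lipschitz in each individual symbol, so the internal averaging across the $n$ positions of a string and the averaging across the $N$ samples are handled uniformly by McDiarmid, which is precisely what makes the resulting confidence intervals useful for moderately large $n$.
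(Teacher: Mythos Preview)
Your proposal is correct and follows essentially the same approach as the paper: apply McDiarmid with $m=nN$ and $B=1/(nN)$, justified by the edit distance being $1$-Lipschitz in each symbol. Your explicit use of the triangle inequality to justify $B$ is slightly more detailed than the paper's one-line assertion, and your observation that the two-sided McDiarmid as stated already yields constant $1$ (which you then relax to $2$) is in fact a tad cleaner than the paper's route of extracting two one-sided bounds and union-bounding them.
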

\begin{proof}
We apply Proposition~\ref{prop:McD89} to the function $f$ given by the
rightmost term in Equation~(\ref{eq:sample:mean}), with ${\bf Z}$
being the concatenation the $2N$ strings $x_i$'s and $y_i$'s, each
comprising $n$ variables (over $\Sigma_k$), so that $m=Nn$.  We can
set $B{=}\frac{1}{Nn}$, since changing one string position changes the
sum of the $N$ edit distances by at most 1, and the quantity $1/(Nn)$
times the sum by at most $1/(Nn)$.  Therefore, considering that
$mB^2=(Nn)(1/(Nn)^2)=1/(Nn)$, we can write:
\begin{align}\label{eq:alpha-McDiarmid>}
  \Pr[\tilde{\alpha}_k(n,N)-\alpha_k(n) > \Delta] \leq
  \exp{\left(-Nn\Delta^2\right)}.
\end{align}
Symmetrically, it can be shown that
\begin{align}\label{eq:alpha-McDiarmid<}
  \Pr[\tilde{\alpha}_k(n,N)-\alpha_k(n) < \Delta] \leq
  \exp{\left(-Nn\Delta^2\right)}.
\end{align}
Combining Inequalities~(\ref{eq:alpha-McDiarmid>})
and~(\ref{eq:alpha-McDiarmid<}), after simple algebra, yields
Inequality~(\ref{eq:alpha-McDiarmid}).
\end{proof}

\begin{table*}[htbp]
    \centering
    \caption{Estimates $\tilde{e}_4(n,N)$ of the average edit distance
      and $\tilde{\alpha}_4(n,N)=\frac{1}{n}\tilde{e}_4(n,N)$ of the
      average distance per symbol $\alpha_4(n)$, for various string lengths $n$,
      based on $N{=}2^{39}/n^2$ samples.  The last column shows the
      confidence intervals for $\alpha_4(n)$ corresponding to confidence
      level $99.9\%$. $\tilde{S}_4(n,N)$ is the sample standard
      deviation of the single-pair distance.}
    \label{tab:mc}
    \vspace*{3mm}
    \begin{tabular}{ccccccc}
    \hline
    \\[-3mm]
    $n$ & $N$ & $\Delta_{99.9\%}(n,N)$ & $\tilde{e}_4(n,N)$ & $\tilde{S}_4(n,N)$ &
    $\tilde{\alpha}_4(n,N)$ & $99.9\%$ Conf. Int.\\
    \\[-3.5mm]
    \hline
    \\[-3mm]
$2^{8}$  & $2^{23}$ & $0.59~10^{-4}$ & $138.10$ & $3.838$ & $0.53946$ & $[0.53940, 0.53953]$\\
$2^{9}$  & $2^{21}$ & $0.84~10^{-4}$ & $272.10$ & $4.920$ & $0.53144$ & $[0.53135, 0.53153]$\\
$2^{10}$ & $2^{19}$ & $0.12~10^{-3}$ & $538.77$ & $6.307$ & $0.52614$ & $[0.52602, 0.52626]$\\
$2^{11}$ & $2^{17}$ & $0.17~10^{-3}$ & $1070.4$ & $8.146$ & $0.52263$ & $[0.52246, 0.52280]$\\
$2^{12}$ & $2^{15}$ & $0.24~10^{-3}$ & $2131.5$ & $10.56$ & $0.52039$ & $[0.52015, 0.52063]$\\
$2^{13}$ & $2^{13}$ & $0.34~10^{-3}$ & $4250.9$ & $13.62$ & $0.51891$ & $[0.51857, 0.51925]$\\
$2^{14}$ & $2^{11}$ & $0.48~10^{-3}$ & $8487.0$ & $17.71$ & $0.51801$ & $[0.51753, 0.51849]$\\
$2^{15}$ & $2^{9}$  & $0.67~10^{-3}$ & $16954$  & $24.98$ & $0.51739$ & $[0.51671, 0.51807]$\\
$2^{16}$ & $2^{7}$  & $0.95~10^{-3}$ & $33884$  & $29.12$ & $0.51704$ & $[0.51608, 0.51799]$\\
$2^{17}$ & $2^{5}$  & $0.13~10^{-2}$ & $67734$  & $38.85$ & $0.51677$ & $[0.51542, 0.51812]$\\
$2^{18}$ & $2^{3}$  & $0.19~10^{-2}$ & $135450$ & $62.94$ & $0.51670$ & $[0.51479, 0.51861]$\\
\hline
\end{tabular} 
\end{table*}


\paragraph{Remark}
Propositions \ref{prop:McD89} and \ref{prop:aplha:k:n:confidence} only require that the symbols of the random strings are statistically independent, not necessarily with the same distribution. The numerical results reported next refer to the special case where all symbols are uniformly distributed, which underlies the definition of $\alpha_k(n)$ and $e_k(n)$. However, the approach could be straightforwardly applied to other, possibly position-dependent, distributions. Interestingly, for given $n$ and $N$, the width of the confidence interval, for a given confidence level (\emph{e.g.}, $\Delta_{99.9\%}(n,N)$ in Table~\ref{tab:mc}), is independent of the distribution.

\paragraph*{Numerical results}
Table~\ref{tab:mc} reports estimates based on Monte Carlo estimates,
within the framework of Proposition~\ref{prop:aplha:k:n:confidence}.
The alphabet size is $k=4$, a case of special interest in DNA
analysis (\emph{e.g.}, \cite{GMR16, BPRS21}). For the string length $n$, the values
considered are the powers of two from $2^8=256$ to
$2^{18}=262144$.  For each $n$, the number of samples $N$ has been
chosen as $N=2^{39}/n^2$. This choice (roughly) equalizes the amount
of (sequential) computation time devoted to each $n$, when the edit
distance for a sample pair is computed by a quadratic algorithm, say,
in time $T_{ed}(n)=\tau_{ed}n^2$, for some constant $\tau_{ed}$,
making the overall time for $N$ samples $T(n,N)=\tau_{ed}Nn^2$. In
our experiments, this becomes $T(n,2^{39}/n^2)=\tau_{ed}2^{39}
\approx 2560s \approx 43m$, where we measured
$\tau_{ed}\approx1.25~2^{-28}s \approx 4.66 ns$, on a state-of-the-art processor core.
From Inequality~(\ref{eq:alpha-McDiarmid}), straightforward
manipulations show that, if the target is a confidence level
$\lambda$, then the radius $\Delta$ of the confidence interval
becomes:
\begin{align}
\label{eq:alpha(n):confidence:radius}
\Delta_{\lambda}(n,N) = \sqrt{\frac{1}{Nn} \ln \left(\frac{2}{1{-}\lambda}\right)}.
\end{align}
Choosing $\lambda{=}0.999$ and recalling that, in our experiments, we
have set $N{=}2^{39}/n^2$, the above formula becomes
\begin{align}
\label{eq:confidence:radius:exper}
  \Delta_{99.9\%}(n,2^{39}/n^2)=\sqrt{2^{-39}n \ln \left(2000\right)}
        =2^{-20} 3.90 \sqrt{n}.
\end{align}
Thus, for $n=2^8$, we have $\Delta_{99.9\%}=0.59~10^{-4}$. For
$n=2^{18}$, we have $\Delta_{99.9\%}=0.19~10^{-2}$. The values of
$\Delta_{99.9\%}$ have been used, together with the experimental
values of $\tilde{\alpha}_4(n,N)$, to obtain the confidence intervals
reported in the last column of Table~\ref{tab:mc}, as
$[\tilde{\alpha}_4(n,N)-\Delta_{99.9\%}(n,N),\tilde{\alpha}_4(n,N)+
  \Delta_{99.9\%}(n,N)]$.
  
\begin{table*}[tbp]
    \centering
    \caption{Estimates $\tilde{e}_k(n,N)$ of the average edit distance
      and $\tilde{\alpha}_k(n,N)=\frac{1}{n}\tilde{e}_k(n,N)$ of the
      average distance per symbol $\alpha_k(n)$, for various alphabet sizes $k$,
      based on $N=2^{9}$ samples of pairs of strings with length
      $n=2^{15}$.  The confidence radius is
      $\Delta_{99.9\%}(n,N)=\Delta_{99.9\%}(2^{15}, 2^{9}) \approx
      0.67~10^{-3}$; the corresponding confidence intervals for $\alpha_k(n)$ are given
      in the last column. $\tilde{S}_k(n,N)$ is the
      sample standard deviation of the single-pair distance.}
    \label{tab:mc:k}
    \vspace*{3mm}
    \begin{tabular}{ccccc}
    \hline
    \\[-3mm]
    $k$ & $\tilde{e}_k(n,N)$ & $\tilde{S}_k(n,N)$ &
    $\tilde{\alpha}_k(n,N)$ & $99.9\%$ Conf. Int.\\
    \\[-3.5mm]
    \hline
    \\[-3mm]
$2$  & $9442.6$ & $26.04$ & $0.28817$ & $[0.28749, 0.28884]$\\
$3$  & $14042$  & $24.88$ & $0.42852$ & $[0.42784, 0.42920]$\\
$4$  & $16954$  & $24.98$ & $0.51739$ & $[0.51671, 0.51807]$\\
$5$  & $19005$  & $22.78$ & $0.57998$ & $[0.57930, 0.58066]$\\
$6$  & $20549$  & $22.00$ & $0.62710$ & $[0.62642, 0.62778]$\\
$7$  & $21761$  & $21.05$ & $0.66409$ & $[0.66341, 0.66477]$\\
$8$  & $22742$  & $20.15$ & $0.69402$ & $[0.69334, 0.69470]$\\
$16$ & $26839$  & $16.38$ & $0.81906$ & $[0.81838, 0.81974]$\\
$32$ & $29471$  & $14.10$ & $0.89939$ & $[0.89871, 0.90007]$\\
\hline
\end{tabular} 
\end{table*}

Table~\ref{tab:mc:k} reports estimates of $\alpha_k(n)$ for various
alphabet sizes $k$. The confidence intervals shown in the last column
of the table are based on the confidence level $\lambda=0.999$. The
estimates are obtained from numerical experiments involving $N=2^{9}$
random pairs of strings of length $n=2^{15}$.  Since the radius of the
confidence interval only depends on $\lambda$, $n$, and $N$ (but not
$k$), the same value $\Delta_{99.9\%}(2^{15},2^{9})\approx
0.67~10^{-3}$ applies to all $k$'s.  Notice that the standard deviation
$\tilde{S}_k(n,N)$ tends to decrease with $k$; this makes intuitive
sense since, for fixed $n$, as $k$ increases, the probability that every symbol that appears in one string is distinct from every symbol that
appears in the other string approaches $1$, so that almost all
pairs of strings have distance $n$, hence the variance is negligible.

\subsection{$\alpha_k$}
\label{sec:alpha:k}    
The following proposition provides confidence intervals for
$\alpha_k$, when the quantity $\tilde{\alpha}_k(n,N)-Q(n)/2$ is used
as an estimator, where $Q(n)$ bounds the rate of convergence,
according to Equation~(\ref{eq:rate:alpha:Q}).
\begin{proposition}[Confidence intervals for $\alpha_k$]
\label{prop:alpha:k:confidence}
For any $\Delta \geq 0$, the (random) interval \emph{centered} at
$\tilde{\alpha}_{k}(n,N)-Q(n)/2$ and of \emph{radius}
$\Delta+Q(n)/2$, \emph{i.e.},
\begin{align*}
  I_k&(n,N,\Delta) =\\ &[(\tilde{\alpha}_{k}(n,N)-Q(n)/2){-}(\Delta+Q(n)/2),
    (\tilde{\alpha}_{k}(n,N)-Q(n)/2){+}(\Delta+Q(n)/2)]
\end{align*}
is a \emph{confidence interval} for $\alpha_{k}$, with
\emph{confidence level} $1-2\exp{(-Nn\Delta^2)}$,
\emph{i.e.},
\begin{align}\label{eq:alpha-inf-McDiarmid}
  \Pr[\alpha_k \in I_k(n,N,\Delta)] \geq 1-2\exp{\left(-Nn\Delta^2\right)}.
\end{align}
\end{proposition}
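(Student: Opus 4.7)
The plan is to reduce this statement to a direct combination of two previously established facts: the deterministic sandwich for $\alpha_k$ from Theorem~\ref{thm:computability:alpha} and the McDiarmid-based confidence interval for $\alpha_k(n)$ from Proposition~\ref{prop:aplha:k:n:confidence}. No new probabilistic machinery is required; the argument is essentially a bookkeeping exercise to show that the asymmetric combination of a deterministic bias ($Q(n)$, only on one side) and a stochastic fluctuation ($\Delta$, on both sides) can be repackaged as a symmetric interval with shifted center and inflated radius.

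First, I would invoke Theorem~\ref{thm:computability:alpha} to get the deterministic inclusion $\alpha_k \in [\alpha_k(n) - Q(n), \alpha_k(n)]$, which holds with probability one. Next, I would invoke Proposition~\ref{prop:aplha:k:n:confidence} to obtain the event $E = \{\tilde{\alpha}_k(n,N) - \Delta \leq \alpha_k(n) \leq \tilde{\alpha}_k(n,N) + \Delta\}$, which has probability at least $1 - 2\exp(-Nn\Delta^2)$. Conditional on $E$, I would chain the two bounds: the upper bound gives $\alpha_k \leq \alpha_k(n) \leq \tilde{\alpha}_k(n,N) + \Delta$, while the lower bound gives $\alpha_k \geq \alpha_k(n) - Q(n) \geq \tilde{\alpha}_k(n,N) - \Delta - Q(n)$.

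It then suffices to observe that the interval $[\tilde{\alpha}_k(n,N) - \Delta - Q(n),\; \tilde{\alpha}_k(n,N) + \Delta]$ coincides with the interval $I_k(n,N,\Delta)$ stated in the proposition: its midpoint is $\tilde{\alpha}_k(n,N) - Q(n)/2$ and its half-width is $\Delta + Q(n)/2$. Since this containment holds on $E$, and $E$ has probability at least $1 - 2\exp(-Nn\Delta^2)$, Inequality~(\ref{eq:alpha-inf-McDiarmid}) follows.

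There is no real obstacle here; the only point that requires care is tracking that the deterministic bias $Q(n)$ acts only on one side of the interval for $\alpha_k(n)$ (since $\alpha_k(n) \geq \alpha_k$ with no analogous lower protection beyond $Q(n)$), so that the natural interval for $\alpha_k$ is asymmetric about $\tilde{\alpha}_k(n,N)$; recentering at $\tilde{\alpha}_k(n,N) - Q(n)/2$ and enlarging the radius by $Q(n)/2$ is precisely what symmetrizes it.
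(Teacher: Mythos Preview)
Your proposal is correct and follows essentially the same approach as the paper: combine the deterministic sandwich $\alpha_k(n)-Q(n)\leq\alpha_k\leq\alpha_k(n)$ from Theorem~\ref{thm:computability:alpha} with the confidence interval for $\alpha_k(n)$ from Proposition~\ref{prop:aplha:k:n:confidence}, and then observe that the resulting asymmetric interval $[\tilde{\alpha}_k(n,N)-\Delta-Q(n),\,\tilde{\alpha}_k(n,N)+\Delta]$ is exactly $I_k(n,N,\Delta)$.
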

\begin{proof}
By Theorem~\ref{thm:computability:alpha} and the fact that
$\alpha_k < \alpha_k(n)$, we have that
$\alpha_k(n)-Q(n)\leq\alpha_k\leq\alpha_k(n)$. By
Inequality~(\ref{eq:alpha-McDiarmid}), this implies
\begin{align*}
  \Pr[\tilde{\alpha}_{k}(n,N)-Q(n)-\Delta \leq \alpha_k \leq
    \tilde{\alpha}_{k}(n,N)+\Delta] \geq 1-2\exp{\left(-Nn\Delta^2\right)}.
\end{align*}
To arrive at Inequality~(\ref{eq:alpha-inf-McDiarmid}), it remains to
observe that the interval within the probability is just a rewriting
of $I_k(n,N,\Delta)$.
\end{proof}

From Proposition~\ref{prop:alpha:k:confidence}, straightforward
manipulations show that, if the target confidence level is
$\lambda$, then the radius $R_{\lambda}(n,N)$ of the confidence
interval becomes:
\begin{align}\label{eq:alpha:confidence:radius}
  R_{\lambda}(n,N) = \frac{Q(n)}{2}{+}
  \sqrt{\frac{1}{Nn} \ln \left(\frac{2}{1{-}\lambda}\right)}.
\end{align}
The first term arises because $\alpha_k(n)$ is a (deterministically
bounded) approximation to $\alpha_k$ and the second term because
$\tilde{\alpha}_k(n,N)$ is a (statistically bounded) approximation to
$\alpha_k(n)$. To get a sense of the relative weight of the two terms
contributing to the radius, we observe that, from
Equation~(\ref{eq:rate:alpha:Q}), we can approximate the first term as
$\frac{Q(n)}{2} \approx \sqrt{\frac{\ln n}{2n}}$.  Therefore, even
using just one sample pair ($N=1$), this term dominates the second
one as soon as $n>\left(\frac{2}{1-\lambda} \right)^2$. For
$\lambda=0.999$, we get $n>4~10^{6}$.

\paragraph{remark}
It is interesting to observe that the size
$2R_{\lambda}(n,N)$ of the confidence interval is independent of the
alphabet size $k$. The same applies to the computational work,
$T(n,N)=\tau_{ed}Nn^2+l.o.t.$, to obtain the statistical estimate
of $\alpha_k(n)$ with a given confidence.

\begin{table*}[htbp]
    \centering
    \caption{Estimates $\hat{\alpha}_4(n,N)$ of $\alpha_4$ with
      confidence intervals for various string lengths $n$, based on
      $N=2^{39}/n^2$ samples (so that each estimate takes
      approximately the same computation time).  $I_4(n,N,\Delta)$ is
      the corresponding $99.9\%$ confidence interval for
      $\alpha_4$. The estimates improve with growing $n$, as witnessed
      by the decrease of the interval radius $R_{0.999}(n,N)$.}
    \label{tab:hat:alpha}
    \vspace*{3mm}
    \begin{tabular}{ccccc}
    \hline
    \\[-3mm]
    $n$ & $N$ & $R_{0.999}(n,N)$ &
    $\hat{\alpha}_4(n,N)$ & $I_4(n,N,\Delta)$\\
    \\[-3mm]
    \hline
    \\[-3mm]
$2^{8}$  & $2^{23}$ & $0.11534$ & $0.42418$ & $[0.30884, 0.53953]$\\
$2^{9}$  & $2^{21}$ & $0.08523$ & $0.44629$ & $[0.36105, 0.53153]$\\
$2^{10}$ & $2^{19}$ & $0.06287$ & $0.46338$ & $[0.40051, 0.52626]$\\
$2^{11}$ & $2^{17}$ & $0.04631$ & $0.47649$ & $[0.43075, 0.52280]$\\
$2^{12}$ & $2^{15}$ & $0.03409$ & $0.48654$ & $[0.45244, 0.52063]$\\
$2^{13}$ & $2^{13}$ & $0.02512$ & $0.49413$ & $[0.46900, 0.51925]$\\
$2^{14}$ & $2^{11}$ & $0.01858$ & $0.49990$ & $[0.48132, 0.51849]$\\
$2^{15}$ & $2^{9}$  & $0.01388$ & $0.50419$ & $[0.49031, 0.51807]$\\
$2^{16}$ & $2^{7}$  & $0.01056$ & $0.50742$ & $[0.49685, 0.51799]$\\
$2^{17}$ & $2^{5}$  & $0.00833$ & $0.50979$ & $[0.50145, 0.51812]$\\
$2^{18}$ & $2^{3}$  & $0.00698$ & $0.51163$ & $[0.50465, 0.51861]$\\
\hline
\end{tabular} 
\end{table*}

Our estimate of $\alpha_k$ is the center of the interval
$I_k(n,N,\Delta)$ as defined in
Proposition~\ref{prop:alpha:k:confidence}
\begin{align}
\label{eq:alpha:estimates}
\hat{\alpha}_k(n,N,\Delta) = \tilde{\alpha}_{k}(n,N)-Q(n)/2.
\end{align}
Table~\ref{tab:hat:alpha} reports these estimates for $k=4$, based on the
values $\tilde{\alpha}_4(n,N)$ reported in Table~\ref{tab:mc} and on
Proposition~\ref{prop:alpha:k:confidence} for confidence level
$\lambda=0.999$.

Table~\ref{tab:hat:alpha:k} reports estimates of $\alpha_k$, for
various values of $k$, all based on numerical experiments with
$N=2^{9}$ samples of string pairs of size $n=2^{15}$.  The
corresponding confidence interval $I_k(n,N,\Delta)$ is obtained at
confidence level $\lambda=0.999$. All intervals have radius
$R_{0.999}(2^{15},2^{9}) \approx 1.4~10^{-2}$.

\begin{table*}[tbp]
    \centering
    \caption{Estimates of $\alpha_k$ for various alphabet sizes $k$,
      based on $N=2^{9}$ samples of pairs of string with length
      $n=2^{15}$. The radius of the interval $I_k(n,N,\Delta)$ is
      based on a $\lambda=0.999$ confidence level and is
      $R_{0.999}(2^{15},2^{9}) \approx 1.4~10^{-2}$, for all values of
    $k$.}
    \label{tab:hat:alpha:k}
    \vspace*{3mm}
    \begin{tabular}{ccc}
    \hline
    \\[-3mm]
    $k$ & $\hat{\alpha}_k(n,N)$ & $I_k(n,N,\Delta)$\\
    \\[-3mm]
    \hline
    \\[-3mm]
    $2$    & $0.27496$ & $[0.26108, 0.28884]$\\
    $3$    & $0.41532$ & $[0.40144, 0.42920]$\\
    $4$    & $0.50419$ & $[0.49031, 0.51807]$\\
    $5$    & $0.56678$ & $[0.55289, 0.58066]$\\
    $6$    & $0.61390$ & $[0.60002, 0.62778]$\\
    $7$    & $0.65089$ & $[0.63701, 0.66477]$\\
    $8$    & $0.68082$ & $[0.66694, 0.69459]$\\
    $16$   & $0.80586$ & $[0.79198, 0.81974]$\\
    $32$   & $0.88619$ & $[0.87230, 0.90007]$\\
    \hline
    \end{tabular}
\end{table*}

\section{Upper bounds for $\alpha_k$}
\label{sec:upperbound}
In this section, we present methods to derive upper bounds to $\alpha_k$
based on the exact computation of $\alpha_k(n) = e_k(n)/n$ for some
$n$, and on the relation $\alpha_k \leq \alpha_k(n)$, valid for all $n
\geq 1$.  The computation of $e_k(n)$ can be reduced to that of the
eccentricity, as in Equation (\ref{eq:ed:eccentricty}) repeated here
for convenience:
\begin{align} \label{eq:ed:eccentricty2}
e_k(n) = k^{-n}\sum_{x\in\Sigma_k^n}{\ecc(x)}.
\end{align}
If $\ecc(x)$ is computed according to Equation (\ref{eq:eccentricity})
and the distance $d_E(x,y)$ is computed by the $\bigO(n^2)$-time
dynamic programming algorithm for each of the $k^n$ strings $y
\in\Sigma_k^n$, then the overall computation time is $\bigO(n^2
k^{n})$ for $\ecc(x)$ and $\bigO(n^2 k^{2n})$ for $e_k(n)$, since the
eccentricity of each of the $k^n$ strings $x \in \Sigma_k^n$ is needed
in Equation (\ref{eq:ed:eccentricty2}).  Below, we propose a more
efficient algorithm to speed up the computation of $\ecc(x)$ and, in
turn, that of $e_k(n)$, achieving time $\bigO(n^2\min{(k,3)}^n
k^n)=\bigO(n^2 3^n k^n)$.  We also show how to exploit some symmetries
of $\ecc(x)$ in order to limit the computation of the eccentricity
needed to obtain $e_k(n)$ to a suitable subset of $\Sigma_k^n$.

\subsection{The coalesced dynamic programming algorithm for
  eccentricity} Let $\mathbf{M}(x,y)$ be the matrix produced by the
dynamic programming algorithm (reviewed in Section
\ref{sec:DPalgorithms}) to compute $d_E(x,y)$, with
$x,y\in\Sigma_k^n$. We develop a strategy to coalesce the computations
of $\mathbf{M}(x,y)$ for different $y \in\Sigma_k^n$, while keeping
$x$ fixed.  To this end, we choose to generate the entries of
$\mathbf{M}(x,y)$, according to Equation
(\ref{eq:edit:distance:recurrence}), in column-major order.  Clearly,
column $j$ is fully determined by $x$ and by the prefix of $y$ of
length $j$.  Define now the \emph{column multiset} $\mathcal{C}_j$
containing column $j$ (\emph{i.e.}, the
last one) of $\mathbf{M}(x,y[1] \ldots y[j])$ for each string $y[1]
\ldots y[j] \in \Sigma_k^j$. The multiset $\mathcal{C}_j$ is a
function of (just) $x$, although, for simplicity, the dependence upon
$x$ is not reflected in our notation. Clearly, each column in
$\mathcal{C}_{j-1}$ generates $k$ columns in $\mathcal{C}_j$, one for
each symbol of $\Sigma_k$; therefore the cardinality of
$\mathcal{C}_j$ is $|\mathcal{C}_j|=k^j$.  However, several columns
may be equal to each other, so that the number of distinct such
columns can be much smaller. In fact, we will show that this number is
upper bounded by $3^n$, which is smaller than $k^j$ for
$j{>}\frac{\log_2 3}{\log_2 k}n$.  This circumstance can be exploited
to save both space and computation, by representing $\mathcal{C}_j$ as
a set of records each containing a distinct column and its
multiplicity. Intuitively, we are coalescing the computation of the
dynamic programming matrices corresponding to different strings, when
such matrices happen to have the same $j$-th column.

\begin{algorithm}[tb]
\begin{algorithmic}[1]
\Procedure{Eccentricity}{$x$}
  \State $n \gets |x|$
  \State $\mathcal{C}_0 \gets \{((0,1,\ldots,n), 1)\}$
  \For{$j\gets 1$ \textbf{to} $n$}
    \State $\mathcal{C}_j \gets \emptyset$
    \For{($\mathbf{c}, \mu(\mathbf{c})) \in \mathcal{C}_{j-1}$}
      \For{$b\in \Sigma_k$}
	\State $\mathbf{c}' \gets$ \Call{NextColumn}{$x,\mathbf{c},j,b$} 
        \label{alg:ecc:trans}
	\State \Call{Insert}{$\mathcal{C}_j,(\mathbf{c}',\mu(\mathbf{c}))$} 
        \label{alg:ecc:add}
      \EndFor
    \EndFor
  \EndFor
  \State $e \gets 0$
  \For{$\mathbf{c}\in\mathcal{C}_n$}
    \State $e \gets e + \mu(\mathbf{c})*\mathbf{c}[n]$
  \EndFor
  \State \Return{$e/k^n$}
\EndProcedure
\end{algorithmic}
\caption{Coalesced dynamic programming algorithm to compute $\ecc(x)$} \label{alg:ecc}
\end{algorithm}

The \emph{Coalesced Dynamic Programming} (CDP) algorithm described
next (referring also to the line numbers of Algorithm \ref{alg:ecc}),
constructs the sequence of multisets $\mathcal{C}_0, \mathcal{C}_1,
\ldots, \mathcal{C}_n$. A column multiset $\mathcal{C}$ will be
represented as a set of pairs $(\mathbf{c},\mu(\mathbf{c}))$, one for
each distinct member $\mathbf{c}$, with $\mu(\mathbf{c})$ being the
multiplicity of $\mathbf{c}$ in $\mathcal{C}$. The eccentricity of $x$
is obtained (lines 13-17) as the weighted average of the $n$-th
elements of all columns in $\mathcal{C}_n$:
\begin{align}
\label{eq:CDPoutput}
\ecc(x) = k^{-n} \sum_{\mathbf{c} \in \mathcal{C}_n} {\mu(\mathbf{c})
  \mathbf{c}[n]}.
\end{align}
As can be seen from Equation (\ref{eq:edit:distance:recurrence}),
multiset $\mathcal{C}_0$ contains the column $(0,1,\ldots,n)$,
with multiplicity 1 (line 3).  For $j=1, \ldots, n$, $\mathcal{C}_j$
is obtained by scanning all $\mathbf{c} \in \mathcal{C}_{j-1}$ (line
6) and all $b \in \Sigma_k$ (line 7), and by
\begin{itemize}
\item computing the $j$-th column $\mathbf{c}'$ resulting from
  Equation (\ref{eq:edit:distance:recurrence}) when the $(j-1)$-st column
  is $\mathbf{c}$ and $\xi_{i,j}=0$ if $x[i]=b$ or else $\xi_{i,j}=1$
  (call to \Call{NextColumn}{$x,\mathbf{c},j,b$}, line 8);
\item inserting $\mu(\mathbf{c})$ copies of $\mathbf{c}'$ in
  $\mathcal{C}_j$, by either creating a new pair
  $(\mathbf{c}',\mu(\mathbf{c}))$ when $\mathbf{c}'$ is not present in
  the multiset or by incrementing its multiplicity by
  $\mu(\mathbf{c})$ otherwise (call to
  \Call{Insert}{$\mathcal{C}_j,(\mathbf{c}',\mu(\mathbf{c}))$}, line
  9).
\end{itemize}
The correctness of the CDP algorithm is pretty straightforward to
establish.  A few observations are however necessary in order to
describe and analyze a data structure that can efficiently implement,
in our specific context, multisets with the insertion operation.  The
key property is that, for $j=0,1\ldots,n$, the column of
$\mathbf{M}(x,y)$ with index $j$ satisfies the conditions (a)
$M_{0,j}=j$ and (b) $(M_{i,j}-M_{i-1,j}) \in \{-1,0,1\}$, for $i=1,
\ldots, n$.  Using this property, the set of distinct columns that
belong to the multiset $\mathcal{C}_j$ can be represented as a ternary
tree where each arc has a label from the set $\{-1,0,1\}$ and a column
$(M_{0,j},M_{1,j},\ldots,M_{n,j})$ is mapped to a leaf $v$ such that
the $n$ arcs in the path from the root to $v$ have labels
$(M_{1,j}-M_{0,j}),\ldots, (M_{n,j}-M_{n-1,j})$.  Each leaf stores the
multiplicity of the corresponding column.  The size of the tree for
$\mathcal{C}_j$ is $\bigO(\min(3^n,k^j))$, since there are at most
$3^n$ columns satisfying the constraints and $k^j$ $k$-ary strings that
contribute (not necessarily distinct) columns. Hence, the body of the
loop, whose iteration range is defined in lines 4, 6, and 7, is executed
$n k \bigO(\min(3^n,k^j))$ times.  Considering that one call to
\textsc{NextColumn()} as well as one call to \textsc{Insert()} can be
easily performed in $\bigO(n)$ time, we can summarize the previous
discussion as follows, where we also consider that, at any given time,
the algorithm only needs to store two consecutive column multisets.
\begin{proposition}
\label{thm:ed:CDPtimeSpace}
The \emph{CDP algorithm} computes the \emph{eccentricity} $\ecc(x)$ of a
string $x$ of length $n$ over a $k$-ary alphabet in time
$T=\bigO(n^2k\min(3^n,k^n))$ and space $S=\bigO(\min(3^n,k^n))$.
Correspondingly, the \emph{expected distance} $e_k(n)$ can be computed
in time $T=\bigO(n^2k^{n+1}\min(3^n,k^n))$ and space
$S=\bigO(\min(3^n,k^n))$.
\end{proposition}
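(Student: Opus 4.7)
The plan is to (i) verify correctness of Algorithm~\ref{alg:ecc}, (ii) establish the combinatorial bound $|\mathcal{C}_j| \leq \min(3^n, k^j)$ on the number of distinct columns that can ever appear, and (iii) analyze the trie-based representation of each $\mathcal{C}_j$ together with the cost of the two primitive operations \textsc{NextColumn} and \textsc{Insert}.

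For correctness, I would show by induction on $j$ that $\mathcal{C}_j$, as built by the algorithm, equals the multiset whose elements are the $j$-th columns of $\mathbf{M}(x,z)$ as $z$ ranges over $\Sigma_k^j$, each counted with the appropriate multiplicity. The base case at $j = 0$ is immediate from the initialization $(0,1,\ldots,n)$ with multiplicity $1$. The inductive step applies Recurrence~(\ref{eq:edit:distance:recurrence}): for each distinct $\mathbf{c} \in \mathcal{C}_{j-1}$ of multiplicity $\mu(\mathbf{c})$ and each $b \in \Sigma_k$, \textsc{NextColumn}$(x,\mathbf{c},j,b)$ produces the next column using $\xi_{i,j} = [x[i] \neq b]$, and \textsc{Insert} adds $\mu(\mathbf{c})$ to the multiplicity of that column in $\mathcal{C}_j$. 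Since column $j$ of $\mathbf{M}(x,z)$ depends on $z$ only through its last symbol $b$ and the previous column, the multiset of $j$-th columns is exactly what the algorithm produces. Taking $j = n$, Equation~(\ref{eq:CDPoutput}) then yields $\ecc(x)$ by Equation~(\ref{eq:eccentricity}).

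The crux of the resource analysis is the bound on $|\mathcal{C}_j|$. The factor $k^j$ is trivial, since at most $k^j$ prefixes $z \in \Sigma_k^j$ contribute a column. For the $3^n$ factor, I exploit two elementary properties of the DP matrix: (a) $M_{0,j} = j$, so the top entry of each column is determined by $j$; and (b) consecutive differences satisfy $M_{i,j} - M_{i-1,j} \in \{-1,0,1\}$. Property (b) is proved by induction on $i$ (for fixed $j$), using the three-way minimum in the recurrence, or equivalently by observing that appending one character to a prefix of $x$ alters the edit distance by at most $1$. Together, (a) and (b) encode every reachable column by its sequence of $n$ differences in $\{-1,0,1\}$, so the number of distinct columns is at most $3^n$.

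With $\mathcal{C}_j$ represented as a depth-$n$ ternary trie keyed by these difference sequences and leaves storing multiplicities, both \textsc{NextColumn} and \textsc{Insert} run in $\bigO(n)$ time per call. The inner body of the algorithm executes $k |\mathcal{C}_{j-1}|$ times per outer iteration, giving total time $\bigO(n) \cdot k \sum_{j=1}^{n} |\mathcal{C}_{j-1}| = \bigO(n^2 k \min(3^n, k^n))$ after bounding each term by $\min(3^n, k^n)$ and summing $n$ of them. Only $\mathcal{C}_{j-1}$ and $\mathcal{C}_j$ need coexist in memory, so the space is $\bigO(\min(3^n, k^n))$. The bound for $e_k(n)$ then follows from Equation~(\ref{eq:ed:eccentricty2}) by invoking the eccentricity algorithm on each of the $k^n$ strings $x$, multiplying the time by $k^n$ while leaving the space unchanged. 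The only mildly delicate step is the $\pm 1$-difference property of DP columns; everything else is careful bookkeeping.
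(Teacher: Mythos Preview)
Your proposal is correct and follows essentially the same approach as the paper: correctness by induction on $j$, the $\pm 1$ difference property of DP columns yielding the $3^n$ bound, the ternary-trie representation with $\bigO(n)$-time \textsc{NextColumn} and \textsc{Insert}, and the observation that only two consecutive multisets need be stored. You are somewhat more explicit than the paper about the inductive correctness argument and the derivation of property~(b), but the structure and all key ideas coincide.
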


\paragraph{remark}
The CDP algorithm can be easily generalized to handle the case where $\ecc(x)$ is defined with respect to a random string $y$ whose symbols are independently, but not necessarily uniformly, distributed (the probability of the string $y[1]\ldots y[n]$ has the form $\prod_{j=1}^{n}{p_j(y[j])}$). Essentially, rather than maintaining the multiplicity of each column $\mu(\mathbf{c})$, we maintain its probability $\pi(\mathbf{c})$. The rules to obtain $\mathcal{C}_{j}$ from $\mathcal{C}_{j-1}$ are straightforward. The eccentricity is obtained as
$\sum_{\mathbf{c} \in \mathcal{C}_n}{\pi(\mathbf{c})\mathbf{c}[n]}$. This generalized version of the CDP algorithm naturally enables the computation of the expected edit distance between two random strings $x$ and $y$ whose symbols are all mutually independent. The computational bounds remain those stated in Proposition \ref{thm:ed:CDPtimeSpace}.

\subsection{Exploiting symmetries of $\ecc(x)$ in the computation of
  $e_k(n)$}
The edit distance enjoys some useful symmetries, which can be easily
derived from the definition. One is that, if we let $x^R=x[n] \ldots
x[1]$ denote the \emph {reverse} of the string $x=x[1] \ldots x[n]$, then
$d_E(x,y)=d_E(x^R, y^R)$. Another one is that if $\pi:~\Sigma_k
\rightarrow \Sigma_k$ is a permutation of the alphabet symbols and
$\pi(x)$ denotes the string $\pi(x[1]) \ldots \pi(x[n])$, then
$d_E(x,y)=d_E(\pi(x),\pi(y))$.  The following is a simple, but useful
corollary of these properties.

\begin{proposition}
\label{thm:ed:symmetry}
For any $x \in \Sigma_k^n$, we have $\ecc(x^R)=\ecc(x)$. Furthermore,
for any permutation $\pi$ of $\Sigma_k$, we have
$\ecc(\pi(x))=\ecc(x)$.
\end{proposition}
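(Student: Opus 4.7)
The plan is to derive both equalities directly from the two edit-distance symmetries stated immediately before the proposition: reversal invariance $d_E(x,y)=d_E(x^R,y^R)$ and alphabet-permutation invariance $d_E(x,y)=d_E(\pi(x),\pi(y))$ for any permutation $\pi$ of $\Sigma_k$. Combined with Equation~\eqref{eq:eccentricity}, each equality reduces to a change of summation variable via a bijection of $\Sigma_k^n$ onto itself.

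For the reversal claim, I would start from the definition $\ecc(x^R)=k^{-n}\sum_{y\in\Sigma_k^n}d_E(x^R,y)$. Applying the reversal symmetry to each term (with $(x^R)^R=x$), I get $d_E(x^R,y)=d_E(x,y^R)$. Since the map $y\mapsto y^R$ is an involution, hence a bijection on $\Sigma_k^n$, relabeling the summation index gives $\ecc(x^R)=k^{-n}\sum_{y'\in\Sigma_k^n}d_E(x,y')=\ecc(x)$.

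For the permutation claim, I would proceed analogously: expand $\ecc(\pi(x))=k^{-n}\sum_{y\in\Sigma_k^n}d_E(\pi(x),y)$, and for each $y$ write $y=\pi(z)$ with $z=\pi^{-1}(y)$, which is well defined because $\pi$ is a bijection on $\Sigma_k$ and therefore induces a bijection on $\Sigma_k^n$. Then $d_E(\pi(x),\pi(z))=d_E(x,z)$ by the permutation symmetry, so reindexing the sum by $z$ yields $\ecc(\pi(x))=k^{-n}\sum_{z\in\Sigma_k^n}d_E(x,z)=\ecc(x)$.

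There is no substantive obstacle in this argument: the two invariances of $d_E$ are taken as given, and the only real content is to recognize that $y\mapsto y^R$ and $y\mapsto \pi(y)$ are bijections of $\Sigma_k^n$, which legitimizes the re-indexing of the eccentricity sum. If anything, the only minor care required is to keep explicit the distinction between $\pi$ acting on alphabet symbols and the induced componentwise action on $\Sigma_k^n$, so that the invocation of the stated symmetry applies verbatim.
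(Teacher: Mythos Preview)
Your proof is correct and matches the paper's treatment: the paper does not spell out a proof but presents the proposition as an immediate corollary of the two edit-distance symmetries, which is precisely the bijection-and-reindex argument you give.
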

It is useful to define the equivalence class of $x$ as the set of
strings that have the same eccentricity as $x$, due to Proposition
\ref{thm:ed:symmetry}, and denote by $\nu(x)$ the cardinality of such
set.  If $\mathcal{R}_{k,n} \subseteq \Sigma_k^n$ contains exactly one
(representative) member for each equivalence class, then Equation
(\ref{eq:ed:eccentricty2}) can be rewritten as
\begin{align} \label{eq:ed:eccentricty3}
e_k(n) = k^{-n}\sum_{x\in\mathcal{R}_{k,n}}{\nu(x)}{\ecc(x)}.
\end{align}
Computing $e_k(n)$ according to Equation (\ref{eq:ed:eccentricty3})
enables one to reduce the number of strings for which the eccentricity
has to be computed (via the CDP algorithm) by a factor slightly
smaller than $(2k!)$, with a practically appreciable reduction in
computation time.

The strategy outlined in this section has been implemented in C++ and
run on a $32$ core IBM Power7 server. For several alphabet sizes $k$,
we have considered values of $n$ up to a maximum value $n^{ub}_k$,
under the constraint that the running time would not exceed one
week. The resulting values $e_k(n^{ub}_k)$ are presented 
in Table~\ref{tab:upper:bounds}.  For the quaternary alphabet, we
obtain $\alpha_4 \leq 0.6318$, which is rather loose because based
on a small string length, namely $n_4^{ub}=15$. The limitation on the
string length is obviously due to the high complexity of the
algorithm. In contrast, the statistical estimate presented in
Section~\ref{sec:montecarlo} is based on much longer strings and
represents a more accurate approximation of $\alpha_4$, although the
estimate comes with a confidence interval, rather than with a
deterministic guarantee.

We mention that the CDP algorithm could perhaps be improved, by a
constant factor related to the machine word-size, with the bit-vector
approach presented in \cite{Mye99}, or by a logarithmic factor, with
the the Four Russians method in \cite{MP80}. Considering the
exponential nature of the CDP algorithm, however, these approaches are
unlikely to yield substantial improvements of the deterministic upper
bound and we have not pursued them.

\begin{table*}[t]
    \centering
    \caption{Values $\alpha_k(n)$ computed using the Coalesced Dynamic
      Programming algorithm for various alphabets. The string lengths
      $n_k^{ub}$'s have been chosen so that the total time needed to
      compute $\alpha_k(n)$ is roughly the same for each $k$
      (approximately one week on a machine with $32$ cores).}
    \label{tab:upper:bounds}
    \begin{tabular}{ccc}
    \hline
    \\[-3mm]
    $k$  & $n_k^{ub}$ & $\alpha_k(n_k^{ub})$\\
    \\[-3mm]
    \hline
    \\[-3mm]
    $2$  & $24$ & $0.36932$\\
    $3$  & $17$ & $0.53426$\\
    $4$  & $15$ & $0.63182$\\
    $5$  & $13$ & $0.70197$\\
    $6$  & $12$ & $0.75149$\\
    $7$  & $11$ & $0.79031$\\
    $8$  & $11$ & $0.81166$\\
    $16$ & $10$ & $0.89554$\\
    $32$ & $6$  & $0.96588$\\
    \hline
    \end{tabular}
\end{table*}

\section{Lower bounds for $\alpha_k$}
\label{sec:lowerbound}
In this section, we establish a lower bound $\beta_k^*$ to $\alpha_k$,
for each $k\geq 2$. We first characterize $\beta_k^*$ analytically, as
the supremum of a suitably defined set of real numbers, and then
provide an efficient algorithm to compute $\beta_k^*$, within any
desired approximation. We focus on the expression of $e_k(n)$ in terms
of eccentricities, given in the second line of
Equation~(\ref{eq:ed:eccentricty}). We derive lower bounds to
$\ecc(x)$ by ignoring the contribution of the strings inside the ball
of radius $r$ centered at $x$ and by setting to $r+1$ the contribution
of the string outside the same ball.  The objective is to determine
the largest value $r_k^*(n)$ of $r$ for which (it can be shown that)
the ball of radius $r$ contains a fraction of $\Sigma_k^n$ that
vanishes with $n$; then $r_k^*(n)/n$ will converge to a lower bound to
$\alpha_k$.  Below, we formalize this idea and show that we can choose
$r_k^*(n)=\beta_k n$ for suitable values of $\beta_k$ independent of
$n$; this establishes that $\alpha_k \geq \beta_k^*$, where
$\beta_k^*$ is the supremum of such values.
It is shown that $\lim_{k \to \infty} \beta_k^* =1$ and $\beta_k^*
\leq \alpha_k \leq 1-\frac{1}{k}$, whence, for $k$ large enough,
$\beta_k^*$ provides an increasingly accurate estimate of
$\alpha_k$. Thus, we turn our attention to translating the analytical
characterization of $\beta_k^*$ into an efficient numerical algorithm
for its computation, a translation which is not completely
straightforward. Finally, we present the numerical values of our lower
bound for a sample of alphabet sizes.

\subsection{Lower bounds to $\ecc(x)$ from
  upper bounds to ball size}
In this subsection, we derive lower bounds to $\ecc(x)$ based on upper
bounds to the size of the ball of radius $r$ centered at $x$.  Such
bounds hold for every string $x$, but depend only upon the length $n$
of $x$. They will be used to compute lower bounds to $\alpha_k$.
\begin{definition}
For a string $x \in \Sigma_k^n$, the \emph{ball} of radius $r$
centered at $x$ is defined as the set of strings having edit distance at
most $r$ from $x$:
\begin{align*}
B_{k,r}(x) = \{ y \in \Sigma_k^n : d_E(x,y) \leq r \} .
\end{align*}
Similarly, the \emph{shell} of radius $r$ centered at $x$ is defined
as the set of strings having edit distance exactly $r$ from $x$:
\begin{align*}
S_{k,r}(x) = \{ y \in \Sigma_k^n : d_E(x,y) = r \}.
\end{align*}
\end{definition}
The next lemma shows how an upper bound to the ball size can provide a
lower bound to the eccentricity, thus motivating the derivation of
such an upper bound.

\begin{lemma}
\label{thm:volume}
Let $u_{k,r}(x) \geq |B_{k,r}(x)|$; then for every $r^*=0,1,\ldots,n$:
\begin{align} \label{eq:lb:alpha:general}
 \ecc(x) &\geq r^*\left(1 - k^{-n}u_{k,r^*}(x)\right).
\end{align}
\end{lemma}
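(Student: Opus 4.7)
The plan is to start directly from the definition
\[
\ecc(x) = k^{-n}\sum_{y \in \Sigma_k^n} d_E(x,y),
\]
and obtain the bound by a Markov-type argument that discards the contributions of strings close to $x$ and replaces the contribution of each distant string by a common lower bound.

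Concretely, I would first partition $\Sigma_k^n$ into the ball $B_{k,r^*}(x)$ and its complement. For $y \in B_{k,r^*}(x)$, the distance $d_E(x,y)$ is merely discarded (lower bounded by $0$). For $y \notin B_{k,r^*}(x)$, by definition of the ball we have $d_E(x,y) > r^*$, hence $d_E(x,y) \geq r^*$. Summing these two bounds,
\[
\sum_{y \in \Sigma_k^n} d_E(x,y) \;\geq\; r^*\bigl(k^n - |B_{k,r^*}(x)|\bigr).
\]
Dividing by $k^n$ produces $\ecc(x) \geq r^*\bigl(1 - k^{-n}|B_{k,r^*}(x)|\bigr)$.

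Finally, since by hypothesis $u_{k,r^*}(x) \geq |B_{k,r^*}(x)|$ and $r^* \geq 0$, the quantity $r^*\bigl(1 - k^{-n} \cdot (\,\cdot\,)\bigr)$ is nonincreasing in its second argument, so replacing $|B_{k,r^*}(x)|$ by the upper bound $u_{k,r^*}(x)$ only weakens the estimate, yielding Inequality~\eqref{eq:lb:alpha:general}. There is no real obstacle in this argument; it is a one-line averaging bound, and the slack that arises from using $r^*$ rather than $r^*+1$ on the complement of the ball is harmless for the intended applications, where $r^*$ will be chosen of order $n$.
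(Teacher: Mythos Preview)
Your argument is correct and is essentially the same Markov-type averaging bound as the paper's proof; the paper merely phrases the partition in terms of shells and first obtains the slightly sharper factor $r^*+1$ before relaxing it to $r^*$. No substantive difference.
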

\begin{proof}
By partitioning $\Sigma_k^n$ into shells centered at $x$, we can rewrite (\ref{eq:eccentricity}) as
\begin{align*}
  \ecc(x) &= k^{-n}\sum_{r=0}^{r^*}{r|S_{k,r}(x)|} +
  k^{-n}\sum_{r=r^*+1}^{n}{r|S_{k,r}(x)|}\\
        &\geq k^{-n}(r^*+1)\sum_{r=r^*+1}^{n}{|S_{k,r}(x)|} \\
        &= k^{-n}(r^*+1)\left(|B_{k,n}(x)|-|B_{k,r^*}(x)| \right)\\
        &> r^*\left(1-k^{-n}|B_{k,r^*}(x)|\right) \\
  &\geq r^*\left(1-k^{-n}u_{k,r^*}(x)\right),
\end{align*}
where, in the last two steps, the relationships $|B_{k,n}(x)|=k^n$
and $|B_{k,r^*}(x)| \leq u_{k,r^*}(x)$ have been utilized.
\end{proof}

The bound $u_{k,r^*}(x)$ we derive below depends only upon the length
$n$ of $x$ so that it can be written as $u_{k,r^*}(n)$, with a
harmless overloading of notation. Then, simple manipulations of
Equation (\ref{eq:ed:eccentricty}) show that
\begin{align} \label{eq:alpha-k:lb}
  \alpha_k(n)=\frac{e_k(n)}{n}\geq\frac{r^*}{n}
   \left(1 - k^{-n}u_{k,r^*}(n)\right);\\
     \label{eq:alpha:lb}
\alpha_k = \lim_{n\rightarrow \infty} \alpha_k(n) \geq
\lim_{n\rightarrow\infty}{\frac{r^*}{n}\left(1-k^{-n}u_{k,r^*}(n)\right)}.
\end{align}
We will show that, for suitable values of $\beta_k$, letting
$r^*=\beta_k n$, the quantity $k^{-n}u_{k,\beta_k n}(n)$ converges to
$0$ whence, by Equation (\ref{eq:alpha:lb}), $\alpha_k \geq \beta_k$.

\subsection{Upper bounds on ball size}
To apply Lemma \ref{thm:volume}, we need an upper bound to
$|B_{k,r}(x)|$.  The next proposition develops such an upper bound by
(i) showing that every string $y \in B_{k,r}(x)$ can be obtained from
$x$ by applying a script of certain type with cost $r$ or $r-1$ and
(ii) counting such scripts.  In general, the upper bound will not be
tight, because the count may include multiple scripts that
produce $y$ from $x$.

\begin{proposition}
\label{thm:scripts}
For any $x\in\Sigma_k^n$ and for any $r=1,\ldots,n$
\begin{align} \label{eq:script:bound}
|B_{k,r}(x)| \leq (k-1)^r \sum_{d=0}^{\lfloor r/2 \rfloor}{
  \binom{n}{d}^2\binom{n-d+1}{r-2d}\left(\frac{k}{(k-1)^2}\right)^d} .
\end{align}
\end{proposition}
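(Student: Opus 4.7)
The plan is to upper-bound $|B_{k,r}(x)|$ by counting simple edit scripts that can produce strings in the ball, since every $y \in B_{k,r}(x)$ is the output of at least one such script.

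By Lemma \ref{lemma:simpleScript}, a simple script with $d$ deletions, $d$ insertions, and $s$ substitutions has cost $2d+s$, whose parity equals that of $s$. The first step is to \emph{normalize} to just two possible cost values: I would show that every $y \in B_{k,r}(x)$ is the output of a simple script of cost exactly $r$ or exactly $r-1$. Starting from an optimal simple script of cost $r'=d_E(x,y)\leq r$, one repeatedly selects a matched pair $(x[i],y[j_\ell])$ (with $x[i]=y[j_\ell]$) and replaces it by a deletion of $x[i]$ together with an insertion of a new cell of content $x[i]$ at position $j_\ell$ of $y$. Each such replacement raises the cost by $2$, leaves $y$ unchanged, and preserves simplicity. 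Since the optimal script has $n-r'$ matched cells, the cost can be raised by any even amount up to $2(n-r')$, which is enough because $r+r'\leq 2n$. In particular, one can reach whichever of $r$ and $r-1$ shares the parity of $r'$.

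The second step is to count simple scripts with prescribed parameters. Fixing $d$ and $s$, the number of such scripts starting from $x$ is bounded by
\[
N_{d,s}=\binom{n}{d}^2 \binom{n-d}{s}\, k^d\, (k-1)^s,
\]
where the factors count, respectively, the $d$ deletion positions in $x$; the $d$ positions occupied by inserted characters in the length-$n$ output; the $s$ substituted positions among the $n-d$ retained cells; the $(k-1)$ substitution alternatives per substitution; and the $k$ insertion alternatives per insertion. Writing $N(c)=\sum_{d=0}^{\lfloor c/2\rfloor} N_{d,c-2d}$ for the total count at cost $c$, the first step yields $|B_{k,r}(x)|\leq N(r)+N(r-1)$.

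The third step is algebraic consolidation into the stated form. Applying Pascal's identity $\binom{n-d+1}{r-2d}=\binom{n-d}{r-2d}+\binom{n-d}{r-2d-1}$ and re-indexing the cost-$(r-1)$ sum by $s'=r-1-2d$, one verifies
\[
N(r)+(k-1)\,N(r-1)=(k-1)^r\sum_{d=0}^{\lfloor r/2\rfloor}\binom{n}{d}^2\binom{n-d+1}{r-2d}\left(\frac{k}{(k-1)^2}\right)^d.
\]
Since $k\geq 2$, the right-hand side dominates $N(r)+N(r-1)$, giving the claimed bound. The main obstacle is the padding argument in the first step—keeping the script simple and ensuring enough matched cells remain for all required paddings; the remaining bookkeeping and binomial manipulation are routine.
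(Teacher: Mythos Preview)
Your proposal is correct and follows essentially the same approach as the paper: both reduce to counting (what the paper calls \emph{canonical}) simple scripts of cost $r$ and $r-1$, use the same padding of matches by delete/insert pairs to reach one of these two costs, and combine the two sums via Pascal's identity after absorbing a factor $(k-1)\geq 1$. The only minor slip is that an optimal script has \emph{at least} $n-r'$ matched cells (not exactly $n-r'$), but your subsequent inequality $r+r'\leq 2n$ handles this correctly anyway.
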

\begin{proof}
We call \emph{canonical simple script (CSS)} a simple script (see
Section~\ref{sec:notationsAndDef}) where all deletions precede all
substitutions, the latter precede all insertions and, within each
type of operation, cells are processed from left to right. For any
script transforming $x$ into $y$, there is a CSS of non greater cost
which achieves the same transformation. Therefore, if $d_E(x,y)=r$,
then there is a CSS of cost $r$ which, applied to $x$, produces
$y$. Each CSS of cost $r\in\{0,1,\ldots,n\}$ can be constructed by
a sequence of choices, as specified below (shown within
square brackets is the number of possible choices):
\begin{itemize}
    \item $d\in\{0,1,\ldots, \lfloor r/2 \rfloor\}$
    \item $d$ positions to delete from $x$ [$\binom{n}{d}$]
    \item $(r-2d)$ of the remaining $(n-d)$ positions to
        be substituted [$\binom{n-d}{r-2d}$]
    \item $d$ positions to insert in $y$ [$\binom{n}{d}$]
    \item the symbols in the substitutions [$(k-1)^{r-2d}$]
    \item the symbols in the insertions [$k^d$]
\end{itemize}
Straightforwardly, the number of CSSs of cost $r$ is
\begin{align} \label{eq:cost:r:scripts}
s_{k,r} = \sum_{d=0}^{\lfloor r/2 \rfloor}{\binom{n}{d}^2\binom{n-d}{r-2d}}
     (k-1)^{r-2d}k^d .
\end{align}

Next, we prove that any $y\in B_{k,r}(x)$ can be obtained from $x$ via
a simple script of cost $r-1$ or $r$. Let $r'=d_E(x,y)\leq r$.
When $r'=r$, an optimal script of cost $r'$ is also a simple script of
the same cost. Hence, the canonical version of such optimal
script can be used to obtain $y$ from $x$. The same reasoning applies
to the case $r'=r-1$. Finally, for $r'<r-1$, consider an optimal CCS
of cost $r'$ that transforms $x$ into $y$.  By augmenting this script
with $\lfloor (r-r')/2 \rfloor$ pairs of deletions and insertions,
each pair acting on a matched position, we obtain a simple script of
cost $r$, if $r-r'$ is even, or of cost $r-1$ if $r-r'$ is odd.  The
prescribed augmentation is always possible since the number of matches
is at least $n-r'\geq r-r' \geq (r-r')/2$.

The thesis is then established by the following chain of inequalities:
\begin{align*}
|B_{k,r}&(x)| \leq s_{k,r} + s_{k,r-1}\\
\leq &\sum_{d=0}^{\lfloor r/2 \rfloor}{\binom{n}{d}^2\binom{n-d}{r-2d}}
     (k-1)^{r-2d}k^d +
\sum_{d=0}^{\lfloor (r-1)/2 \rfloor}{\binom{n}{d}^2\binom{n-d}{r-1-2d}}
     (k-1)^{r-1-2d}k^d~~~~~ \\
\leq&
\sum_{d=0}^{\lfloor r/2 \rfloor}{\binom{n}{d}^2\binom{n-d}{r-2d}}
     (k-1)^{r-2d}k^d +
\sum_{d=0}^{\lfloor r/2 \rfloor}{\binom{n}{d}^2\binom{n-d}{r-1-2d}}
     (k-1)^{r-2d}k^d  \\
=&
\sum_{d=0}^{\lfloor r/2 \rfloor}{\binom{n}{d}^2\binom{n-d+1}{r-2d}}
     (k-1)^{r-2d}k^d,
\end{align*}
where we have made use of the identity
\begin{align*}
\binom{n-d}{r-2d} + \binom{n-d}{r-1-2d} = \binom{n-d+1}{r-2d}.
\end{align*} 
\end{proof}

\subsection{Asymptotic behavior of ball size and bounds for $\alpha_k$}
The next results show that the right hand side of
Inequality~(\ref{eq:script:bound}), divided by $k^n$, is bounded by a
sum of exponential functions whose exponents all vanish with $n$, when
the ball radius is set to $\beta_k n$, with $\beta_k$ satisfying
certain conditions (depending upon $k$). Intuitively, this means that,
except for a vanishing fraction, all strings in $\Sigma_k^n$ lie
outside of the ball $B_{k,\beta_k n}(x)$ whence, by Equation
(\ref{eq:alpha:lb}), $\alpha_k\geq \beta_k$.


\begin{definition}\label{def:H(x)}
Let $H(x)$, with $0 \leq x \leq 1$, denote the \emph{binary entropy
function}
\begin{align*}
H(x) = -x \log_2{x} - (1-x)\log_2{(1-x)},
\end{align*}
and let $H'(x)=\frac{dH}{dx}=\log_2\left(\frac{1-x}{x}\right)$ and
  $H''(x)=\frac{d^2H}{d^2x}=-\frac{\log_2 e}{x(1-x)}$ denote its first
  and second derivatives.
\end{definition}

\begin{definition}
For $\beta \in [0,1]$ and $\delta \in [0,\beta/2]$, we define the
function
\begin{align} \label{eq:g:exponent}
g_k(\beta,\delta) 
    =& (\beta-2\delta)\log_2{(k-1)}-(1-\delta)\log_2{k}\nonumber \\
    &+2H(\delta) + (1-\delta)H\left(\frac{\beta-2\delta}{1-\delta}\right).
\end{align}
\end{definition}
\begin{lemma}
\label{thm:exponential}
Let $u_{k,r}(n)$ be given by the right hand side of
(\ref{eq:script:bound}) and $g_k(\beta,\delta)$ be given by
(\ref{eq:g:exponent}). For every $\beta\in[0,1]$,
\begin{align} \label{eq:exponential:bound}
k^{-n}u_{k,\beta n}(n) \leq (n+1)\sum_{d=0}^{\lfloor \beta n/2 \rfloor}
    {2^{n g_k\left(\beta,\frac{d}{n}\right)}}.
\end{align}
\end{lemma}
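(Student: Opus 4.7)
The plan is to start from the definition
\begin{equation*}
k^{-n} u_{k, \beta n}(n) = \sum_{d=0}^{\lfloor \beta n/2 \rfloor}
\binom{n}{d}^2 \binom{n-d+1}{\beta n - 2d}\,(k-1)^{\beta n - 2d}\,k^{-(n-d)},
\end{equation*}
obtained by dividing the right hand side of Inequality~(\ref{eq:script:bound}) by $k^n$ and distributing the factor $(k/(k-1)^2)^d$. The goal is then to bound each summand by $(n+1)\cdot 2^{n g_k(\beta,\, d/n)}$ and sum. Writing $\delta = d/n$, the desired exponent decomposes as the sum of four contributions that we will match one by one: two powers of $k$-related quantities, which handle the factors $(k-1)^{\beta n - 2d}$ and $k^{-(n-d)}$ via $(k-1)^{\beta n - 2d} = 2^{n(\beta - 2\delta)\log_2(k-1)}$ and $k^{-(n-d)} = 2^{-n(1-\delta)\log_2 k}$; and two entropy terms, which come from applying the standard binomial inequality $\binom{m}{j}\leq 2^{m H(j/m)}$ (valid for $0\leq j\leq m$) to the binomial coefficients.

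The factor $\binom{n}{d}^2$ is handled immediately by $\binom{n}{d}^2 \leq 2^{2n H(\delta)}$. The main obstacle is the binomial $\binom{n-d+1}{\beta n - 2d}$: applying the entropy bound directly would produce the exponent $(n-d+1)H\!\left(\frac{\beta n - 2d}{n-d+1}\right)$, which does not match the target $n(1-\delta)H\!\left(\frac{\beta - 2\delta}{1-\delta}\right)$. The way around this is to use the identity
\begin{equation*}
\binom{n-d+1}{\beta n - 2d} \;=\; \frac{n-d+1}{(n-d+1)-(\beta n - 2d)}\binom{n-d}{\beta n - 2d},
\end{equation*}
valid because $\beta\leq 1$ and $d\geq 0$ force $\beta n - 2d \leq n-d$, so the denominator is at least $1$. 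The prefactor is bounded crudely by $n+1$, which precisely accounts for the $(n+1)$ in the statement; the remaining binomial then admits the entropy bound $\binom{n-d}{\beta n - 2d}\leq 2^{(n-d) H\left(\frac{\beta n - 2d}{n-d}\right)} = 2^{n(1-\delta) H\left(\frac{\beta - 2\delta}{1-\delta}\right)}$, since $(\beta n - 2d)/(n-d) = (\beta - 2\delta)/(1-\delta)$.

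Multiplying the four bounds, the exponent of $2$ is exactly $n g_k(\beta, \delta)$ with $\delta = d/n$, and the prefactor $(n+1)$ survives unchanged, so each summand is at most $(n+1)\cdot 2^{n g_k(\beta,\, d/n)}$. Summing over $d=0,\ldots,\lfloor\beta n/2\rfloor$ yields Inequality~(\ref{eq:exponential:bound}). The only care points are (i) checking that $0\leq \beta n - 2d\leq n-d$ throughout the summation range, which is immediate from $d\leq \beta n/2$ and $\beta\leq 1$, and (ii) the boundary cases $d=0$ or $\beta n - 2d \in \{0, n-d\}$, where $H$ is evaluated at $0$ or $1$ with $H(0)=H(1)=0$ and the inequalities reduce to trivial ones.
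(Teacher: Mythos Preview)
Your proof is correct and follows essentially the same approach as the paper: both reduce $\binom{n-d+1}{\beta n-2d}$ to $\binom{n-d}{\beta n-2d}$ via the identity $\binom{m+1}{j}=\frac{m+1}{m+1-j}\binom{m}{j}$, bound the resulting prefactor by $n+1$, and then apply $\binom{m}{j}\leq 2^{mH(j/m)}$ to all binomials to recover the exponent $n\,g_k(\beta,d/n)$. Your presentation is slightly more explicit in tracking the four additive pieces of $g_k$, but the argument is the same.
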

\begin{proof}
Using the relation
\begin{align*}
\binom{n-d+1}{r-2d} = \frac{n-d+1}{n-r+d+1}\binom{n-d}{r-2d}
    \leq (n+1)\binom{n-d}{r-2d},
\end{align*}
the bound $\binom{n}{k}\leq
2^{nH(k/n)}$ (see, \emph{e.g.}, Eq. (5.31) in \cite{Spe14}), and defining
$\beta=r/n$, we get
\begin{align*}
k^{-n}u_{k,r}(n)\leq&
    k^{-n} (k-1)^{r} \sum_{d=0}^{\lfloor r/2 \rfloor}{
    \binom{n}{d}^2
    \binom{n-d+1}{r-2d}
    \left(\frac{k}{(k-1)^2}\right)^d}\\
\leq&
    (n+1)\sum_{d=0}^{\lfloor r/2 \rfloor}
        {2^{2n H\left(\frac{d}{n}\right) + 
    (n-d)H\left(\frac{r-2d}{n-d}\right) +
    (r-2d)\log_2{(k-1)} +
    (d-n)\log_2{k}}} \\
=&  (n+1)\sum_{d=0}^{\lfloor \beta n/2 \rfloor}{
    2^{ng_k\left(\beta,\frac{d}{n}\right)}}.
\end{align*}
\end{proof}

\begin{theorem} \label{thm:bound:eval}
For integer $k \geq 2$ and real $\beta \in [0,1]$, define the real function
\begin{align}\label{eq:G:g}
G_k(\beta) = \max_{0\leq \delta \leq \beta/2}{g_k(\beta,\delta)}
\end{align}
and the set of real numbers
\begin{align}\label{eq:A:G}
A_k = \{\beta \in [0,1] : G_k(\beta)<0\}.
  \end{align}
Then,
\begin{align*}
  \alpha_k \geq \beta_k^{*} := \sup A_k .
\end{align*}
\end{theorem}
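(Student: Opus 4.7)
The plan is to show that $\alpha_k \geq \beta$ for every individual $\beta \in A_k$; once this is established, taking the supremum over $A_k$ immediately yields $\alpha_k \geq \sup A_k = \beta_k^*$.

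To prove the pointwise statement, fix any $\beta \in A_k$, so that $G_k(\beta)<0$. I would apply Equation~(\ref{eq:alpha:lb}) with the choice $r^{*}=r^{*}(n)=\lfloor \beta n\rfloor$, which is a legal value in $\{0,1,\ldots,n\}$. Since $r^{*}(n)/n \to \beta$, the task reduces to showing that
\begin{equation*}
k^{-n}\,u_{k,r^{*}(n)}(n) \;\longrightarrow\; 0 \qquad (n\to\infty),
\end{equation*}
because then $\alpha_k \geq \lim_{n} \tfrac{r^{*}(n)}{n}\bigl(1-k^{-n}u_{k,r^{*}(n)}(n)\bigr) = \beta$.

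To bound $k^{-n}u_{k,r^{*}(n)}(n)$, I would invoke Lemma~\ref{thm:exponential} with $\beta_n := r^{*}(n)/n \in [0,1]$, so that $r^{*}(n)=\beta_n n$, yielding
\begin{equation*}
k^{-n}\,u_{k,r^{*}(n)}(n) \;\leq\; (n+1)\sum_{d=0}^{\lfloor \beta_n n/2\rfloor}{2^{\,n\,g_k(\beta_n,\,d/n)}}.
\end{equation*}
For every $d$ in the summation range, $d/n \in [0,\beta_n/2]$, hence by Definition~(\ref{eq:G:g}) we have $g_k(\beta_n,d/n) \leq G_k(\beta_n)$. Continuity of $g_k$ on the compact set $\{(\beta,\delta):0\leq\delta\leq\beta/2\}$, combined with the continuity of the feasible-set correspondence $\beta\mapsto [0,\beta/2]$, gives continuity of $G_k$ (Berge's maximum theorem). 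Since $G_k(\beta)<0$ and $\beta_n \to \beta$, for all $n$ large enough we have $G_k(\beta_n) \leq G_k(\beta)/2 < 0$. Thus
\begin{equation*}
k^{-n}\,u_{k,r^{*}(n)}(n) \;\leq\; (n+1)\bigl(\lfloor n/2\rfloor+1\bigr)\,2^{\,n\,G_k(\beta)/2},
\end{equation*}
which is $O(n^2)\,2^{n G_k(\beta)/2}$ and so decays exponentially in $n$, as needed.

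The principal technical point, and the only step not completely mechanical, is the transfer from $G_k(\beta_n)$ to $G_k(\beta)$; this is handled by continuity of $G_k$, which in turn requires a brief justification via Berge's theorem (or equivalently by a direct uniform-continuity argument on $g_k$). The rest is a straightforward assembly: Lemma~\ref{thm:exponential} provides the exponential upper bound on the ball-volume ratio, the definition of $A_k$ guarantees a negative exponent, and Equation~(\ref{eq:alpha:lb}) translates the vanishing volume into the desired lower bound on $\alpha_k$. Taking the supremum over $\beta \in A_k$ on both sides of $\alpha_k \geq \beta$ closes the argument.
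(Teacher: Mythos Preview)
Your proof is correct and follows the same overall strategy as the paper: show $\alpha_k \geq \beta$ for each $\beta \in A_k$ by applying Lemma~\ref{thm:exponential} and Equation~(\ref{eq:alpha:lb}), then pass to the supremum. The paper simply sets $r^{*}=\beta n$ and bounds each exponent by $G_k(\beta)$ directly, glossing over the integrality of $r^{*}$; you instead take $r^{*}=\lfloor\beta n\rfloor$ and close the resulting $\beta_n\to\beta$ gap via continuity of $G_k$ (Berge's theorem), which is a legitimate and slightly more careful treatment of the same argument.
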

\begin{proof}
First, we observe that the definition of $G_k(\beta)$ is well posed;
in fact, for any fixed $\beta\in[0,1]$, the function $g_k(\beta,
\delta)$ is bounded and continuous with respect to $\delta$, hence it
attains a maximum value in the compact set $0\leq \delta \leq \beta/2$
(by Weierstrass Theorem).

Second, we observe that $A_k$ is not empty, since $G_k(0)<0$.  In
fact, when $\beta=0$, the condition $\delta \in [0,\beta/2]$ is
satisfied only by $\delta=0$, and $g_k(0,0){=}-\log_2{k}< 0$, for any
$k\geq2$. Finally, since $A_k \subseteq [0,1]$, then $\sup{A_k}\leq
1$.

For $\beta \in A_k$, letting $f(n) = (n+1)\left(\left\lfloor
\frac{\beta n}{2} \right\rfloor +1\right)$, we see from Lemma
\ref{thm:exponential} that
\begin{align*}
k^{-n}u_{k,\beta n}(n)
    &\leq (n+1)\sum_{d=0}^{\lfloor \beta n/2 \rfloor}
    {2^{ng_k\left(\beta,\frac{d}{n}\right)}} 
    \leq f(n)2^{nG_k(\beta)},
\end{align*}
where we have used the relation $g_k\left(\beta,\frac{d}{n}\right)
\leq G_k(\beta)$. The latter follows from the definition of
$G_k(\beta)$ and the fact that, in each of the $\lfloor \frac{\beta
  n}{2} \rfloor +1$ terms of the summation, $0 \leq \frac{d}{n} \leq
\beta/2$. Taking now the limit in (\ref{eq:alpha:lb}) with $r^* =\beta
n$ yields:

\begin{align*}
\alpha_k \geq \lim_{n \rightarrow \infty}{\beta
    \left(1-f(n)2^{nG_k(\beta)}\right)} = \beta ,
\end{align*}
as $f(n)=\bigO(n^2)$ and $2^{nG_k(\beta)}$ is a negative exponential.
In conclusion, since $\alpha_k$ is no smaller than any member of
$A_k$, it is also no smaller than $\beta_k^*=\sup{A_k}$.
\end{proof}

As a first application of Theorem \ref{thm:bound:eval}, we obtain an
analytical lower bound to each $\alpha_k$. This bound is generally not
the best that can be obtained numerically from the theorem, but does
provide some insight. In particular, it shows that, as $k$ grows,
both $\beta_k^*$ and $\alpha_k$ approach~$1$.
\begin{proposition}\label{prop:growing-k}
Let the constant $M$ be defined as
\begin{align*}
      M = \max_{0{\leq}\beta{\leq}1,0{\leq}\delta{\leq}\beta/2}
      {2H(\delta) +
        (1-\delta)H\left(\frac{\beta-2\delta}{1-\delta}\right)}
      \approx 2.52.
\end{align*}
Then, for any $k \geq 3$, we have
\begin{align*}
     \alpha_k \geq \hat{\beta}_k = 1-\frac{M}{\log_2 (k-1)}.
\end{align*}
Two obvious corollaries are that $\lim_{k \rightarrow \infty}
\beta_k^* = 1$ and $\lim_{k \rightarrow \infty} \alpha_k = 1$.
\end{proposition}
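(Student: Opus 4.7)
The plan is to deduce Proposition~\ref{prop:growing-k} from Theorem~\ref{thm:bound:eval}: it suffices to show that $\hat{\beta}_k \in A_k$, i.e., that $G_k(\hat{\beta}_k) < 0$, from which $\alpha_k \geq \beta_k^* \geq \hat{\beta}_k$ follows at once.

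First, I would dispose of the trivial regime. Whenever $\log_2(k-1) \leq M$ (i.e., $3 \leq k \leq 6$, given the numerical value $M \approx 2.52$), we have $\hat{\beta}_k \leq 0$, so the claim is immediate from $\alpha_k \geq 0$. For the remaining regime $\hat{\beta}_k \in (0,1]$, I would substitute $\beta = \hat{\beta}_k$ into the definition of $g_k$ and exploit the cancellation built into the definition of $\hat{\beta}_k$. The key observation is that the entropy block $2H(\delta) + (1-\delta) H\!\left(\frac{\hat{\beta}_k - 2\delta}{1-\delta}\right)$ is bounded above by $M$ uniformly for $\delta \in [0, \hat{\beta}_k/2]$, directly by the definition of $M$ (since the pair $(\hat{\beta}_k, \delta)$ lies in the feasible set of the maximum defining $M$).

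Setting $L = \log_2(k-1)$ and using $(\hat{\beta}_k - 2\delta) L = L - M - 2\delta L$, the additive $-M$ coming from $\hat{\beta}_k$ cancels against the $+M$ upper bound on the entropy block, and after collecting the remaining terms the bound becomes
\begin{align*}
g_k(\hat{\beta}_k, \delta) \leq -[\log_2 k - \log_2(k-1)] - \delta\,[2\log_2(k-1) - \log_2 k].
\end{align*}
The first bracket is strictly positive for every $k \geq 2$, and the second is non-negative for $k \geq 3$ (since $(k-1)^2 \geq k$ holds for $k \geq 3$). Hence $g_k(\hat{\beta}_k, \delta) \leq -[\log_2 k - \log_2(k-1)] < 0$ uniformly in $\delta$, whence $G_k(\hat{\beta}_k) < 0$ and Theorem~\ref{thm:bound:eval} delivers $\alpha_k \geq \hat{\beta}_k$.

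Both corollaries then follow at once: since $M$ is a finite absolute constant, $\hat{\beta}_k \to 1$ as $k \to \infty$, which forces $\beta_k^* \to 1$ and, in view of $\hat{\beta}_k \leq \beta_k^* \leq \alpha_k \leq 1$, also $\alpha_k \to 1$. I do not expect any serious obstacle: the argument is essentially algebraic bookkeeping, and the one place where care is needed is in exhibiting the $\log_2 k$ versus $\log_2(k-1)$ gap that provides the strict negativity once the entropy terms have been absorbed into $M$. The restriction $k \geq 3$ is precisely what makes the coefficient of $\delta$ non-negative, so the pointwise bound on $g_k(\hat{\beta}_k,\delta)$ is monotone and attains its maximum (over the compact interval) at $\delta = 0$, where it is strictly negative.
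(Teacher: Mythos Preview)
Your proof is correct and essentially identical to the paper's: both dispose of the trivial case $\hat{\beta}_k \leq 0$ and then, for the remaining $k$, verify $G_k(\hat{\beta}_k) < 0$ by substituting $\hat{\beta}_k$ into $g_k$ and arriving at exactly the same decomposition into the $\log_2 k - \log_2(k-1)$ gap, the $\delta$-coefficient $2\log_2(k-1)-\log_2 k$, and the slack $M$ minus the entropy block. The only cosmetic difference is that the paper writes this as an equality with three non-negative brackets, whereas you absorb the third bracket into an inequality by invoking the definition of $M$ directly.
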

\begin{proof}
For $k < 7$, $\hat{\beta}_k < 0$, thus, $\alpha_k \geq \hat{\beta}_k$
is trivially satisfied (recall that $\alpha_k \geq 0$). Hence, we
assume, for the remainder of the proof, $k \geq 7$ so that
$\hat{\beta}_k \geq 0$, leading to a well posed definition of $M$,
since it involves a maximum of a bounded, continuous function over a
compact domain.  We need to show that $g_k(\hat{\beta}_k,\delta) < 0$
for any $\delta\in[0,\hat{\beta}_k/2]$. By plugging the definition of
$\hat{\beta}_k$ in (\ref{eq:g:exponent}), after simple manipulations,
we obtain
\begin{align*}
  g_k(\hat{\beta}_k,\delta) =& -\left[\log_2 k -
    \log_2(k-1)\right]-\left[\delta(2\log_2(k-1)-\log_2 k)\right]\\
  &-\left[M- \left(2H(\delta) +
    (1-\delta)H\left(\frac{\hat{\beta}_k-2\delta}{1-\delta}\right)\right)\right]<0.
\end{align*}
It is straightforward to check that the terms within each of
the first two pairs of square brackets are positive for every
$k \geq 3$, while the expression within the third pair of square
brackets is non-negative (by the definition of $M$).
Finally, we clearly have $\hat{\beta}_k \leq \beta_k^* \leq \alpha_k
<1$, hence the stated limits are implied by the fact that
$\lim_{k \rightarrow \infty} \hat{\beta}_k = 1$.
\end{proof}

The corollary $\lim_{k \rightarrow \infty} \alpha_k=1$ also follows
from the result $\lim_{k \rightarrow \infty} \gamma_k=0$ (Theorem 1
in \cite{CS75}), together with the relationship
$1-\gamma_k\leq\alpha_k$, already mentioned in the introduction.

\subsection{Numerical computation of the lower bound}
In this subsection, we develop numerical
procedures, both to decide whether a specific (rational) number
$\beta$ qualifies as a lower bound to $\alpha_k$, according to Theorem
\ref{thm:bound:eval} (\emph{i.e.}, whether $g_k(\beta,\delta)<0$ for
every $\delta\in[0,\beta/2]$ or, equivalently, $G_k(\beta)<0$) and to
obtain the lower bound that subsumes all the $\beta$'s (that is,
$\beta_k^*=\sup{A_k}$).  The procedures presented below are based on
some properties of the functions $g_k(\beta,\delta)$ and $G_k(\beta)$,
which will be established, along the following lines.
\begin{itemize}
    \item We show
analytically that $g_k(\beta,\delta)$, when viewed as a function of
$\delta$, for some fixed value of $\beta$, achieves its maximum at a
unique point in its domain. By a bisection-like procedure, driven by
the sign of the derivative $\frac{\partial g_k}{\partial \delta}$,
lower and upper bounds to such maximum, $G_k(\beta)$, can be computed
with any desired accuracy.
\item We then develop a partial procedure
that returns $\sign(G_k(\beta))$, when $G_k(\beta)\not=0$, and does
not halt otherwise.
\item Finally, we show, analytically, that
$G_k(\beta)$ is an increasing function taking both negative and
positive values, so that the point $\beta_k^*=\sup A_k$ is the only
root of the equation $G_k(\beta)=0$ and can be (arbitrarily)
approximated by a bisection-like procedure, driven by the sign of
$G_k(\beta)$.
\end{itemize}

\subsubsection{Computing $G_k(\beta)$}
For simplicity, in this subsection, we adopt an idealized
\emph{infinite precision model}, where we assume that (i) the (real)
numbers arising throughout the computation are represented with
infinite precision and (ii) the results of the four basic
arithmetic operations, of comparisons, and of logarithms are computed
exactly.  We will discuss how to deal with the somewhat subtle issues
of finite precision in the next subsection.

We begin by identifying an interval that contains $G_k(\beta)$ and then
show how this interval can be made arbitrarily small.
\begin{proposition}\label{prop:G-beta-bound}
  Let $0 \leq \beta \leq 1$ and let $0<\delta^l<\delta^r<\beta/2$ be
  such that $0< \frac{\partial g_k}{\partial \delta}(\beta,\delta^l) <
  \infty$ and $-\infty < \frac{\partial g_k}{\partial
    \delta}(\beta,\delta^r) <0$. Let
\begin{align*}
  y(\delta^l,\delta^r)=\frac{\frac{\partial g_k}{\partial
      \delta}(\beta,\delta^l) \frac{\partial g_k}{\partial
      \delta}(\beta,\delta^r) (\delta^l -\delta^r)+ \frac{\partial
      g_k}{\partial \delta}(\beta,\delta^l) g_k(\beta,\delta^r)-
    \frac{\partial g_k}{\partial \delta}(\beta,\delta^r)
    g_k(\beta,\delta^l)}
  {\frac{\partial g_k}{\partial \delta}(\beta,\delta^l)-
\frac{\partial g_k}{\partial \delta}(\beta,\delta^r)}
\end{align*}
be the ordinate of the intersection of the two straight lines tangent
to the curve $g_k(\beta,\delta)$ (for fixed $\beta$) at $(\delta^l,
g_k(\beta,\delta^l))$ and $(\delta^r, g_k(\beta,\delta^r))$,
respectively. Then
\begin{align}\label{eq:G-interval}
G_k(\beta) \in [\max\left(g_k(\beta,\delta^l),
  g_k(\beta,\delta^r)\right), y(\delta^l,\delta^r)].
\end{align}
\end{proposition}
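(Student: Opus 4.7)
The plan is to derive both bounds from the concavity of $g_k(\beta,\delta)$ viewed as a function of $\delta$ for fixed $\beta$, combined with the sign information on the partial derivatives at $\delta^l$ and $\delta^r$. The geometric picture is: a concave curve lies below each of its tangent lines, so the maximum of the curve is bounded above by the smallest of the tangent-line ordinates at the maximizer; and among tangent lines with opposite-signed slopes, the most informative common upper bound is attained at their intersection.

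First, I would establish that $\delta\mapsto g_k(\beta,\delta)$ is concave on $(0,\beta/2)$. Inspecting the four summands in (\ref{eq:g:exponent}): the first two are affine in $\delta$; the term $2H(\delta)$ is concave since $H''(\delta)<0$; and the last term $(1-\delta)H\bigl(\tfrac{\beta-2\delta}{1-\delta}\bigr)$ is the perspective of the concave function $H$ composed with the affine map $\delta\mapsto(\beta-2\delta,\,1-\delta)$, hence concave. Together with the sign hypotheses $\partial g_k/\partial\delta(\beta,\delta^l)>0$ and $\partial g_k/\partial\delta(\beta,\delta^r)<0$, concavity guarantees that the maximizer $\delta^{*}$ on $[0,\beta/2]$ is unique and lies in the open interval $(\delta^l,\delta^r)$. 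This justifies the \emph{max} in Definition~(\ref{eq:G:g}) and also makes precise what it means for the bisection-like procedure sketched in the bullet list to converge.

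The lower bound in (\ref{eq:G-interval}) is then immediate from the definition of $G_k(\beta)$ as a maximum over a set containing both $\delta^l$ and $\delta^r$. For the upper bound, concavity implies, for every $\delta$ in the domain,
\begin{align*}
g_k(\beta,\delta)&\leq L_l(\delta):=g_k(\beta,\delta^l)+\tfrac{\partial g_k}{\partial \delta}(\beta,\delta^l)\,(\delta-\delta^l),\\
g_k(\beta,\delta)&\leq L_r(\delta):=g_k(\beta,\delta^r)+\tfrac{\partial g_k}{\partial \delta}(\beta,\delta^r)\,(\delta-\delta^r).
\end{align*}
Applying both at $\delta=\delta^{*}$ gives $G_k(\beta)\leq\min\bigl(L_l(\delta^{*}),L_r(\delta^{*})\bigr)$. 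Since $L_l$ is strictly increasing and $L_r$ is strictly decreasing, their pointwise minimum is maximized exactly at their (unique) crossing abscissa $\hat\delta$, with common ordinate $L_l(\hat\delta)=L_r(\hat\delta)$. Solving the linear equation $L_l(\delta)=L_r(\delta)$ and substituting back is a routine computation that yields precisely the expression $y(\delta^l,\delta^r)$ stated in the proposition; the sign hypotheses on the two derivatives guarantee that the denominator $\tfrac{\partial g_k}{\partial \delta}(\beta,\delta^l)-\tfrac{\partial g_k}{\partial \delta}(\beta,\delta^r)$ is strictly positive, so $y(\delta^l,\delta^r)$ is well defined.

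The main obstacle is the concavity step: the affine and $2H(\delta)$ pieces are easy, but the perspective term requires either the general perspective-preserves-concavity fact or a direct computation of $\partial^2 g_k/\partial\delta^2$, which after simplification should be expressible as a manifestly negative combination of $H''$ evaluated at $\delta$ and at $(\beta-2\delta)/(1-\delta)$. Once concavity is in hand, the rest is geometry and linear algebra. I would also flag that strict concavity is what gives uniqueness of $\delta^{*}$ and hence makes the envelope argument tight, which is the property needed in the following subsection to justify a bisection-type procedure that squeezes the interval $[\max(g_k(\beta,\delta^l),g_k(\beta,\delta^r)),\,y(\delta^l,\delta^r)]$ around $G_k(\beta)$ to arbitrary accuracy.
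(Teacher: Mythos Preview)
Your proposal is correct and follows essentially the same approach as the paper: concavity of $\delta\mapsto g_k(\beta,\delta)$ yields the tangent-line upper bounds, whose minimum is maximized at the intersection ordinate $y(\delta^l,\delta^r)$, while the lower bound is immediate from the definition of $G_k$. The only organizational difference is that the paper factors the concavity and uniqueness of the maximizer into a separate lemma (proved by directly computing $\partial^2 g_k/\partial\delta^2$ as a negative combination of $H''$ terms, just as you anticipated), whereas you establish concavity inline via the perspective construction.
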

To prove the above proposition, we will need the following lemma,
which highlights some useful properties of $\frac{\partial
  g_k}{\partial \delta}$.
\begin{lemma}\label{lemma:delta-derivatives-g}
Let $0 \leq \beta \leq 1$. Then, as $\delta$ increases from $0$ to
$\beta/2$, the derivative $\frac{\partial g_k}{\partial \delta}$
decreases from $+\infty$ to $-\infty$ and vanishes at a unique point,
$\zeta_k(\beta)$, where $g_k(\beta,\zeta_k(\beta))=G_k(\beta)$.
\end{lemma}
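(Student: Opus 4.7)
The plan is to analyze $g_k(\beta,\delta)$ as a function of $\delta$ on $(0,\beta/2)$ (the case $\beta = 0$ reduces to a single-point domain and is trivial) by explicitly computing its first two partial derivatives with respect to $\delta$. First, I would simplify $(1-\delta)H\!\left(\frac{\beta-2\delta}{1-\delta}\right)$ by expanding the binary entropy and using that $(1-\delta)\cdot\frac{\beta-2\delta}{1-\delta}=\beta-2\delta$ and $(1-\delta)\cdot\frac{1-\beta+\delta}{1-\delta}=1-\beta+\delta$. This telescopes the fraction inside the logarithm and yields
\begin{equation*}
g_k(\beta,\delta) = (\beta-2\delta)\log_2(k-1) - (1-\delta)\log_2 k + 2H(\delta) - (\beta-2\delta)\log_2(\beta-2\delta) - (1-\beta+\delta)\log_2(1-\beta+\delta) + (1-\delta)\log_2(1-\delta),
\end{equation*}
from which a direct term-by-term differentiation is clean.

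Next, I would collect terms to obtain the compact formula
\begin{equation*}
\frac{\partial g_k}{\partial \delta}(\beta,\delta) = \log_2\!\frac{k\,(1-\delta)\,(\beta-2\delta)^2}{(k-1)^2\,\delta^2\,(1-\beta+\delta)}.
\end{equation*}
From this expression the two boundary limits are immediate: as $\delta\to 0^+$ the denominator has a $\delta^2$ factor vanishing while the numerator stays positive, so the ratio $\to +\infty$ and thus $\partial g_k/\partial\delta\to +\infty$; as $\delta\to (\beta/2)^-$ the factor $(\beta-2\delta)^2$ in the numerator vanishes while the denominator stays positive, so $\partial g_k/\partial\delta\to -\infty$.

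Then I would establish strict monotonicity by differentiating once more:
\begin{equation*}
\frac{\partial^2 g_k}{\partial \delta^2}(\beta,\delta) = (\log_2 e)\left[-\frac{1}{1-\delta} - \frac{4}{\beta-2\delta} - \frac{2}{\delta} - \frac{1}{1-\beta+\delta}\right].
\end{equation*}
For $0<\delta<\beta/2\leq 1/2$ every denominator is strictly positive (the last one because $1-\beta+\delta>1-\beta\geq 0$ when $\beta<1$, and equals $\delta>0$ when $\beta=1$), so the bracketed expression is strictly negative. Hence $\partial g_k/\partial\delta$ is continuous and strictly decreasing on $(0,\beta/2)$.

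Combining the three facts above, the intermediate value theorem yields a unique $\zeta_k(\beta)\in(0,\beta/2)$ at which $\partial g_k/\partial\delta$ vanishes; moreover $g_k(\beta,\cdot)$ is strictly increasing on $(0,\zeta_k(\beta))$ and strictly decreasing on $(\zeta_k(\beta),\beta/2)$. Since the limiting values of $\partial g_k/\partial\delta$ at the endpoints guarantee the supremum is attained in the interior (and not at $\delta=0$ or $\delta=\beta/2$), $g_k(\beta,\zeta_k(\beta))$ is the maximum over the whole compact interval $[0,\beta/2]$, which by definition equals $G_k(\beta)$. The main bookkeeping obstacle is just organizing the derivative computation so the telescoping of $(1-\delta)H(\cdot)$ is visible; once the closed-form of $\partial g_k/\partial\delta$ is in hand, the remaining steps are short.
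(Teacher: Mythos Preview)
Your proof is correct and follows essentially the same strategy as the paper's: compute the first and second partial derivatives with respect to $\delta$, verify the boundary limits $+\infty$ and $-\infty$, and use strict concavity (negative second derivative) to conclude there is a unique interior critical point where the maximum $G_k(\beta)$ is attained. The only difference is presentational: you first expand $(1-\delta)H\!\left(\frac{\beta-2\delta}{1-\delta}\right)$ into explicit logarithms and collect $\partial g_k/\partial\delta$ into the single closed form $\log_2\!\frac{k(1-\delta)(\beta-2\delta)^2}{(k-1)^2\delta^2(1-\beta+\delta)}$, whereas the paper keeps the derivatives expressed in terms of $H'$ and $H''$; both routes yield the same conclusions with comparable effort.
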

\begin{proof}
From Equation~(\ref{eq:g:exponent}), basic calculus operations yield
\begin{align} \label{eq:Dg:Ddelta}
    \frac{\partial g_k}{\partial \delta}
    =& \log_2 \frac{k}{(k-1)^2}+2H'(\delta)-
      H\left(\frac{\beta-2\delta}{1-\delta}\right)-\frac{2-\beta}{1-\delta}
      H'\left(\frac{\beta-2\delta}{1-\delta}\right).
\end{align}
Considering that $\lim_{x \rightarrow 0^{+}} H'(x)=+\infty$ (see
Definition~\ref{def:H(x)}), we have that $\lim_{\delta
\rightarrow 0^{+}}\frac{\partial g_k}{\partial \delta}=+\infty$ (due
to the second term), while $\lim_{\delta \rightarrow
\beta/2^{-}}\frac{\partial g_k}{\partial \delta}=-\infty$ (due to
the fourth term).  Taking one more derivative, after some cancellation
of terms and simple rearrangements, we have
\begin{align} \label{eq:D2g:D2delta}
    \frac{\partial^2 g_k}{\partial^2 \delta}
     = 2 H''(\delta)+\frac{(2-\beta)^2}{(1-\delta)^3}
      H''\left(\frac{\beta-2\delta}{1-\delta}\right) <0,
\end{align}
where the last inequality follows from the fact that $H''(x)<0$, for
any $0\leq x \leq 1$ (see Definition~\ref{def:H(x)}). From
$\frac{\partial^2 g_k}{\partial^2 \delta}<0$, we have that
$\frac{\partial g_k}{\partial \delta}$ is strictly decreasing and,
considering that $\frac{\partial g_k}{\partial
\delta}(\beta,0^+)=+\infty$ and $\frac{\partial g_k}{\partial
\delta}(\beta, \beta/2^-)=-\infty$, we conclude that
$\frac{\partial g_k}{\partial \delta} $ takes each real value exactly
once.  Let then $\zeta_k(\beta)$ be the (unique) point where
$\frac{\partial g_k}{\partial \delta}(\beta,\zeta_k(\beta))=0$. It is
straightforward to argue that this is the unique point of maximum of
$g_k(\beta,\delta)$ (with respect to $\delta$, for fixed
$\beta$). Then, according to the definition of $G_k$ (see
Theorem~\ref{thm:bound:eval}), $G_k(\beta)=g_k(\beta,\zeta_k(\beta))$.
\end{proof}

\begin{proof}
(of Proposition~\ref{prop:G-beta-bound})
To better follow this proof, the reader may refer to the graphical illustration provided in Figure~\ref{fig:bisection}.
The lower bound to
$G_k(\beta)$, in Equation~(\ref{eq:G-interval}), trivially follows from
the fact that $G_k(\beta)$ is the maximum value of
$g_k(\beta,\delta)$, for $0 \leq \delta \leq \beta/2$. To establish
the upper bound, let $x(\delta^l,\delta^r)$ be the abscissa of the
intersection of the tangents (gray lines in 
Figure~\ref{fig:bisection}) considered in the statement (these
tangents do intersect, since they have different slopes). Due to the
downward convexity ($\frac{\partial^2 g_k}{\partial^2 \delta}<0$) of
$g_k(\beta, \delta)$, for any $\delta \in
[\delta^l,x(\delta^l,\delta^r)]$, the graph of $g_k(\beta,\delta)$
lies below the tangent at $(\beta,\delta^l)$.  Symmetrically, for
any $\delta \in [x(\delta^l,\delta^r),\delta^r]$, the graph of
$g_k(\beta,\delta)$ lies below the tangent at
$(\beta,\delta^r)$. Hence, for any $\delta \in [\delta^l,\delta^r]$,
the graph of $g_k(\beta,\delta)$ lies below the ordinate
$y(\delta^l,\delta^r)$ of the intersection of the two tangents.  In
particular, $G_k(\beta)=g_k(\beta,\zeta_k(\beta)) \leq
y(\delta^l,\delta^r)$, since $\zeta_k(\beta) \in
[\delta^l,\delta^r]$.
\end{proof}

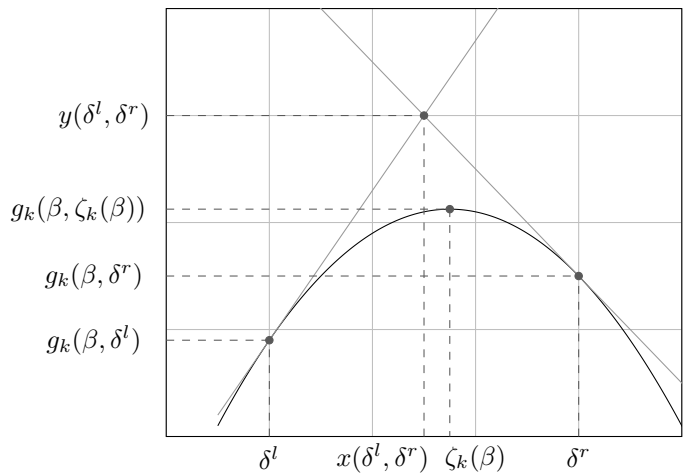
\begin{figure}[tbp]
\centering
\begin{tikzpicture}
\tikzset{every mark/.append style={scale=0.7}}
\begin{axis}[
    ticks=none, grid=major,
    ymin=-20, ymax=20, xmin=-6, xmax=4,
    clip mode=individual
]
\addplot[color=black, samples=200]{-x^2 -x +1};
\addplot[color=black!40, samples=50]{7*x+17};
\addplot[color=black!40, samples=50]{-5*x+5};

\draw[color=black!65,dashed] (axis cs: -6,-11) -- (axis cs:-4,-11);
\node at (axis cs:-7.4,-11) {$g_k(\beta, \delta^l)$};
\draw[color=black!65,dashed] (axis cs: -4,-20) -- (axis cs:-4,-11);
\node at (axis cs:-4,-22) {$\delta^l$};

\draw[color=black!65,dashed] (axis cs: -6, 1.25) -- (axis cs: -0.5, 1.25);
\node at (axis cs:-7.7,1.25) {$g_k(\beta, \zeta_k(\beta))$};
\draw[color=black!65,dashed] (axis cs: -0.5, 1.25) -- (axis cs: -0.5, -20);
\node at (axis cs: 0, -22) {$\zeta_k(\beta)$};

\draw[color=black!65,dashed] (axis cs:-6,-5) -- (axis cs:2,-5);
\node at (axis cs:-7.4,-5) {$g_k(\beta, \delta^r)$};
\draw[color=black!65,dashed] (axis cs: 2,-20) -- (axis cs:2,-5);
\node at (axis cs:2,-22) {$\delta^r$};

\draw[color=black!65,dashed] (axis cs:-6,10) -- (axis cs:-1,10);
\node at (axis cs:-7.2,10) {$y(\delta^l,\delta^r)$};
\draw[color=black!65,dashed] (axis cs:-1,-20) -- (axis cs:-1,10);
\node at (axis cs: -1.8, -22) {$x(\delta^l,\delta^r)$};

\addplot[color=black!65, only marks, mark=*] coordinates {
    (-4,-11)
    (-0.5, 1.25)
    (2,-5)
    (-1, 10)
};

\end{axis}
\end{tikzpicture}
\caption{Illustration of the proof of Proposition~\ref{prop:G-beta-bound}.}
\label{fig:bisection}
\end{figure}

\begin{proposition}[Computability of $G_k$, in the infinite precision
    model]  \label{prop:G:bisection}
There is a procedure (described in the proof) which,
  given as inputs an integer $k \geq 2$ and two real values $\beta \in
  [0,1]$ and $\epsilon>0$, outputs an interval $[G',G'']$ such that
  $G_k(\beta) \in [G',G'']$ and $G''-G'<\epsilon$.
  \end{proposition}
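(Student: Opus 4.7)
The plan is to approximate the unique maximizer $\zeta_k(\beta) \in (0, \beta/2)$ of $g_k(\beta,\cdot)$ (identified in Lemma~\ref{lemma:delta-derivatives-g}) by a bisection driven by the sign of $\frac{\partial g_k}{\partial \delta}$, and after each contraction of the bracketing interval apply Proposition~\ref{prop:G-beta-bound} to produce an enclosing interval $[G', G'']$ for $G_k(\beta)$, halting as soon as $G'' - G' < \epsilon$.

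First I would dispose of the degenerate case $\beta = 0$: the admissible range for $\delta$ collapses to $\{0\}$, so $G_k(0) = g_k(0, 0) = -\log_2 k$, obtainable in closed form. For $\beta > 0$, I would choose initial values $0 < \delta_0^l < \delta_0^r < \beta/2$ with $\frac{\partial g_k}{\partial \delta}(\beta, \delta_0^l) > 0$ and $\frac{\partial g_k}{\partial \delta}(\beta, \delta_0^r) < 0$; such values exist by Lemma~\ref{lemma:delta-derivatives-g}, since $\frac{\partial g_k}{\partial \delta}$ travels from $+\infty$ to $-\infty$ as $\delta$ traverses $(0, \beta/2)$. Concrete choices of the form $\delta_0^l = \beta/2^{1+j}$ and $\delta_0^r = \beta/2 - \beta/2^{1+j}$ work for all sufficiently large $j$, and the smallest such $j$ can be found by doubling.

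At each iteration, I would evaluate $\frac{\partial g_k}{\partial \delta}(\beta, \delta^m)$ at the midpoint $\delta^m = (\delta^l + \delta^r)/2$: if the value is positive, update $\delta^l \leftarrow \delta^m$; if negative, update $\delta^r \leftarrow \delta^m$; and if it is exactly zero, return $g_k(\beta, \delta^m)$ as the exact value of $G_k(\beta)$. After each update, I would compute $G' = \max(g_k(\beta, \delta^l), g_k(\beta, \delta^r))$ and $G'' = y(\delta^l, \delta^r)$ as in Proposition~\ref{prop:G-beta-bound}, and halt once $G'' - G' < \epsilon$; correctness of the enclosure $G_k(\beta) \in [G', G'']$ is guaranteed by that proposition.

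The main obstacle is establishing termination, i.e., that $G'' - G'$ actually shrinks to $0$ as the bracket narrows. Each bisection halves $\delta^r - \delta^l$, so both endpoints converge to $\zeta_k(\beta)$, and by continuity $G' \to G_k(\beta)$. To show $G'' \to G_k(\beta)$ as well, I would Taylor-expand $g_k(\beta, \cdot)$ around $\zeta_k(\beta)$ using that $\frac{\partial^2 g_k}{\partial \delta^2}$ is continuous and strictly negative there (Equation~(\ref{eq:D2g:D2delta})): then $g_k(\beta, \delta^l) - G_k(\beta) = \bigO((\delta^l - \zeta_k(\beta))^2)$, $\frac{\partial g_k}{\partial \delta}(\beta, \delta^l) = \bigO(\delta^l - \zeta_k(\beta))$, and analogously for $\delta^r$. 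Substituting these estimates into the closed-form expression for $y(\delta^l, \delta^r)$, the leading terms cancel and what remains is $\bigO((\delta^r - \delta^l)^2)$, so $G'' - G' = \bigO((\delta^r - \delta^l)^2) \to 0$. Consequently, after $\bigO(\log(1/\epsilon))$ bisection steps the stopping criterion is met and the procedure returns a valid $\epsilon$-enclosure of $G_k(\beta)$.
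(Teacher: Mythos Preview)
Your proposal is correct and follows essentially the same approach as the paper: bisection on $[\delta^l,\delta^r]$ driven by the sign of $\frac{\partial g_k}{\partial\delta}$, with the enclosure $[G',G'']$ supplied by Proposition~\ref{prop:G-beta-bound} after each contraction. The only notable difference is in the termination argument: the paper bounds $y(\delta^l,\delta^r)-g_k(\beta,\delta^l)$ and $y(\delta^l,\delta^r)-g_k(\beta,\delta^r)$ directly by the tangent slopes times the bracket width and observes that both slopes vanish as $\delta^l,\delta^r\to\zeta_k(\beta)$, whereas you Taylor-expand about $\zeta_k(\beta)$ to obtain the sharper quantitative estimate $G''-G'=\bigO((\delta^r-\delta^l)^2)$, which additionally justifies the $\bigO(\log(1/\epsilon))$ iteration count.
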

\begin{proof}
 The proposed procedure distinguishes 2 cases.

  \emph{Case 1}: $\frac{\partial g_k}{\partial \delta}(\beta,\beta/4)
  = 0$. Here, $\beta/4= \argmax_{\delta} g_k(\beta,\delta)$, whence
  $G_k(\beta)=g_k(\beta,\beta/4)$. The procedure outputs
  $G'=G''=g_k(\beta,\beta/4)$, clearly satisfying the requirements in
  the statement.
  
\emph{Case 2}: $\frac{\partial g_k}{\partial \delta}(\beta,\beta/4)
\not= 0$. Here the procedure includes two phases.  In a first phase, a
bisection process determines two points, $\delta_0^l$ and
$\delta_0^r$, which satisfy the assumptions of
Proposition~\ref{prop:G-beta-bound}. (For simplicity, the dependence
of $\delta_0^l$ and $\delta_0^r$ upon $k$ and $\beta$ is not made
explicit in the notation.)  In a second phase, the interval
$[\delta_0^l, \delta_0^r]$ is iteratively bisected, until the interval
appearing in Equation~(\ref{eq:G-interval}) has size smaller than
$\epsilon$. In both phases, the bisection is driven by the sign of
$\frac{\partial g_k}{\partial \delta}$.  The first phase includes two
subcases:

\emph{Subcase 2a}: $\frac{\partial g_k}{\partial
  \delta}(\beta,\beta/4) >0$. We define the sequence
$\mu_j=(1-2^{-j})(\beta/2)$, for $j \geq 1$.  Letting $h=\min\{j
\geq 2: \frac{\partial g_k}{\partial \delta}(\beta, \mu_j)<0\}$, we
set $[\delta_0^l,\delta_0^r]=[\mu_{h-1},\mu_h]$.

\emph{Subcase 2b}: $\frac{\partial g_k}{\partial
  \delta}(\beta,\beta/4)<0$.  We define the sequence
$\nu_j=2^{-j}(\beta/2)$, for $j \geq 1$.  Letting $h=\min\{j \geq
2: \frac{\partial g_k}{\partial \delta}(\beta, \nu_j)>0\}$, we set
$[\delta_0^l,\delta_0^r]=[\nu_{h},\nu_{h-1}]$.

Since $\lim_{j \rightarrow \infty} \mu_j = \beta/2$ and $\lim_{j
  \rightarrow \infty} \nu_j = 0$, the interval
$[\delta_0^l,\delta_0^r]$ is well defined, in either subcase. Its
endpoints can be computed by iteratively testing the condition on the
derivative for $j=1,2,\ldots$, till it is satisfied.

In both subcases, $\zeta_k(\beta) \in [\delta_0^l,\delta_0^r]$. We
can then construct a sequence of intervals, each half the size of
the preceding one and containing $\zeta_k(\beta)$, as follows.

For $i=1,2, \ldots$ do:
\begin{enumerate}
  \item $c_i=(\delta_{i-1}^l+\delta_{i-1}^r)/2$.
  \item If $\frac{\partial g_k}{\partial \delta}(\beta, c_i) = 0$,
    then set $G'=G''=g_k(\beta,c_i)$ and exit.
  \item If $\frac{\partial g_k}{\partial \delta}(\beta, c_i) > 0$,
    then let $[\delta_i^l,\delta_i^r]=[c_i,\delta_{i-1}^r]$.
  \item If $\frac{\partial g_k}{\partial \delta}(\beta, c_i) < 0$,
    then let $[\delta_i^l,\delta_i^r]=[\delta_{i-1}^l,c_i]$.
  \item Set $G'=\max(g_k(\beta,\delta_i^l), g_k(\beta,\delta_i^r))$
    and $G''= y(\delta_i^l,\delta_i^r)$, as defined in
    Proposition~\ref{prop:G-beta-bound}. If $G''-G' < \epsilon$, then
    exit.
\end{enumerate}
It is straightforward to show that, for any $\eta>0$, $\log_2
(1/\eta)$ bisection iterations (counting those of both phases) are
sufficient to guarantee $\delta^r-\delta^l \leq \eta$, hence to
determine $\zeta_k(\beta)$ with accuracy $\eta>0$.

It remains to show that the above for loop is eventually exited. If
the loop is exited at step 2, then we are are done.  Otherwise, as we
will argue, $G''-G'$ vanishes with $\delta^r-\delta^l$ so that, for
some $i$, $G''-G'< \epsilon$, hence the loop is eventually exited, at
step 5. Toward this conclusion, we can observe that
\begin{align*}
  y(\delta^l,\delta^r)-g_k(\beta,\delta^l) & \leq \left|
  \frac{\partial g_k}{\partial \delta}(\beta,\delta^l) \right|
  (x(\delta^l,\delta^r)-\delta^l),\\ y(\delta^l,\delta^r) -
  g_k(\beta,\delta^r) & \leq \left| \frac{\partial g_k}{\partial
    \delta}(\beta,\delta^r) \right| (\delta^r - x(\delta^l,\delta^r)),
\end{align*}
where $x(\delta^l,\delta^r)$ is the abscissa of the intersection of
the tangents. As $\delta^r-\delta^l$ approaches $0$, we have that
\begin{align*}
& \lim \delta^l = \lim \delta^r = \lim x(\delta^l,\delta^r) =
  \zeta_k(\beta),\\ & \lim \frac{\partial g_k}{\partial
    \delta}(\beta,\delta^l) = \lim \frac{\partial g_k}{\partial
    \delta}(\beta,\delta^r) =0,\\ & \lim G''-G' = \lim
  y(\delta^l,\delta^r) -
  \max(g_k(\beta,\delta^l),g_k(\beta,\delta^r))=0.\\[-10mm]
\end{align*}
\end{proof}

\subsubsection{Computing $\sign(G_k(\beta))$}
\label{sec:sign:G}
 
We now consider the more realistic model where only rational numbers
are represented, and only arithmetic operations and comparisons with
rational inputs can be computed exactly. If $f(x)$ is a real function
of the real variable $x$, we will restrict our attention to rational
inputs and be interested in computing arbitrary approximations of
$f(x)$.  More specifically, our target is an \emph{approximation
algorithm} ${\cal A}_f(x,\epsilon)$ of the rational inputs $x$ and
$\epsilon>0$, whose output $\tilde{f}(x,\epsilon)$ satisfies the
relationship $|\tilde{f}(x,\epsilon)-f(x)| < \epsilon$. Thus, $f(x)$
is a computable real number, in the sense of
Definition~\ref{def:computability}. These notions naturally extend to
functions of several variables.

Approximation algorithms are well known for $\log_2 x$, such as the
iterative method of \cite{ML73}.  For evaluating a
finite arithmetic expression, a tedious but straightforward analysis
of error propagation will determine the required accuracy of each
intermediate calculation capable of guaranteeing the desired accuracy
for the target result. Therefore, approximation algorithms can be
derived for $g_k(\beta,\delta)$ and $\frac{\partial g_k}{\partial
\delta}(\beta,\delta)$.

One issue we need to deal with is that an algorithm to compute $f(x)$
with any desired accuracy does not automatically translate into an
algorithm to systematically determine the sign (positive, negative, or
zero), of $f(x)$, denoted $\sign(f(x))$.  In fact, while
$\tilde{f}(x,\epsilon) \geq +\epsilon$ implies $f(x)>0$ and
$\tilde{f}(x,\epsilon) \leq -\epsilon$ implies $f(x)<0$, in the
remaining case, $|\tilde{f}(x,\epsilon)| < \epsilon$, nothing can be
inferred about $\sign(f(x))$.  On the one hand, when $f(x)=0$, this
``undeterminate'' case is bound to occur, for any $\epsilon>0$.  On
the other hand, when $f(x) \not=0$, this case will not occur if
$\epsilon \leq |f(x)|/2$, since $|\tilde{f}(x, \epsilon)-f(x)| <
\epsilon \leq |f(x)|/2$ implies $|\tilde{f}(x, \epsilon)| \geq
|f(x)|/2 \geq \epsilon$.

The preceding observations suggest the following procedure
$\mathcal{S}(\mathcal{A}_f,x)$ which, building on an algorithm
$\mathcal{A}_f$ that computes an approximation $\tilde{f}(x,\epsilon)$
to $f(x)$, will halt and output $\sign(f(x)) \in \{-,+\}$, when
$f(x)\not=0$ and will not halt when $f(x)=0$. Given a monotonically
vanishing, computable sequence $\epsilon_1,\epsilon_2, \ldots$
(\emph{e.g.}, $\epsilon_i=2^{-i}$), for $i=1,2,\ldots$, the procedure
$\mathcal{S}$ computes $\tilde{f}_i = \tilde{f}(x,\epsilon_i)$ by a
call to $\mathcal{A}_f(x,\epsilon_i)$; if $\tilde{f}_i \geq
\epsilon_i$ or $\tilde{f}_i \leq -\epsilon_i$, then
it returns $\sign(\tilde{f}_i)$ and halts.

The bisection procedure has to be modified so that it will not get
stuck in the attempt of evaluating the sign of a zero. Assuming that
$f$ has a unique zero in the interval $[a,b]$, say, with $f(a)<0$ and
$f(b)>0$, a point $c$ where to split the interval can be found as
follows.  Let $c'=(2a+b)/3$ and $c''=(a+2b)/3$ the points that
\emph{trisect} the interval.  Interleave the executions of the calls
$\mathcal{S}(\mathcal{A}_f,c')$ and $\mathcal{S}(\mathcal{A}_f,c'')$
until the termination of one of them, an event guaranteed to occur since
at least one between $f(c')$ and $f(c'')$ differs form zero.  Let $c$
be the argument for which the execution has terminated, whence
  $\sign(f(c))$ has been determined. The refined interval is chosen to
  be $[a,c]$ when $f(c)>0$ and $[c,b]$ when $f(c)<0$.  In all cases, the
interval size shrinks at least by $2/3$.

With the tools we have introduced, we can now tackle the
finite-precision computation of $\sign(G_k(\beta))$, a quantity that
will play a key role in the computation of $\beta_k^*$ discussed in the next subsection.

\begin{proposition}[Partial computability of $\sign(G_k(\beta))$, in the
finite precision model]  \label{prop:sign:G} There is a
procedure (described in the proof) which, given as inputs an integer
$k \geq 2$ and two rational values $\beta \in [0,1]$ and $\epsilon>0$,
outputs $\sign(G_k(\beta))$ and halts, if $G_k(\beta) \not=0$,
and does not halt, if $G_k(\beta)=0$.
\end{proposition}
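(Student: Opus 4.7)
My plan is to combine two ingredients: (i) an approximation algorithm $\mathcal{A}_{G_k}(\beta,\epsilon')$ for the real function $G_k(\beta)$ that, in the finite-precision model, returns a rational $\tilde{G}$ with $|\tilde{G}-G_k(\beta)|<\epsilon'$; and (ii) the generic semi-decision procedure $\mathcal{S}$ described in the paragraph preceding the statement. Feeding $\mathcal{A}_{G_k}$ into $\mathcal{S}$ with a vanishing rational sequence such as $\epsilon_i = \epsilon \cdot 2^{-i}$ would yield the required behavior: if $G_k(\beta) \neq 0$, then as soon as $\epsilon_i \leq |G_k(\beta)|/2$ the test $|\tilde{G}_i| \geq \epsilon_i$ is met and $\sign(\tilde{G}_i)=\sign(G_k(\beta))$ is returned; if $G_k(\beta)=0$, then $|\tilde{G}_i|<\epsilon_i$ for every $i$ and the loop runs forever, exactly as required.

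The substantive work is the construction of $\mathcal{A}_{G_k}$, which I would obtain by reproducing the infinite-precision bisection of Proposition~\ref{prop:G:bisection} in the rational-arithmetic model. First, I would build approximation algorithms for $g_k(\beta,\delta)$ and $\frac{\partial g_k}{\partial \delta}(\beta,\delta)$ at rational inputs to any prescribed accuracy. Since both are finite arithmetic expressions in $\beta$, $\delta$, and logarithms of rationals, this reduces to invoking an iterative logarithm routine such as the one of \cite{ML73} at sufficiently high internal precision and then applying standard error-propagation analysis to the surrounding arithmetic. Given these primitives, the bisection of Proposition~\ref{prop:G:bisection} is driven at each candidate midpoint $c$ by $\sign\bigl(\frac{\partial g_k}{\partial \delta}(\beta,c)\bigr)$; in finite precision this sign cannot always be decided, because $c$ might coincide with the unique zero $\zeta_k(\beta)$ of $\frac{\partial g_k}{\partial \delta}(\beta,\cdot)$ guaranteed by Lemma~\ref{lemma:delta-derivatives-g}.

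The main obstacle is precisely this non-decidability of sign at an exact zero, and I would dispatch it by the trisection device introduced in the paragraph preceding the proposition: at each step, replace the single midpoint by the two trisection points $c' = (2\delta^l+\delta^r)/3$ and $c''=(\delta^l+2\delta^r)/3$, and interleave two copies of the sign-testing procedure $\mathcal{S}$ applied to $\frac{\partial g_k}{\partial \delta}(\beta,c')$ and $\frac{\partial g_k}{\partial \delta}(\beta,c'')$. By uniqueness of $\zeta_k(\beta)$, at most one of $c', c''$ can be a zero, so at least one sign test halts; the surviving sign determines whether the refined interval is $[\delta^l,c]$ or $[c,\delta^r]$, shrinking the enclosing interval by at least a factor of $2/3$. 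The same trisection strategy would also be used to produce the initial bracketing interval $[\delta_0^l,\delta_0^r]$ required by Proposition~\ref{prop:G-beta-bound}, in place of the dyadic sequences $\mu_j,\nu_j$ used in the infinite-precision version. I would iterate until $y(\delta^l,\delta^r)-\max\bigl(g_k(\beta,\delta^l),g_k(\beta,\delta^r)\bigr)<\epsilon'/2$, which must occur in finitely many steps by the limit computation at the end of the proof of Proposition~\ref{prop:G:bisection}, and then evaluate the endpoints with the approximation algorithm for $g_k$ at accuracy $\epsilon'/2$; by Proposition~\ref{prop:G-beta-bound}, the resulting rational $\tilde{G}$ satisfies $|\tilde{G}-G_k(\beta)|<\epsilon'$, completing the construction of $\mathcal{A}_{G_k}$ and hence the proof.
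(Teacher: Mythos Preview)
Your approach is correct and rests on the same core ingredients as the paper's proof: approximation primitives for $g_k$ and its $\delta$-derivative, trisection with interleaved sign tests to sidestep the undecidability at the unique zero $\zeta_k(\beta)$, and the enclosing interval $[\max(g_k(\beta,\delta^l),g_k(\beta,\delta^r)),\,y(\delta^l,\delta^r)]$ of Proposition~\ref{prop:G-beta-bound}. The organizational difference is that you factor the argument into two layers---first build a \emph{total} approximation algorithm $\mathcal{A}_{G_k}$, then feed it into the generic semi-decision procedure $\mathcal{S}$---whereas the paper runs a single integrated trisection loop that, at each iteration $i$, directly tests whether $\tilde{g}(\delta_i^l,\eta_i)\geq\eta_i$ or $\tilde{g}(\delta_i^r,\eta_i)\geq\eta_i$ (output $+$) or $\tilde{y}(\delta_i^l,\delta_i^r,\eta_i)\leq-\eta_i$ (output $-$), with $\eta_i$ shrinking to zero. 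Your modular decomposition is conceptually cleaner and reuses $\mathcal{S}$ verbatim; the paper's integrated loop avoids restarting the inner trisection from scratch for each $\epsilon_i$. Two minor points you should tighten: your stopping criterion for $\mathcal{A}_{G_k}$ is phrased in terms of the \emph{exact} gap $y-\max g_k<\epsilon'/2$, but in the finite-precision model this must be tested via rational approximations with an appropriate safety margin; and ``the same trisection strategy'' for the initial bracketing is vague, since the open interval $(0,\beta/2)$ has no usable endpoints---the paper handles this by interleaving sign tests at $\beta/4$ and $\beta/8$ and then at consecutive terms of the $\mu_j$ or $\nu_j$ sequences. Both fixes are routine.
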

\begin{proof}
We obtain the procedure for $\sign(G_k(\beta))$ by adapting the
procedure to approximate $G_k(\beta)$ presented in the proof of
Proposition~\ref{prop:G:bisection}.  To lighten the notation,
throughout this proof, we let $g(\delta)=g_k(\beta,\delta)$ and
$g'(\delta)=\frac{\partial g_k}{\partial \delta}(\beta,\delta)$.

Interleave the executions of the calls
$\mathcal{S}(\mathcal{A}_{g'},\beta/4)$ and
$\mathcal{S}(\mathcal{A}_{g'},\beta/8)$ until termination of one of
them, an event guaranteed to occur since at least one between
$g'(\beta/4)$ and $g'(\beta/8)$ differs form zero.  Let $\delta_0$ be
the argument for which the execution has terminated. If
$g'(\delta_0)>0$, then search for a $\mu_h$ such that $g'(\mu_h)<0$
and let $[\delta_0^l,\delta_0^r]=[\delta_0,\mu_h]$.  Else
$g'(\delta_0)<0$, search for a $\nu_h$ such that $g'(\nu_h)>0$ and let
$[\delta_0^l,\delta_0^r]=[\nu_h,\delta_0]$. The search has to
interleave the evaluation of two consecutive points in the sequence,
to avoid the potential non termination of the execution, which can
occur at most at one point.

The sequence $[\delta_i^l,\delta_i^r]$, for $i=1,2,\ldots$, of
successive refinements of $[\delta_0^l,\delta_0^r]$ is then
constructed, choosing the splitting point $c_i$ by trisection.  Let
$\epsilon_i$ be the error bound such that the call to
$\mathcal{A}_{g'}(c_i,\epsilon_i)$ has enabled the determination of
the sign of $g'(c_i)$. Let also $\eta_i=\min(2^{-i},\epsilon_i)$ and
compute $\eta_i$-approximations $\tilde{g}(\delta_i^l,\eta_i)$,
$\tilde{g}(\delta_i^r,\eta_i)$, and
$\tilde{y}(\delta_i^l,\delta_i^r,\eta_i)$. The conditions that
enable determining $\sign(G_k(\beta))$ are as follows:
\begin{itemize}
\item If $\tilde{g}(\delta_i^l,\eta_i) \geq \eta_i$ or
 $\tilde{g}(\delta_i^r,\eta_i) \geq \eta_i$, then return
 $\sign(G_k(\beta))=+$ and halt.
\item If $\tilde{y}(\delta_i^l,\delta_i^r,\eta_i) \leq -\eta_i$,
 then return $\sign(G_k(\beta))=-$ and halt.
\end{itemize}
Indeed, in the first case, the condition implies that
$g(\delta_i^l)=g_k(\beta,\delta_i^l)>0$ or
$g(\delta_i^r)=g_k(\beta,\delta_i^r)>0$, which in turn implies
$G_k(\beta)>0$. Conversely, if $G_k(\beta)>0$, for $i$ large enough,
both $g(\delta_i^l)$ and $g_k(\delta_i^r)$, which are monotonically
non decreasing with $i$, become positive. Since $\eta_i$ vanishes
with $i$, it will eventually become sufficiently small to satisfy
the condition.

Symmetrically, in the second case, the condition implies that
$G_k(\beta)<0$. Conversely, if $G_k(\beta)<0$, for $i$ large enough,
$y(\delta_i^l,\delta_i^r)$, which decreases with $i$, becomes
negative. Again, $\eta_i$ will eventually become sufficiently small
to satisfy the condition.

Finally, we observe that if $G_k(\beta)=0$, then neither condition
will ever hold and the procedure will not halt.
\end{proof}

\subsubsection{Computing $\beta^*_k=\sup A_k$}\label{sec:decide}

In this subsection, we will see that the set $A_k$ is an
interval, closed on the left and open on the right: specifically,
$A_k=[0,\beta_k^*)$. On the one hand, this property of $A_k$ is not
surprising, since if $\beta'$ is a lower bound to $\alpha_k$ and
$\beta<\beta'$, then $\beta$ is a lower bound too. 
On the other
hand, the property does require a proof, since membership in $A_k$
is a sufficient, but not necessary condition for $\beta$ to be a
lower bound, a scenario compatible with $A_k$ having ``holes'',
that is, with $G_k(\beta)$ taking both negative and positive
values to the left of $\beta_k^*$.  However, we will argue that
$A_k$ has no such holes, since $G_k(\beta)$ has a unique
zero in $[0,1]$ which, by Equation~(\ref{eq:A:G}), is our target,
$\beta^*_k=\sup A_k$. We will show this zero to be computable.
\begin{proposition}\label{prop:G-monotone}
For every $k \geq 2$, in the interval $0 \leq \beta \leq 1$, the
function $G_k(\beta)$ is increasing and has a unique zero,
which equals $\beta_k^* = \sup A_k$.
\end{proposition}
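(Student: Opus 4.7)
My plan is to prove the proposition in three stages: continuity of $G_k$, strict monotonicity, and existence plus identification of the unique zero. For continuity, note that $g_k$ is jointly continuous and, by Lemma~\ref{lemma:delta-derivatives-g}, for each $\beta \in (0,1]$ the inner maximum is attained at a unique interior point $\zeta_k(\beta) \in (0, \beta/2)$. Standard arguments (Berge's maximum theorem, or the implicit function theorem applied to the first-order condition $\frac{\partial g_k}{\partial \delta} = 0$ together with the strict concavity established in Equation~(\ref{eq:D2g:D2delta})) yield continuity of $\zeta_k$, hence of $G_k(\beta) = g_k(\beta, \zeta_k(\beta))$ on $[0,1]$.

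For strict monotonicity, I would apply the envelope theorem: since $\zeta_k(\beta)$ is interior,
\begin{align*}
\frac{dG_k}{d\beta} = \frac{\partial g_k}{\partial \beta}(\beta, \zeta_k(\beta)) = \log_2 \frac{(k-1)(1-\beta+\zeta_k(\beta))}{\beta - 2\zeta_k(\beta)}.
\end{align*}
Setting $u = (\beta - 2\zeta_k(\beta))/(1 - \zeta_k(\beta))$, this derivative is positive iff $u < (k-1)/k$. The first-order condition can be rearranged (using the identity $H(u) - u H'(u) = -\log_2(1-u)$ and the simplification $u + (2-\beta)/(1-\zeta_k(\beta)) = 2$) to the algebraic form
\begin{align*}
k(1-\zeta_k(\beta))^2 u^2 = (k-1)^2 \zeta_k(\beta)^2 (1-u).
\end{align*}
The left-hand side is strictly increasing in $u$ and the right-hand side strictly decreasing; substituting $u = (k-1)/k$ shows that equality would force $\zeta_k(\beta) = 1/2$. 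Since $\zeta_k(\beta) < \beta/2 \leq 1/2$, with strictness following from the blow-up of $\frac{\partial g_k}{\partial \delta}$ at $\delta = \beta/2$ noted in Lemma~\ref{lemma:delta-derivatives-g}, one must have $u < (k-1)/k$, whence $\frac{dG_k}{d\beta} > 0$ throughout $[0, 1]$.

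For the zero, observe $G_k(0) = g_k(0,0) = -\log_2 k < 0$, while Theorem~\ref{thm:bound:eval} together with the elementary bound $\alpha_k \leq 1 - 1/k$ (from restricting to substitutions; cf.\ Hamming distance) gives $\beta_k^* \leq \alpha_k \leq 1 - 1/k < 1$. Hence some $\beta \in (\beta_k^*, 1]$ lies outside $A_k$, so $G_k(\beta) \geq 0$. Continuity and strict monotonicity then produce a unique zero $\beta^\dagger \in (0,1]$. For $\beta < \beta^\dagger$, $G_k(\beta) < 0$, so $\beta \in A_k$; for $\beta \geq \beta^\dagger$, $G_k(\beta) \geq 0$, so $\beta \notin A_k$. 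Therefore $A_k = [0, \beta^\dagger)$ and $\sup A_k = \beta^\dagger = \beta_k^*$.

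The main obstacle is the strict-monotonicity step: $\frac{\partial g_k}{\partial \beta}(\beta, \delta)$ is not positive throughout the feasible set, so positivity at the maximizer cannot be read off directly and must be extracted from the first-order condition itself, via the key inequality $\zeta_k(\beta) < 1/2$.
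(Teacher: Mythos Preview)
Your argument is correct and, in the monotonicity step, genuinely cleaner than the paper's. Both proofs start from the envelope identity $G_k'(\beta)=\frac{\partial g_k}{\partial\beta}(\beta,\zeta_k(\beta))$, but then diverge. The paper eliminates $H'$ using the first-order condition to reach a four-term expression (their Equation~\eqref{eq:signG-analysis0}) and establishes its positivity through a case analysis on $\zeta$ (above or below $1/(1+\sqrt{2})$) and on $k\in\{2,3,\ge4\}$. You instead exponentiate the first-order condition to the algebraic relation $k(1-\zeta)^2u^2=(k-1)^2\zeta^2(1-u)$, recognise that $G_k'(\beta)>0$ is equivalent to $u<(k-1)/k$, and obtain this inequality by comparing the two sides at $u=(k-1)/k$: there the left side exceeds the right exactly when $\zeta<1/2$, which holds since $\zeta_k(\beta)<\beta/2\le 1/2$. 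This avoids all case splitting and makes transparent why the boundary value $\zeta=1/2$ is the only obstruction. (It would be worth stating explicitly that at $u=(k-1)/k$ the left side dominates the right when $\zeta<1/2$, so the crossing must occur to the left; your current wording only says equality forces $\zeta=1/2$, which gives $u\ne(k-1)/k$ but not the direction.)

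For the endpoint, the paper shows $G_k(1)>0$ by exhibiting the explicit witness $g_k(1,1/k)>0$ and computing. You instead use $\beta_k^*\le\alpha_k\le 1-1/k<1$ from Theorem~\ref{thm:bound:eval} and the Hamming bound to locate a $\beta\notin A_k$, hence $G_k(\beta)\ge 0$. Your route is shorter and self-contained within the paper's prior results; the paper's route has the advantage of giving a concrete lower bound on $G_k(1)$, but that extra information is not needed here.
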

\begin{proof}
From Lemma~\ref{lemma:delta-derivatives-g}, we have that
$G_k(\beta) = g_k(\beta,\zeta_k(\beta))$, where $\zeta_k(\beta)$ is
the unique solution of the equation (in $\delta$)
  $\frac{\partial g_k}{\partial \delta}(\beta,\delta)=0$, so that
\begin{align}\label{eq:delta-derivative-g-vanishes}
\frac{\partial g_k}{\partial \delta}(\beta,\zeta_k(\beta))=0.
\end{align}
By the chain rule for the total derivative and the above
relationship, we get
\begin{align*}
\frac{\partial G_k}{\partial \beta} &=
\frac{\partial g_k}{\partial \beta}(\beta,\zeta_k(\beta))
\frac{\partial \beta}{\partial \beta}
+ \frac{\partial g_k}{\partial \delta}(\beta,\zeta_k(\beta))
\frac{\partial \zeta_k(\beta)}{\partial \beta}
= \frac{\partial g_k}{\partial \beta}(\beta,\zeta_k(\beta))\\
&=\log_2{(k-1)} + H'\left(\frac{\beta-2\zeta_k(\beta)}{1-\zeta_k(\beta)}\right),
\end{align*}
where the last step follows from Equation~(\ref{eq:g:exponent}).
After several but simple manipulations and letting
$\zeta=\zeta_k(\beta)$, the expression just derived
together with
Equations~(\ref{eq:delta-derivative-g-vanishes})
and~(\ref{eq:Dg:Ddelta}) yield the relationship
\begin{align}\label{eq:signG-analysis0}
\left(\frac{2-\beta}{1-\zeta}\right)\frac{\partial G_k}{\partial
  \beta} = \log_2{\frac{k}{k-1}} + 2H'(\zeta) -
H\left(\frac{\beta-2\zeta}{1-\zeta}\right) +
\left(1-\frac{\beta-2\zeta}{1-\zeta}\right)\log_2{(k-1)}.
\end{align}
We will argue that the right hand side is positive, for any $\beta \in
[0,1]$ and any $\zeta \in [0,\beta/2]$.  As easily seen, the
only negative term in the right hand side of
Equation~(\ref{eq:signG-analysis0}) is
$-H\left(\frac{\beta-2\zeta}{1-\zeta}\right) \geq -1$. A case
analysis shows that this negative term is offset by
one or more of the remaining terms, yielding a positive sum.

\emph{Case 1}: $\zeta<1/(1+\sqrt{2})$. Considering that $H'$ is
monotonically decreasing (see Definition~\ref{def:H(x)}), we have
\begin{align*}
  2H'(\zeta) > 2H'\left(\frac{1}{1+\sqrt{2}}\right)=2 \log_2
  \frac{1-1/(1+\sqrt{2})}{1/(1+\sqrt{2})}
       = 2 \log_2 \sqrt{2}=1.
\end{align*}
$~~$
\emph{Case 2}: $\zeta \geq 1/(1+\sqrt{2})$ and $k \geq 4$. We observe
that $\left(1-\frac{\beta-2\zeta}{1-\zeta}\right)$ is decreasing with
$\beta$ and increasing with $\zeta$, whereas $\log_2(k-1)$ is
increasing with $k$, whence
\begin{align*}
  \left(1-\frac{\beta-2\zeta}{1-\zeta}\right)\log_2(k-1) >
  \left(1-\frac{1-2/(1+\sqrt{2})}{1-1/(1+\sqrt{2})}\right)\log_2(3)>
  \frac{1}{\sqrt{2}}1.58 > 1.11
\end{align*}

\emph{Case 3}: $\zeta \geq 1/(1+\sqrt{2})$ and $k=2$. The right hand
side of Equation~(\ref{eq:signG-analysis0}) becomes
\begin{align*}
 1 + 2H'(\zeta) - H\left(\frac{\beta-2\zeta}{1-\zeta}\right),
\end{align*}
which is always positive. In fact, if $\zeta<1/2$, then $2H'(\zeta)>0$
and $1- H\left(\frac{\beta-2\zeta}{1-\zeta}\right) \geq 0$. On the
other hand, if $\zeta=1/2$, then $2H'(\zeta)=0$; furthermore, it must
be $\beta=1$ so that
$H\left(\frac{\beta-2\zeta}{1-\zeta}\right)=H(0)=0$.

\emph{Case 4}: $\zeta \geq 1/(1+\sqrt{2})$ and $k=3$. Considering that
(i) $H'(\zeta) \geq 0$ (for $0 \leq \zeta \leq 1/2)$, (ii) $H(x) \leq
1$ (for any $0 \leq x \leq 1$), and (iii)
$\left(1-\frac{\beta-2\zeta}{1-\zeta}\right)>\frac{1}{\sqrt{2}}$ (as
seen in Case 2), Equation~(\ref{eq:signG-analysis0}) implies
\begin{align}\label{eq:signG-analysis4}
\left(\frac{2-\beta}{1-\zeta}\right)\frac{\partial G_k}{\partial
  \beta} \geq \log_2{\frac{3}{2}} - 1 +\frac{1}{\sqrt{2}} > 0.58 -1
+ 0.70 = 0.28 >0.
\end{align}
Having established that $G_k(\beta)$ has a positive
derivative, hence it is increasing, in the interval $[0,1]$, we
now argue the existence of a (unique) zero of $G_k(\beta)$, by
showing that $G_k(0)<0$ and $G_k(1)>0$, for every $k \geq 2$.

At $\beta=0$, we simply observe that, by definition
(\emph{i.e.}, Equation~(\ref{eq:G:g})), we have $G_k(0)=\max_{0 \leq \delta \leq
0} g_k(0,\delta)=g_k(0,0)=-\log_2 k <0$, for every $k \geq 2$.

At $\beta=1$, we observe that $G_k(1)=\max_{0 \leq \delta \leq
1/2} g_k(1,\delta) \geq g_k(1,1/k)$ and show that
$g_k(1,1/k)>0$, for every $k \geq 2$.  We consider the following
chain of relationships where, starting from
Equation~(\ref{eq:g:exponent}), (i) we have dropped the contribution
$(1-\delta)\log_2\frac{1}{1-\delta}$ to $H(\delta)$ and the second
entropy term, which are both non-negative; (ii) we have plugged
$\beta=1$ and $\delta=\frac{1}{k}$; and (iii) we have performed
simple algebraic manipulations, also making use of the inequality
$\log_2 (1+x)\leq (\log_2 e)x$:
\begin{eqnarray*}
  g_k(1,1/k) & \geq &
  \left(1-2\frac{1}{k}\right)\log_2(k-1)-\left(1-\frac{1}{k}\right)\log_2 k +
  2\frac{1}{k}\log_2 k\\
  &=& \log_2\frac{k-1}{k}
  +\frac{2}{k}\log_2\frac{k}{k-1}+\frac{1}{k}\log_2{k}\\
  &\geq& -(\log_2 e)\left(1-\frac{2}{k}\right)
  \frac{1}{k}+\frac{1}{k}\log_2{k}\\
  &=& \frac{1}{k}\left[\log_2{k}-(\log_2 e)\left(1-\frac{2}{k}\right)
  \right].
\end{eqnarray*}
For $k=2$, the expression within the square bracket evaluates to $1$.
For $k \geq 3$, it is easy to see that the square bracket is
positive, as $\log_2{k}-(\log_2 e)\left(1-\frac{2}{k}\right)>\log_2 3
- \log_2 e > 0$. In conclusion, for every $k \geq 2$, we have
$g_k(1,\frac{1}{k})>0$, as claimed.

Finally, given that $G_k(\beta)$ is increasing with $\beta$, its
unique zero is the supremum $\beta_k^*$ of $A_k$
(cf. Equation~(\ref{eq:A:G})).

\end{proof}

We can now present a simple numerical algorithm to approximate
$\beta_k^*$, from below, with any desired accuracy.
\begin{proposition}\label{prop:beta*::bisection}
(Computability of $\beta_k^*$, in the finite precision model.)
There is a procedure (outlined in the proof) which, on inputs $k
\geq 2$ and $\epsilon>0$, outputs a number $\bar{\beta}_k^* \in
[\beta_k^*-\epsilon,\beta_k^*]$.  Clearly, $\bar{\beta}_k^*
\leq \alpha_k$.
\end{proposition}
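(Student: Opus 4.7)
The plan is to perform a trisection-based search on $[0,1]$ for the unique zero $\beta_k^*$ of $G_k$, driven by the partial procedure for $\sign(G_k(\beta))$ from Proposition~\ref{prop:sign:G}. Two facts from Proposition~\ref{prop:G-monotone} underpin the approach: $G_k$ is strictly increasing on $[0,1]$, and $G_k(0) < 0$ while $G_k(1) > 0$, so $\beta_k^*$ is the unique point of $[0,1]$ at which $G_k$ vanishes.

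First, I would fix as invariant that at the start of iteration $i$ the algorithm holds a rational interval $[a_i, b_i]$ with $G_k(a_i) < 0$ and $G_k(b_i) > 0$; monotonicity then forces $a_i < \beta_k^* < b_i$. I would initialize $[a_0, b_0] = [0, 1]$, with the required endpoint signs provided explicitly by the proof of Proposition~\ref{prop:G-monotone}. At each iteration, I would compute the two rational trisection points $c' = (2a_{i-1}+b_{i-1})/3$ and $c'' = (a_{i-1}+2b_{i-1})/3$ and interleave two invocations of the partial $\sign$ procedure of Proposition~\ref{prop:sign:G}, one on each. Because $\beta_k^*$ is the only zero of $G_k$, at most one of $c', c''$ can satisfy $G_k(\cdot) = 0$, so the interleaved execution is guaranteed to return $\sign(G_k(c))$ for at least one $c \in \{c', c''\}$. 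Depending on that sign I would set $[a_i, b_i] = [c, b_{i-1}]$ (if the sign is $-$) or $[a_i, b_i] = [a_{i-1}, c]$ (if the sign is $+$); the invariant is preserved and $b_i - a_i \leq (2/3)(b_{i-1} - a_{i-1})$.

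After at most $\lceil \log_{3/2}(1/\epsilon) \rceil$ iterations the width of $[a_i, b_i]$ drops below $\epsilon$, at which point the algorithm outputs $\bar{\beta}_k^* := a_i$. By the invariant, $a_i < \beta_k^* < a_i + \epsilon$, so $\bar{\beta}_k^* \in [\beta_k^* - \epsilon, \beta_k^*]$ as required; and since $G_k(a_i) < 0$ puts $a_i$ in the set $A_k$ of Theorem~\ref{thm:bound:eval}, the additional claim $\bar{\beta}_k^* = a_i \leq \alpha_k$ follows.

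The hard part is conceptual rather than computational: the procedure of Proposition~\ref{prop:sign:G} is only partial, failing to halt precisely when its input equals $\beta_k^*$, so naive bisection risks getting stuck if a split point happens to coincide with $\beta_k^*$. This is the very obstacle already handled in Section~\ref{sec:sign:G}, and the same remedy carries over unchanged, namely trisect-and-interleave, exploiting the uniqueness of the zero of $G_k$ to guarantee that at most one of the two candidate split points can be problematic. Beyond that, the steps are routine; the only additional care is to ensure that all trisection points remain rational, which is automatic once the endpoints are.
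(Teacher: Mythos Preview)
Your proposal is correct and follows essentially the same approach as the paper's own proof: a trisection of $[0,1]$ driven by the partial $\sign(G_k(\beta))$ procedure of Proposition~\ref{prop:sign:G}, justified by the monotonicity and endpoint signs from Proposition~\ref{prop:G-monotone}, terminating after $\lceil \log_{3/2}(1/\epsilon)\rceil$ iterations and outputting the left endpoint. In fact your write-up is more explicit than the paper's, which merely sketches the procedure in a single sentence.
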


\begin{proof}
  In light of Proposition~\ref{prop:G-monotone}, it is straightforward
  to develop a trisection procedure that starts with the
  interval $[0,1]$, is driven by the sign of $G_k(\beta)$ computed
  according to Proposition~\ref{prop:sign:G}, stops after
  $\lceil \log_{3/2} \frac{1}{\epsilon} \rceil$ iterations, and outputs
  the left endpoint of the current interval containing $\beta_k^*$,
  which has size smaller than $(3/2)^{- \lceil \log_{3/2}
    \frac{1}{\epsilon}\rceil} \leq \epsilon$.
\end{proof}

\begin{table*}[t]
  \centering
    \caption{Lower bound $\bar{\beta}_k^{*}$ to $\alpha_k$, for
      various values of $k$. With reference to
      Proposition~\ref{prop:beta*::bisection}, $\epsilon = 10^{-8}$,
      hence the five digits to the right of the decimal point are
      guaranteed to be the same as those of $\beta_k^{*}$. The simple
      $1-1/k$ upper bound to $\alpha_k$ shows how, for large $k$,
      $\beta_k^*$, is quite close to $\alpha_k$.}
    \label{tab:lower:bounds:k}

    \vspace*{3mm}
    \begin{tabular}{ccc}
    \hline\\[-3mm]
    $k$  & $\bar{\beta}_k^{*}$ & $1-1/k$\\
    \\[-3mm]
    \hline
    \\[-3mm]
    $ 2$ & $0.17055$ & $0.50000$\\
    $ 3$ & $0.28366$ & $0.66667$\\
    $ 4$ & $0.35978$ & $0.75000$\\
    $ 5$ & $0.41517$ & $0.80000$\\
    $ 6$ & $0.45776$ & $0.83333$\\
    $ 7$ & $0.49183$ & $0.85714$\\
    $ 8$ & $0.51990$ & $0.87500$\\
    $16$ & $0.64475$ & $0.93750$\\
    $32$ & $0.73867$ & $0.96875$\\
    $2^{10}$ & $0.94359$ & $0.99902$\\
    $2^{20}$ & $0.99686$ & $0.99999$\\
    $2^{30}$ & $0.99978$ & $0.99999$\\
    $2^{40}$ & $0.99998$ & $0.99999$\\
    \hline
    \end{tabular}
\end{table*}
Table~\ref{tab:lower:bounds:k} reports the lower
bound $\bar{\beta}_k^*$ (approximating $\beta_k^*$ from below), for a set of values
of $k$.  The program we have used is based on a direct
implementation of the bisection version of the relevant
procedures. For the level of accuracy of the reported results,
the precision of standard floating point arithmetic turns out to
be sufficient. The execution is rather fast: even the result for
$k=2^{40}$ took about $15$ milliseconds of core-time to compute,
even using a non-optimized, straightforward implementation
of the bisection algorithm.
As a term of comparison, the table also shows the value of the
simple upper bound $\alpha_k \leq 1-\frac{1}{k}$ (from Hamming distance). Considering
that, from Proposition~\ref{prop:growing-k}, $\lim_{k \rightarrow
  \infty} (\alpha_k-\beta_k) = 0$, we see how, as $k$ increases,
$\beta_k^{*}$ provides an increasingly better approximation (from
below) to $\alpha_k$.  To guarantee the same level of approximation
via the lower bound $\alpha_k \geq \alpha_k(n)-Q(n)$, increasing
values of $n$ are required, as $k$ increases. Eventually, computing
$\beta_k^{*}$ becomes less expensive than estimating $\alpha_k(n)$.

\section{A conjecture on the asymptotic behavior of $\alpha_k$}
\label{sec:large:k:conjecture}

In the context of the LCS problem, \cite{KLM05} have proven the conjecture, proposed by \cite{SK83}, that $\lim_{k \rightarrow \infty}
  \gamma_k \sqrt{k} = 2$. As a corollary, we have that, for large $k$,
  $1-\alpha_k \geq \gamma_k \approx \frac{2}{\sqrt{k}}$.  Here, we
  propose, as a conjecture to be explored, that $\lim_{k \rightarrow
    \infty} (1-\alpha_k)k= c_{\alpha}$ for some constant $c_{\alpha}
  \geq 1$.  The bound $c_{\alpha} \geq 1$ follows from $\alpha_k \leq
  1-\frac{1}{k}$. We essentially conjecture that the limit is
  finite. Our intuition is that, although for large $n$ and $k$ we can
  expect an LCS of length approximately $\frac{2n}{\sqrt{k}}$, it
  would be too costly, in terms of insertions and deletions, to align
  more than $\bigO(\frac{n}{k})$ matches.

We tentatively investigate the conjecture numerically, by
  considering Monte Carlo estimates of the quantity
  $c_{\alpha,k}(n)=(1-\alpha_{k}(n))k$. Unfortunately, obtaining such
  estimates with the required precision presents some challenges.

Since we expect values of $c_{\alpha,k}(n)$ not much larger than
$1$, we need the error on $(1-\alpha_{k}(n))$, hence on
$\alpha_k(n)$, to be a fraction of $1/k$, a constraint that becomes
increasingly stringent as $k$ increases. In order to guarantee a
sufficient upper bound on the error, based on
Proposition~\ref{prop:aplha:k:n:confidence} and its corollary
Equation~(\ref{eq:alpha(n):confidence:radius}), the necessary values
of $n$ and $N$ quickly become prohibitive, as $k$
increases. However, we suspect that these errors bounds do become
rather loose, for large $k$.  More specifically, we hypothesize
that, for $x,y \in \Sigma_k^n$ with large $k$, the standard
deviation of $d_E(x,y)/n$ can be approximated as
$\sqrt{\frac{1}{kn}}$. This is based on the intuition that, for
large $k$, the edit distance behaves similarly to the Hamming
distance $d_H(x,y)$, whose standard deviation can be easily
determined to be $\sqrt{n\frac{1}{k}\left(1-\frac{1}{k}\right)}
\approx \sqrt{\frac{n}{k}}$.  We have tested this hypothesis by
experimentally estimating the standard deviation of $d_E(x,y)/n$,
for $n=2^{18}$ and $k=2^{14}$, based on $N=80$
independent pairs of random strings.  The estimated value turned
out to be $0.14~10^{-4}$, well in line with our assumption that
the standard deviation is approximately
$\sqrt{\frac{1}{kn}}=\sqrt{\frac{1}{2^{14}2^{18}}}= 2^{-16} \approx
0.15~10^{-4}$.

Table~\ref{tab:c:k:n} shows estimates of $c_{\alpha,k}(n)$ for
$n=2^{17}, 2^{18}, 2^{19}, 2^{20}$ and
$k=2^7,2^8,\ldots,2^{20}$. Highlighted in bold face are the
entries for which the hypothesized statistical error $k
\sqrt{\frac{1}{kn}}=\sqrt{\frac{k}{n}}$ on $c_{\alpha,k}(n)$ is at
most $\frac{1}{4}$, that is, $k \leq \frac{n}{16}$. We can observe that,
where the hypothesized error is small enough, $3 \leq c_{\alpha,k}(n)
\leq 4$.

Of course, how well $c_{\alpha,k}(n)$ approximates $c_{\alpha,k}$
depends on how well $\alpha_k(n)$ approximates $\alpha_k$.  The
quality of the latter approximation increases for large $k$, where the
the bound provided by Theorem~\ref{thm:computability:alpha} becomes
loose. In fact, putting together various results, we have that
\begin{equation}\label{eq:rate:convergence:k}
1 - \frac{M}{\log_2 (k-1)} \leq \alpha_k \leq \alpha_k(n) \leq 1 - \frac{1}{k}.
\end{equation}
We can see that the difference between the last and the first term,
hence the difference $\alpha_k(n)-\alpha_k$ between the intermediate
terms, vanishes when $k$ diverges.  The quantitative impact on the
difference $c_{\alpha,k}-c_{\alpha,k}(n)$ remains to be seen, but the
relative stability of the bold entries in each column of
Table~\ref{tab:c:k:n} seems compatible with assuming a small impact.
Of course, this section remains in the realm of conjectures, which
will hopefully provide some motivation for rigorous analysis that may
confirm or refute them.

 \begin{table*}[!ht]
   \centering
   \caption{Estimate of $c_{\alpha,k}(n)=(1-\alpha_k(n))k$ based
     on a single random pair ($N=1$). In bold face are shown the
     entries of the table where the (hypothesized) standard deviation
     of the error satisfies $\sqrt{k/n} \leq 1/4$.}
     \label{tab:c:k:n}
     \begin{tabular}{ccccc}
       \hline & \multicolumn{4}{c}{$n$}\\
       $k$ & $2^{17}$ & $2^{18}$ & $2^{19}$ & $2^{20}$ \\
       \hline
        \\[-3mm]
       $2^7$ & $\mathbf{3.553}$ & $\mathbf{3.552}$ & $\mathbf{3.566}$
       & $\mathbf{3.581}$ \\
       $2^8$ & $\mathbf{3.617}$ & $\mathbf{3.608}$ & $\mathbf{3.629}$
       & $\mathbf{3.635}$ \\
       $2^9$ & $\mathbf{3.531}$ & $\mathbf{3.656}$ & $\mathbf{3.658}$
       & $\mathbf{3.678}$ \\
       $2^{10}$ & $\mathbf{3.570}$ & $\mathbf{3.625}$ & $\mathbf{3.654}$
       & $\mathbf{3.669}$ \\
       $2^{11}$ & $\mathbf{3.406}$ & $\mathbf{3.516}$ & $\mathbf{3.668}$
       & $\mathbf{3.652}$ \\
       $2^{12}$ & $\mathbf{3.250}$ & $\mathbf{3.656}$ & $\mathbf{3.625}$
       & $\mathbf{3.672}$ \\
       $2^{13}$ & $\mathbf{3.000}$ & $\mathbf{3.500}$ &
       $\mathbf{3.500}$ & $\mathbf{3.703}$ \\
       $2^{14}$ & $3.125$ &  $\mathbf{3.250}$ & $\mathbf{3.563}$ &
       $\mathbf{3.672}$ \\
       $2^{15}$ & $3.000$ &
       $3.500$ & $\mathbf{3.750}$ & $\mathbf{3.313}$ \\
       $2^{16}$ & $3.500$ & $2.750$ & $3.250$ & $\mathbf{3.125}$ \\
       $2^{17}$ & $3.000$ & $2.000$ & $3.000$ & $3.000$ \\
       $2^{18}$ & $4.000$ & $4.000$ & $2.000$ & $2.500$ \\
       $2^{19}$ & $0.000$ & $2.000$ & $2.000$ & $1.500$ \\
       $2^{20}$ & $0.000$ & $0.000$ & $0.000$ & $0.000$\\
       \hline
     \end{tabular}
 \end{table*}

\section{Conclusions and further questions}
\label{sec:conclusion}
In this paper, we have explored ways to compute Monte Carlo
estimates, upper bounds, and lower bounds to the asymptotic constant
characterizing the expected edit distance between random, independent
strings. We have presented the theoretical basis for various
approaches and used them to obtain numerical results for some
alphabet sizes $k$, which improve over previously known values
\cite{GMR16}. However, there is still a significant gap between upper
and lower bounds that can be actually computed in a reasonable
time.  Below, we outline a number of open questions worthy
of further investigation.

The approaches proposed here can be extended to the study of other
statistical properties of the edit distance for a given length $n$,
\emph{e.g.}, the standard deviation, which has been widely studied in
the context of the longest common subsequence. Ultimately, a
characterization of the full distribution would be desirable.

The exact rate of convergence $q_k(n)=\alpha_k(n)-\alpha_k$, or even
just its asymptotic behavior, remains to be determined. In particular,
we are not aware of any significant lower bound to $q_k(n)$ to be
compared with the upper bound $Q(n)$. Moreover, this upper bound is 
oblivious to $k$, whereas the rate of convergence is affected by $k$, 
as indicated by Equation~(\ref{eq:rate:convergence:k}).

From Proposition~\ref{prop:growing-k} and a straightforward analysis
of ``Hamming scripts'', which use only matches and substitutions, we
know that $\frac{M}{\log_2 k} \leq \ 1-\alpha_k \leq \frac{1}{k}$,
(where $M \approx 2.52$); what is the exact asymptotic behavior of
$1-\alpha_k$? We have conjectured that, asymptotically,
$1-\alpha_k$ approaches $c_{\alpha}/k$, for some constant
$c_{\alpha} \geq 1$; but whether the conjecture holds, and if so for
which value of $c_{\alpha}$, remains to be seen.

Of mathematical interest, is the question whether $\alpha_k$ is a rational 
or an algebraic number. The answer could depend upon $k$. The recent work of
\cite{Tis22} on the algebraic nature of $\gamma_2$ indicates that this line of 
investigation may lead to uncovering deep combinatorial properties
that can shed light on various aspects of the subject, well beyond mere mathematical  curiosity.

The analysis of the statistical error could be somewhat improved, if
the experimental evidence that the standard deviation of $d_E(x,y)$ is
$o(\sqrt{n})$ were corroborated analytically. The dependence upon $k$
could also play a role here.

The complexity of computing, say, the most significant $h$ bits of
$\alpha_k(n)$ remains a wide open question. The only known lower bound
is $\Omega(\log k + \log n +h)$, based on the input and output size.
It is a far cry from the current upper bound, which is doubly
exponential in $\log n$. However, a deeper understanding of the distribution
and symmetries of the edit distance is likely to be required before
the computation time of $\alpha_k(n)$ can be significantly improved.

The lower bounds presented in Section~\ref{sec:lowerbound} are based
on upper bounds to the number of optimal scripts of a given cost $r$.
Our counting argument could be refined to take into account some
properties of optimal edit scripts. For example, in an optimal script,
an insertion cannot immediately precede or follow a deletion (since
the same result could be achieved by just one
substitution). Furthermore, it would be easy to show that, for many
pairs of strings, multiple scripts are counted by our argument. Part of the
difficulty with improving script counting comes from the
analytical tractability of the resulting combinatorial expressions,
which would be far more complicated than
Equation~(\ref{eq:script:bound}).

Another potential weakness of our lower bound is that it is derived
from a lower bound to $\ecc(x)$ that must hold for every $x\in
\Sigma_k^n$.  If, as $n$ goes to infinity, the fraction of strings
with eccentricity significantly smaller than the average were to
remain sufficiently high, then the approach would be inherently
incapable of yielding tight bounds.  On the other hand, preliminary
efforts seem to indicate that characterizing the strings with minimum
eccentricity is not straightforward. In contrast, it is a relatively
simple exercise to prove that the strings of maximum eccentricity are
those where all positions contain the same symbol and that their
eccentricity equals $(1-\frac{1}{k})n$.

In terms of applications, it would be interesting to explore the role
of statistical properties of the edit distance in string alignment and
other key problems in DNA processing and molecular biology.  One motivation is provided by
the error profile of reads coming from third generation sequencers
(\emph{e.g.}, PacBio), where sequencing errors can be modeled as edit
operations. In this context, it would be important to generalize the
analysis to non-uniform string distributions, whether defined
analytically or from empirical data, such as the distribution of
substrings from the human DNA. As briefly discussed in Sections~\ref{sec:montecarlo} and~\ref{sec:upperbound}, the approach we have presented for Monte Carlo estimates and for exact estimates via the Coalesced Dynamic Programming algorithm can easily handle arbitrary symbol distributions, as long as the symbols are statistically independent. The extension of the lower bounds of Section~\ref{sec:lowerbound} appears less straightforward.
Also of great interest would be to analyze the expected
distance between noisy copies of two independent strings, as well as
between a string and a noisy copy of itself, when the noise can be
modeled in terms of edit operations.




\paragraph{Author contribution}
Both authors contributed to the writing of and reviewed all sections of the manuscript.

\paragraph{Funding}
This work was supported in part by the European Union through MUR ``National Center for HPC, Big Data, and Quantum Computing'', CN00000013 (PNRR);
by MUR, the Italian Ministry of University and Research, under PRIN Project n. 2022TS4Y3N - EXPAND: scalable algorithms for EXPloratory Analyses of heterogeneous and dynamic Networked Data;
by MIUR, the Italian Ministry of Education, University and Research, under PRIN Project n. 20174LF3T8 - AHeAD: Efficient Algorithms for HArnessing Networked
Data;
by the University of Padova, Project CPGA$^3$
Parallel and Hierarchical Computing: Architectures, Algorithms,
and Applications;
and by IBM under a Shared University Research Award.

\bibliographystyle{alpha}
\bibliography{biblio}

\newcommand{\etalchar}[1]{$^{#1}$}
\begin{thebibliography}{WdCW{\etalchar{+}}17}

\bibitem[ABW15]{ABW15}
Amir Abboud, Arturs Backurs, and Virginia~Vassilevska Williams.
\newblock Tight hardness results for lcs and other sequence similarity measures.
\newblock In {\em 2015 IEEE 56th Annual Symposium on Foundations of Computer Science}, pages 59--78, 2015.

\bibitem[AKO10]{AKO10}
Alexandr Andoni, Robert Krauthgamer, and Krzysztof Onak.
\newblock Polylogarithmic approximation for edit distance and the asymmetric query complexity.
\newblock In {\em 2010 IEEE 51st Annual Symposium on Foundations of Computer Science}, pages 377--386, 2010.

\bibitem[Ale94]{Ale94}
Kenneth~S. Alexander.
\newblock The rate of convergence of the mean length of the longest common subsequence.
\newblock {\em The Annals of Applied Probability}, 4(4):1074--1082, 1994.

\bibitem[BBD21]{BBD21}
Holger Boche, Yannik Böck, and Christian Deppe.
\newblock On the effectiveness of fekete’s lemma in information theory.
\newblock In {\em 2020 IEEE Information Theory Workshop (ITW)}, pages 1--5, 2021.

\bibitem[BC22]{BC22}
Boris Bukh and Christopher Cox.
\newblock {Periodic words, common subsequences and frogs}.
\newblock {\em The Annals of Applied Probability}, 32(2):1295 -- 1332, 2022.

\bibitem[BI15]{BI15}
Arturs Backurs and Piotr Indyk.
\newblock Edit distance cannot be computed in strongly subquadratic time (unless seth is false).
\newblock In {\em Proceedings of the Forty-Seventh Annual ACM Symposium on Theory of Computing}, STOC '15, pages 51--58. Association for Computing Machinery, 2015.

\bibitem[BPRS21]{BPRS21}
Vinnu Bhardwaj, Pavel~A. Pevzner, Cyrus Rashtchian, and Yana Safonova.
\newblock Trace reconstruction problems in computational biology.
\newblock {\em IEEE Transactions on Information Theory}, 67(6):3295--3314, 2021.

\bibitem[Bun01]{Bun01}
Ralf Bundschuh.
\newblock High precision simulations of the longest common subsequence problem.
\newblock {\em The European Physical Journal B - Condensed Matter and Complex Systems}, 22(4):533--541, 2001.

\bibitem[BYGNS99]{BGNS99}
Ricardo~A Baeza-Yates, Ricard Gavald{\'a}, Gonzalo Navarro, and Rodrigo Scheihing.
\newblock Bounding the expected length of longest common subsequences and forests.
\newblock {\em Theory of Computing Systems}, 32(4):435--452, 1999.

\bibitem[CDG{\etalchar{+}}18]{CDG+18}
Diptarka Chakraborty, Debarati Das, Elazar Goldenberg, Michal Koucky, and Michael Saks.
\newblock Approximating edit distance within constant factor in truly sub-quadratic time.
\newblock In {\em 2018 IEEE 59th Annual Symposium on Foundations of Computer Science}, pages 979--990, 2018.

\bibitem[CDG{\etalchar{+}}20]{CDG+20}
Diptarka Chakraborty, Debarati Das, Elazar Goldenberg, Michal Kouck{\`y}, and Michael Saks.
\newblock Approximating edit distance within constant factor in truly sub-quadratic time.
\newblock {\em Journal of the ACM}, 67(6):1--22, 2020.

\bibitem[CS75]{CS75}
Vacl{\'a}v Chv{\'a}tal and David Sankoff.
\newblock Longest common subsequences of two random sequences.
\newblock {\em Journal of Applied Probability}, 12(2):306--315, 1975.

\bibitem[CS14]{CS14}
Matteo Comin and Michele Schimd.
\newblock Assembly-free genome comparison based on next-generation sequencing reads and variable length patterns.
\newblock {\em BMC bioinformatics}, 15(9):1--10, 2014.

\bibitem[CZOdlH17]{COH17}
Jorge Calvo-Zaragoza, Jose Oncina, and Colin de~la Higuera.
\newblock Computing the expected edit distance from a string to a probabilistic finite-state automaton.
\newblock {\em International Journal of Foundations of Computer Science}, 28(05):603--621, 2017.

\bibitem[Dan94]{Dan94}
Vladim{\'\i}r Danc{\'\i}k.
\newblock {\em Expected length of longest common subsequences}.
\newblock PhD thesis, University of Warwick, 1994.

\bibitem[GMR16]{GMR16}
Shirshendu Ganguly, Elchanan Mossel, and Miklos~Z Racz.
\newblock Sequence assembly from corrupted shotgun reads.
\newblock {\em arXiv preprint arXiv:1601.07086}, 2016.

\bibitem[Hir75]{Hir75}
D.~S. Hirschberg.
\newblock A linear space algorithm for computing maximal common subsequences.
\newblock {\em Commun. ACM}, 18(6):341–343, June 1975.

\bibitem[KLJ05]{KLM05}
Marcos Kiwi, Martin Loebl, and Mato\v{u}sek Ji\v{r}\'{i}.
\newblock Expected length of the longest common subsequence for large alphabets.
\newblock {\em Advances in Mathematics}, 18(4):480--498, 2005.

\bibitem[KR08]{KR08}
Vamsi Kundeti and Sanguthevar Rajasekaran.
\newblock Extending the four russian algorithm to compute the edit script in linear space.
\newblock In Marian Bubak, Geert~Dick van Albada, Jack Dongarra, and Peter M.~A. Sloot, editors, {\em International Conference on Computational Science}, pages 893--902, Berlin, Heidelberg, 2008. Springer Berlin Heidelberg.

\bibitem[LMS98]{LMS98}
Gad~M. Landau, Eugene~W. Myers, and Jeanette~P. Schmidt.
\newblock Incremental string comparison.
\newblock {\em SIAM Journal on Computing}, 27(2):557--582, 1998.

\bibitem[LMT12]{LMT12}
J{\"u}ri Lember, Heinrich Matzinger, and Felipe Torres.
\newblock The rate of the convergence of the mean score in random sequence comparison.
\newblock {\em The Annals of Applied Probability}, 22(3):1046--1058, 2012.

\bibitem[Lue09]{Lue09}
George~S. Lueker.
\newblock Improved bounds on the average length of longest common subsequences.
\newblock {\em Journal of the ACM}, 56(3):17:1--17:38, 2009.

\bibitem[McD89]{McD89}
Colin McDiarmid.
\newblock On the method of bounded differences.
\newblock {\em Surveys in combinatorics}, 141(1):148--188, 1989.

\bibitem[ML73]{ML73}
J.~C. {Majithia} and D.~{Levan}.
\newblock A note on base-2 logarithm computations.
\newblock {\em Proceedings of the IEEE}, 61(10):1519--1520, 1973.

\bibitem[MP80]{MP80}
William~J. Masek and Michael~S. Paterson.
\newblock A faster algorithm computing string edit distances.
\newblock {\em Journal of Computer and System Sciences}, 20(1):18--31, 1980.

\bibitem[Mye99]{Mye99}
Gene Myers.
\newblock A fast bit-vector algorithm for approximate string matching based on dynamic programming.
\newblock {\em Journal of the ACM}, 46(3):395–--415, May 1999.

\bibitem[NC13]{KK13}
Kang Ning and Kwok~Pui Choi.
\newblock Systematic assessment of the expected length, variance and distribution of longest common subsequences.
\newblock {\em arXiv preprint arXiv:1306.4253}, 2013.

\bibitem[RS20]{RS19}
Aviad Rubinstein and Zhao Song.
\newblock Reducing approximate longest common subsequence to approximate edit distance.
\newblock In {\em Proceedings of the 2020 ACM-SIAM Symposium on Discrete Algorithms (SODA)}, pages 1591--1600. SIAM, 2020.

\bibitem[Rub18]{Rub18}
Aviad Rubinstein.
\newblock Hardness of approximate nearest neighbor search.
\newblock In {\em Proceedings of the 50th Annual ACM SIGACT Symposium on Theory of Computing}, STOC 2018, pages 1260--1268. Association for Computing Machinery, 2018.

\bibitem[SB19]{SB19}
Michele Schimd and Gianfranco Bilardi.
\newblock Bounds and estimates on the average edit distance.
\newblock In Nieves~R. Brisaboa and Simon~J. Puglisi, editors, {\em String Processing and Information Retrieval}, pages 91--106, Cham, 2019. Springer International Publishing.

\bibitem[SK83]{SK83}
David Sankoff and Joseph Kruskal.
\newblock Common subsequences and monotone subsequences.
\newblock In {\em Time Warps, String Edits, and Macromolecules: The Theory and Practice of Sequence Comparison}, pages 363--365, Reading, Massachusetts, 1983. Addison–Wesley.

\bibitem[Spe14]{Spe14}
Joel Spencer.
\newblock {\em Asymptopia}.
\newblock American Mathematical Society, 2014.

\bibitem[Ste97]{Ste97}
J~Michael Steele.
\newblock {\em Probability theory and combinatorial optimization}.
\newblock SIAM, 1997.

\bibitem[Tis22]{Tis22}
Alexander Tiskin.
\newblock The chv\'atal--sankoff problem: Understanding random string comparison through stochastic processes.
\newblock In A.M.Vershik and N.N.Vassiliev, editors, {\em Representation Theory, Dynamical Systems, Combinatorial Methods, Part XXXIV. Zapiski Nauchnykh Seminarov POMI}, pages 191--224. Springer, 2022.
\newblock To Appear.

\bibitem[Ukk85]{Ukk85}
Esko Ukkonen.
\newblock Algorithms for approximate string matching.
\newblock {\em Information and Control}, 64(1):100--118, 1985.

\bibitem[WdCW{\etalchar{+}}17]{WCW+17}
Jason~L Weirather, Mariateresa de~Cesare, Yunhao Wang, Paolo Piazza, Vittorio Sebastiano, Xiu-Jie Wang, David Buck, and Kin~Fai Au.
\newblock Comprehensive comparison of pacific biosciences and oxford nanopore technologies and their applications to transcriptome analysis.
\newblock {\em F1000Research}, 6, 2017.

\end{thebibliography}
\end{document}